\documentclass[aps,prx,twocolumn,superscriptaddress,showpacs,longbibliography]{revtex4-1}
\usepackage{float}
\usepackage{graphicx}  % needed for figures
\usepackage{svg}
\usepackage{color}
\usepackage{bm}        % for math
\usepackage{braket}
\usepackage{amssymb}   % for math
\usepackage{epstopdf}
\usepackage{xfrac}
\usepackage{cancel}
\usepackage{soul}
\usepackage{amsmath}
\usepackage{amsthm}
\usepackage{amsfonts}
\usepackage{dsfont}
\usepackage{bbm}
% configure hyperref to remove ugly boxes from links
\definecolor{darkblue}{rgb}{0.1,0.2,0.6}
\definecolor{darkred}{rgb}{0.8,0.1,0.2}
\definecolor{darkgreen}{rgb}{0.31,0.62,0.24}
\definecolor{OliveGreen}{cmyk}{0.64, 0, 0.95, 0.40}
\usepackage[colorlinks,citecolor=darkblue,linkcolor=darkblue,urlcolor=darkblue]{hyperref} 
\usepackage{enumerate}
\usepackage{setspace}
\usepackage{url}  % This makes \url work
\usepackage{mathrsfs}
\usepackage{mathtools} %for the dcases environment

\usepackage{manfnt} %for the cube symbol
\newcommand*\cube{\mbox{\mancube}}

\usepackage[percent]{overpic}
\usepackage{verbatim}

\usepackage[most]{tcolorbox}
\usepackage[symbol]{footmisc}

\newcommand{\Z}{\mathbb{Z}}

\newcommand{\Tr}{\text{Tr}}
\newcommand{\tr}{\text{Tr}}

\newcommand{\tcO}{\tilde{\cal O}}
\newcommand{\cO}{{\cal O}}

\newcommand{\bp}{\boldsymbol{p}}
\newcommand{\btp}{\boldsymbol{\tilde{p}}}
\def\i{\textbf{i}}

\usepackage{algorithmicx}
\usepackage{algorithm}
\usepackage[noend]{algpseudocode}

\newcounter{protocol}
\makeatletter

\newtheorem{theorem}{Theorem}
\newtheorem{corollary}{Corollary}
\newtheorem{lemma}{Lemma}
\newcommand{\SM}[1]{{\rm SM}_{#1}}

\newcommand{\tK}{\tilde{K}}
\newcommand{\ttau}{\tilde{\tau}}
\newcommand{\cV}{{\cal V}}
\newcommand{\cVV}{{\cal V}\oplus{\cal V}}
\newcommand{\vx}{{\cal V}_x}
\newcommand{\vz}{{\cal V}_z}
\newcommand{\vs}{{\cal V}_s}
\newcommand{\vsdp}{\left({\cal V}_s^*\right)^\perp}
\newcommand{\rb}{\rho_{\rm b}}
\newcommand{\rp}{\rho_{\rm p}}

\newcommand{\kket}[1]{|#1\rangle\rangle}
\newcommand{\bbra}[1]{\langle\langle #1|}
\newcommand{\sfa}{\mathsf{a}}
\newcommand{\sfb}{\mathsf{b}}
\newcommand{\sfe}{\mathsf{e}}
\newcommand{\sff}{\mathsf{f}}
\newcommand{\lx}{\lambda_x}
\newcommand{\ly}{\lambda_y}
\newcommand{\lz}{\lambda_z}
\newcommand{\tlx}{\tilde{\lambda}_x}
\newcommand{\tly}{\tilde{\lambda}_y}
\newcommand{\tlz}{\tilde{\lambda}_z}
\newcommand{\ninfeq}{\stackrel{N\to \infty}{=\joinrel=}}

\tcbset{
    frame code={}
    center title,
    left=0pt,
    right=0pt,
    top=0pt,
    bottom=0pt,
    colback=gray!20,
    colframe=white,
    width=0.5\dimexpr\textwidth\relax,
    enlarge left by=0mm,
    boxsep=5pt,
    arc=0pt,outer arc=0pt,
    }

% section-equation numbering scheme
% \numberwithin{equation}{section}
% \renewcommand\theequation{\arabic{section}.\arabic{equation}}

\begin{document}

\title{Tapestry of dualities in decohered quantum error correction codes}

\author{Kaixiang Su}
\thanks{These two authors contributed equally.}
\affiliation{Department of Physics, University of California,
Santa Barbara, CA 93106}

\author{Zhou Yang}
\thanks{These two authors contributed equally.}
\affiliation{Department of Physics, Cornell University, Ithaca, New York 14853, USA}

\author{Chao-Ming Jian}
\affiliation{Department of Physics, Cornell University, Ithaca, New York 14853, USA}

\date{\today}

\begin{abstract}
Quantum error correction (QEC) codes protect quantum information from errors due to decoherence. Many of them also serve as prototypical models for exotic topological quantum matters. Investigating the behavior of the QEC codes under decoherence sheds light on not only the codes' robustness against errors but also new out-of-equilibrium quantum phases driven by decoherence. The phase transitions, including the error threshold, of the decohered QEC codes can be probed by the systems' R\'enyi entropies $S_R$ with different R\'enyi indices $R$.
In this paper, we study the general construction of the statistical models that characterize the R\'enyi entropies of QEC codes decohered by Pauli noise. We show that these statistical models can be organized into a ``tapestry" woven by rich duality relations among them. For Calderbank-Shor-Steane (CSS) codes with bit-flip and phase-flip errors, we show that each R\'enyi entropy is captured by a pair of dual statistical models with randomness. For $R=2,3,\infty$, there are additional dualities that map between the two error types, relating the critical bit-flip and phase-flip error rates of the decoherence-induced phase transitions in the CSS codes. For CSS codes with an ``$em$ symmetry" between the $X$-type and the $Z$-type stabilizers, the dualities with $R=2,3,\infty$ become self-dualities with {\it super-universal} self-dual error rates. These self-dualities strongly constrain the phase transitions of the code signaled by $S_{R=2,3,\infty}$. For general stabilizer codes decohered by generic Pauli noise, we also construct the statistical models that characterize the systems' entropies and obtain general duality relations between Pauli noise with different error rates. 
\end{abstract}

\maketitle

%\tableofcontents

\section{Introduction}

Quantum error correction (QEC) is a fundamental scheme in quantum computation to protect quantum information from decoherence caused by errors and noise \cite{QEC_RMP2015,GottesmanReview2009}. A general QEC code achieves this protection by encoding the quantum information into the ``logical qubits" formed by many-body states of physical qubits with entanglement patterns resilient against errors. From the quantum matter perspective, a large class of QEC codes, epitomized by the toric code \cite{Kitaev2003}, can be viewed as systems with topological orders, where the long-range-entangled and topologically-protected ground states serve as the logical qubits. Recently, there has been tremendous progress in the experimental realizations of the toric code and related QEC codes on noisy intermediate-scale quantum platforms \cite{Satzinger_2021,GoogleAISurfaceCode,GoogleAInonAblebian,RydbergSL}. The preparation of a QEC code on a noisy quantum platform is, in general, an out-of-equilibrium process inevitably subject to decoherence. Instead of a Gibbs ensemble, the prepared state is an ``error-corrupted" mixed state, i.e., a classical mixture of different error patterns on top of the code's logical states. Understanding such error-corrupted mixed states brings insights into the robustness of QEC codes against errors and the effect of decoherence on topological orders.

An important metric for the performance of a QEC code is the error threshold, which pertains to the ``decodability" of the logical qubits from the error-corrupted mixed state. The seminal work Ref. \onlinecite{dennis2002topological} pointed out that the decodability of the 2D toric code with errors is captured by the 2D classical random-bond Ising model, and the error threshold is identified as a continuous phase transition. This method that maps the QEC code's decodability to classical statistical models has been widely generalized to 
other stabilizer codes (see Ref. \onlinecite{Matin-DelgadoColorCodes,Martin-Delgado_FractonThreshold,Preskill_3dToricCodeThreshold,BombinTopoResilience2012,Flammia2021} for examples).

Recent works Ref. \onlinecite{FanBaoTopoMemory,BaoFanError_Field_Double,LeeJianXu2023} provide a different perspective on the error threshold of the toric code by viewing it as a singularity intrinsic to the error-corrupted mixed state $\rho$. This singularity manifests a decoherence-induced phase transition (DIPT) of the system's total von Neumann entropy $S_1 = -\Tr \rho \log \rho$ as a function of the decoherence strength, i.e. the error rate. Moreover, the error threshold naturally belongs to the rich family of DIPTs of the R\'enyi entropies $S_R = \frac{1}{1-R}\log (\Tr \rho^R)$ (for $R=2,3,...$ and $R\rightarrow 1$) in the decohered toric code \cite{BaoFanError_Field_Double,FanBaoTopoMemory,LeeJianXu2023}. For a general QEC code under decoherence, the DIPTs of the R\'enyi entropies probe the singular changes in the entanglement pattern 
in the error-corrupted mixed state. In light of the close relation between QEC codes and topological orders in systems free of decoherence, studying the behavior of the error-corrupted mixed state in decohered QECs and the DIPTs therein provides an interesting path toward understanding topological orders and other decoherence-induced exotic phases in beyond-Gibbs-ensemble mixed states.

In this paper, we focus on developing a general framework to study the error-corrupted mixed states in stabilizer codes decohered by Pauli noise, including bit-flip and phase-flip errors in particular. We systematically construct the statistical models arising from the R\'enyi entropies $S_R$ for a general $R$ and identify a rich set of duality relations that weave these statistical models into a ``tapestry of dualities". Our discussion contains three major parts, focusing on three classes of translation-invariant stabilizer codes: (1) general Calderbank-Shor-Steane (CSS) codes, (2) CSS codes with an ``$em$ symmetry" between the $X$-type and the $Z$-type stabilizers, and (3) general stabilizer codes.

We first investigate general CSS codes decohered by bit-flip and phase-flip errors. The tapestry of dualities for this class of decohered codes is shown in Fig. \ref{fig:duality1}. The two error types cause 
independent decoherence in a CSS code and are, hence, studied separately. The behavior of the decohered CSS code is intimately related to a pair of dual statistical models, named SM$_1$ and SM$_2$, which are obtained from ungauging the CSS code. Under bit-flip errors, the R\'enyi entropy $S_R$ of the error-corrupted mixed state is described by a pair of dual statistical models with randomness: (1) 
$R$-replica SM$_1$ with real random couplings (rRC) and (2) $R$-replica SM$_2$ with imaginary random couplings (iRC). These two statistical models are dual under a high-low-temperature (HLT) duality. Similarly, under phase-flip errors, $S_R$ is described by another pair of random statistical models: (1) $R$-replica SM$_1$ with iRC and (2) $R$-replica SM$_2$ with rRC, which are again HLT dual to each other. The patterns of randomness in these statistical models are interchanged as we switch from bit-flip errors to phase-flip errors. Additionally, for $R=2,3,\infty$, we discover extra duality relations that map between (the random statistical models associated with) bit-flip and phase-flip errors. We call these dualities the bit-phase-decoherence (BPD) dualities. The random statistical models and the dualities provide powerful tools to study the phases and the DIPTs in the decohered CSS codes. We propose a conjecture on the monotonicity of the DIPTs' critical error rates as functions of the R\'enyi index $R$, and discuss the alternative interpretation of the DIPT with $R=2$ as a quantum phase transition in the doubled Hilbert space. We also discuss two concrete examples, the decohered 3D toric code \cite{kitaev1997quantum} and the decohered X-cube model \cite{Vijay_2016}, demonstrating the abovementioned general results.

When the CSS codes are endowed with additional symmetries, the duality structure under decoherence can be enriched. We analyze the case when the CSS codes have a symmetry that exchanges the $X$-type and $Z$-type stabilizers, which we dub ``electric-magnetic symmetry" ($em$ symmetry). Under the bit-flip and phase-flip errors, the tapestry of dualities of the $em$-symmetric CSS code is shown in Fig. \ref{fig:self-duality}, which is effectively the tapestry shown in Fig. \ref{fig:duality1} ``folded in half" by the symmetry. The $em$ symmetry interchanges the bit-flip and phase-flip errors. The original pair of dual statistical models SM$_1$ and SM$_2$ become the same model, call it SM, with a self-duality. Under either bit-flip or phase-flip errors, the R\'enyi entropies $S_R$ are captured by a pair of statistical models with randomness: (1) $R$-replica SM with rRC and (2) $R$-replica SM with iRC. They are related by the HLT duality for any $R$. The BPD dualities for $R=2,3,\infty$ (combined with the $em$ symmetry) become self-dualities with {\it super-universal} self-dual error rates shared by all $em$-symmetric CSS codes in different dimensions.  For a given $em$-symmetric CSS code (and a fixed $R=2,3$ or $\infty$), if there is a unique DIPT of $S_R$ as the error rate varies, the critical error rate must coincide with the super-universal self-dual values. We demonstrate our general results in the concrete settings of the decohered 2D toric code \cite{kitaev1997quantum}, and the decohered Haah's code in 3D \cite{haah2011local}. 

For general stabilizer codes, bit-flip and phase-flip errors are no longer special. We therefore consider the decoherence by general Pauli noise. The general decohered stabilizer codes are tied to a single statistical model SM, which is self-dual under an HLT duality. The R\'enyi entropy $S_R$ of the decohered code is captured by both (1) the $R$-replica SM with rRC and (2) the $R$-replica SM with iRC, two random statistical models related by an HLT duality. Moreover, we find extra dualities between different error rates for $R=2, \infty$ and call them ``general-Pauli-noise dualities" (GPN dualities). The GPN dualities are the generalizations of the BPD dualities of decohered CSS codes. We obtain the surface of super-universal self-dual error rates under the GPN dualities.

The rest of the paper is organized as follows. Sec. \ref{sec:CSS} focuses on the general CSS codes decohered by bit-flip and phase-flip errors. Sec. \ref{sec:em-symmetric CSS} discusses the decohered CSS codes with $em$ symmetry. Sec. \ref{sec:GPN} analyzes general stabilizer codes with general Pauli noise. We present our conclusion and outlook in Sec. \ref{sec:conclusions}.

\section{CSS code under decoherence}
\label{sec:CSS}
In this section, we will first review the basics of general CSS codes and introduce the decoherence model that describes bit-flip and phase-flip errors. Next, we provide the general construction of the statistical models, the $R$-replica $\SM1$ and $\SM2$ with randomness, that describe the R\'enyi entropies $S_R$ of the error-corrupted mixed states of the decohered CSS code. We will address how the $R\rightarrow 1$ limit recovers the disordered statistical model that captures the code's error threshold for decodability. Then, we present the HLT and BPD dualities between these statistical models and the tapestry of dualities they form. The DIPTs are signaled by the singular behavior of the R\'enyi entropies $S_R$ and, hence, are identified with the phase transitions in the random statistical models. For $R=2$, by generalizing Ref. \onlinecite{BaoFanError_Field_Double,LeeYouXu2022}, we give an alternative interpretation of the DIPT as a quantum phase transition in the doubled Hilbert space. We also propose a conjecture on the monotonicity of the DIPTs' critical error rates as a function of the R\'enyi index $R$. Finally, we study the decohered 3D toric code and the decohered X-cube model as concrete examples.

\subsection{Introduction to CSS codes and error models for decoherence}
\label{sec:CSScode_ErrorModel}
First, we briefly review the general description of a CSS code. Consider a set of $N$ qubits (labeled by $\mu$) arranged into a lattice. A general CSS code is a stabilizer code with two types of stabilizers: $X$-type and $Z$-type. Each $X$-type generator, denoted as $A_i[X]$, is a product of only Pauli-$X$ operators $X_\mu$, while each $Z$-type generator, denoted as $B_j[Z]$, is a product of only Pauli-$Z$ operators $Z_\mu$. All the stabilizers commute with each other:
\begin{align}
    \big[A_i[X],B_j[Z]\big] = 0,~~~~\forall~i,j.
    \label{eq:ABcommutation}
\end{align}
The logical subspace of the CSS code is the ground-state Hilbert space of the CSS code Hamiltonian
\begin{align}
    H_{\rm css} = - \sum_i A_i[X] - \sum_j B_j[Z].
    \label{eq:CSS_Ham}
\end{align}
Each logical state $|\Omega\rangle$ satisfies 
\begin{align}
    A_i[X]\ket{\Omega}= B_j[Z]\ket{\Omega} = \ket{\Omega}, \hspace{.5cm} \forall ~i,j.
\end{align}

For this work, we work with CSS codes with the following general properties. We assume that $H_{\rm css}$ is translation invariant (modulo boundary effects of the lattice if they exist) and all the terms in $H_{\rm css}$ are local. The number of $A_i[X]$ terms does not have to equal the number of $B_j[Z]$ terms. But we assume that there is no local logical operator and the rate of the code $N_{\rm c}/N$ vanishes in the large system limit. Here, $N_{\rm c}$ denotes the total number of logical qubits. These properties are rather general. They are broadly satisfied by commonly discussed topological CSS codes, including the toric code \cite{kitaev1997quantum}, the surface code \cite{bravyi1998quantum, dennis2002topological}, color codes \cite{bombin2006topological}, the $X$-cube model \cite{Vijay_2016}, Haah's code \cite{haah2011local}, etc.

Next, we introduce the error models for the decoherence that we will study in this section. We consider an error model with bit-flip errors and phase-flip errors. For a single site $\mu$, when a bit-flip (phase-flip) error occurs, the system's state is acted on by $X_\mu$ ($Z_\mu$). The classical probabilities $p_x$ and $p_z$ for the appearance of these errors on a single qubit are called the error rates. The decoherence caused by these two types of errors is described by the following quantum channels respectively:
\begin{align}
    & {\cal N}_{x,\mu} (\rho)= (1-p_x) \rho + p_x X_\mu \rho X_\mu, \nonumber \\ 
    & {\cal N}_{z,\mu} (\rho)= (1-p_z) \rho + p_z Z_\mu \rho Z_\mu.
\end{align}
When all the physical qubits are subject to such decoherence, a logical state $\rho_0 =|\Omega\rangle \langle \Omega |$, initially pure, becomes an error-corrupted mixed-state:
\begin{align}
    \rho_0 \rightarrow {\cal N}_x \circ {\cal N}_z (\rho_0),
\end{align} 
a mixture of the logical state dressed by all error patterns. Here, we've defined the total decoherence channels ${\cal N}_{x/z} \equiv \otimes_\mu {\cal N}_{x/z,\mu}$. It suffices to consider the range of error rates $0\leq p_x, p_z\leq \frac{1}{2}$ because the system experiences the maximal amount of bit-flip (phase-flip) decoherence when $p_{x} =1/2$ ($p_{z} =1/2$).

The bit-flip errors only affect the $Z$-type stabilizers, while the phase-flip errors only influence the $X$-type stabilizers. Hence, the two error types cause independent decoherence effects (more details in the next subsection). Without loss of generality, it suffices to independently study the two types of mixed states $\rho_{\rm b}$ and $\rho_{\rm p}$ corrupted only by the bit-flip and phase-flip errors respectively: 
\begin{align}
    \rho_{\rm b} = {\cal N}_x (\rho_0),~~\rho_{\rm p} = {\cal N}_z (\rho_0).
\end{align}
We can write these mixed states as a summation over error chains. For example, 
\begin{align}
     \rho_{\rm b} = \sum_{E} P_x(E) X_E\rho_0 X_E,
     \label{eq:rho_b_Expansion}
\end{align}
where $\sum_{E}$ is a summation over all possible error chain $E$. An error chain $E$ is the set of qubits where the errors occur and $X_E \equiv \prod_{\mu \in E} X_\mu$. $ P_x(E)$ is the probability for the error chain $E$ to appear: 
\begin{align}
    P_x(E) = p_x^{|E|} (1-p_x)^{N-|E|}. \label{eq:decohprobx}
\end{align}
Here, $|E|$ denotes the number of qubits in $E$. Similarly, we can write 
\begin{align}
     \rho_{\rm p} = \sum_{E} P_z(E) Z_E\rho_0 Z_E,
     \label{eq:rho_p_Expansion}
\end{align}
with $P_z(E) = p_z^{|E|} (1-p_z)^{N-|E|}$ and $Z_E \equiv \prod_{\mu \in E} Z_\mu$. In the next subsection, we develop the statistical models that capture the behavior of the error-corrupted mixed states $\rho_{\rm p/b}$. 

As a preparation for the subsequent analysis, it is helpful to introduce an $N$-dimensional ${\mathbb Z}_2$ vector space ${\cal V}$ for the lattice of $N$ qubits. Any subset $W$ of qubits can be represented as an $N$-component vector $W\in {\cal V}$ whose $\mu$th entry is 1 if the $\mu$th site belongs to the subset $W$ and 0 if not. Here and for the rest of the paper, we use the same notation for a subset of qubits and its corresponding vector. For a subset $W$, its cardinality $|W|$ counts the number of nonzero entries in the vector $W$. With two vectors $W_1, W_2\in {\cal V}$, $W_1 + W_2$ and $W_1 \cdot W_2$ represent the addition and the dot product in the ${\mathbb Z}_2$ vector space ${\cal V}$. In terms of subsets, $W_1+W_2$ represents $W_1\cup W_2 - W_1 \cap W_2$ and $W_1\cdot W_2$ is the parity of $|W_1 \cap W_2|$.

For each stabilizer $A_i[X]$, we denote the subset of qubits it includes as ${\mathsf a}_i$. Similarly, every $B_j[Z]$ has a corresponding vector ${\mathsf b}_j$. The commutation relation Eq. \eqref{eq:ABcommutation} is equivalent to
\begin{align}
    {\mathsf a}_i \cdot {\mathsf b}_j = 0~~~~\forall i,j.
    \label{eq:a_dot_b}
\end{align}
The collection of vectors $\{{\mathsf a}_i\}$ ($\{{\mathsf b}_j\}$) span a sub vector space denoted as ${\cal V}_x$ (${\cal V}_z$). The relation Eq. \eqref{eq:a_dot_b} implies that ${\cal V}_x$ belongs to the orthogonal subspace of ${\cal V}_z$, namely ${\cal V}_x \subset {\cal V}_z^\perp$. Similarly, we have ${\cal V}_z \subset {\cal V}_x^\perp$. A logical $X$-operator ($Z$-operator) of the CSS code must be a product of $X_\mu$ ($Z_\mu$) on a subset of qubits whose corresponding vector belongs to the quotient space ${\cal V}_z^\perp/{\cal V}_x$ (${\cal V}_x^\perp/{\cal V}_z$). The number of logical qubits is then given by $N_{\rm c} = {\rm dim}\left({\cal V}_z^\perp/{\cal V}_x\right) = {\rm dim}\left({\cal V}_x^\perp/{\cal V}_z\right)$. Our assumption that the code rate $N_c /N \rightarrow 0$ in the infinite system limit conceptually implies ${\cal V}_z^\perp$ (${\cal V}_x^\perp$) is almost the same as ${\cal V}_x$ (${\cal V}_z$), which is important for the dualities we will discuss later. The assumption that the CSS code is free of local logical operators implies that the quotient spaces ${\cal V}_z^\perp/{\cal V}_x$ and ${\cal V}_x^\perp/{\cal V}_z$ only contain vectors that represent non-local sets that involve infinitely many qubits in the infinite system limit. 

In addition to characterizing the stabilizers, it proves convenient to treat any error chain $E$ as a vector in ${\cal V}$ when analyzing the effects of decoherence. For example, the error chains $E, E'$ follow the multiplication and commutation relations,
\begin{align}
    X_{E} X_{E'} = X_{E+E'}&, ~~~Z_{E} Z_{E'} = Z_{E+E'}, \nonumber \\
    X_{E'} Z_{E} = &(-1)^{E \cdot E'} Z_{E} X_{E'}.
\end{align}
For any subset $E$, if $E\in \vx$, $X_E$ can be written as a product of the $A_i[X]$ stabilizers. Similarly, $Z_E$ is product of the $B_j[Z]$ stabilizers if and only if $E\in \vz$. The summation $\sum_E$ in Eq. \eqref{eq:rho_b_Expansion} and \eqref{eq:rho_p_Expansion} can be viewed as a summation $\sum_{E\in {\cal V}}$ over all $\mathbb Z_2$ vectors in $\cal V$.

\subsection{Statistical models for decohered CSS codes} \label{sec:sm_construction}
\begin{figure*}[ht!]
    \centering
    \includegraphics[width=.9\linewidth]{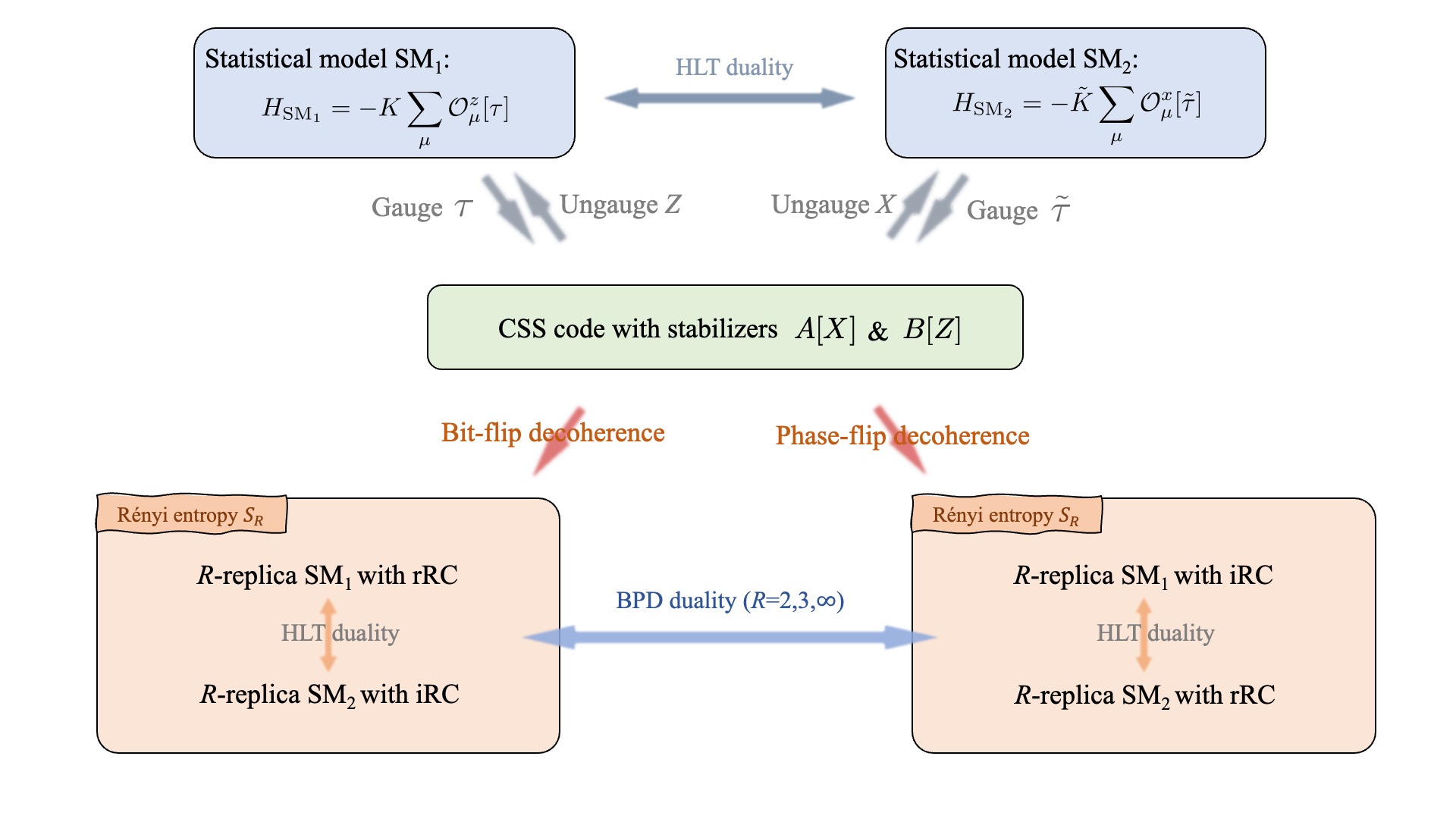}
    \caption{The tapestry of dualities of a general CSS code. In Sec. \ref{sec:sm_construction}, we introduce the statistical models $\SM{1,2}$ through ungauging the CSS code and introduce the $R$-replica $\SM{1,2}$ with randomness (rRC and iRC) as the descriptions of the R\'enyi entropies of the decohered CSS code. The HLT and BPD dualities are discussed in Sec. \ref{sec:sm_construction} and \ref{sec:tapestry}.}
    \label{fig:duality1}
\end{figure*}

We now develop the systematic framework to characterize the effects of decoherence in the error-corrupted mixed state $\rho = {\cal N}_x \circ {\cal N}_z (\rho_0)$ using statistical models. These decoherence effects are manifested in the behavior of $\Tr \rho^R$, which are directly related to the $R$-th R\'enyi entropies $S_R = \frac{1}{1-R} \log\left( \Tr \rho^R \right)$. Singularities in $S_R$ (as functions of the error rates $p_x$ and $p_z$) signal the DIPTs. In particular, the singularity in the limit $R\rightarrow 1$ is the error threshold of the decodability of the code. In the following, we show that $\Tr \rho^R$ can be formulated as the partition functions of a family of statistical models of classical $\mathbb Z_2$ spins with random couplings. The R\'enyi entropies are, therefore, the respective free energies. The DIPTs can then be identified as the phase transitions in these statistical models. This subsection is devoted to the general construction of these statistical models for the decohered CSS code in the infinite-system limit. Concrete examples will be provided in the subsequent subsections Sec. \ref{sec:3dtoric} and \ref{sec:xcube}. 

For a CSS code ${\cal C}$, decoherence effects of the bit-flip errors decouple from those of the phase-flip errors. The reason is that the former only affects the $B_j[Z]$ stabilizers while the latter only affects the $A_i[X]$ stabilizers. The decoupling is manifested by the following identity in the infinite-system limit
\begin{align}
    \Tr \rho^R = \Tr \rho_{\rm b}^R  \times \Tr \rho_{\rm p}^R.
    \label{eq:bp_fractorized}
\end{align}
In infinite-system limit, the absence of local logic operators implies $|\bra{\Omega} X_E Z_{E'}\ket{\Omega}| = |\bra{\Omega} X_E \ket{\Omega}| \times  |\bra{\Omega} Z_{E'} \ket{\Omega}|$ for the logical state $\ket{\Omega}$ and any pair of finite subsets of qubits $E,E'$. This property leads to Eq. \eqref{eq:bp_fractorized}. The decoupling of the bit-flip and phase-flip errors allows us to independently formulate the statistical models for $\Tr \rho_{\rm b}^R $ and $\Tr \rho_{\rm p}^R$ without loss of generality. Also, note that the absence of local logical operators makes all logical states locally indistinguishable. For this reason, different choices of the error-free logical state $\rho_0= |\Omega\rangle \bra{\Omega}$ will not affect the construction of the statistical models in the infinite-system limit. 

As the first step towards the statistical models, we use Eqs. \eqref{eq:rho_b_Expansion} and \eqref{eq:rho_p_Expansion} to expand $\Tr \rho_{\rm b/p}^R $ as a sum over the vector spaces ${\cal V}$, $\vx$, and $\vz$. Take $\Tr \rho_{\rm b}^2$ as an example. In the infinite-system limit,
\begin{align}
    \Tr(\rho_{\rm b}^2) & = \sum_{E, E' \in {\cal V}}P_x(E)P_x(E')|\braket{X_{E}X_{E'}}_\Omega|^2 \nonumber\\
    & = \sum_{E \in {\cal V}} \sum_{C \in {\cal V}_x }P_x(E)P_x(E+C).
    \label{eq:2ndRenyi_rb}
\end{align}
Here, we've used the fact that $|\braket{X_{E}X_{E'}}_\Omega|^2 =0$ unless $X_{E}X_{E'} = X_{E+E'}$ can be written as a product of the $A_i[X]$ stabilizers, i.e. $E + E' \in {\cal V}_x$.  When $E + E' \in {\cal V}_x$, we have $|\braket{X_{E}X_{E'}}_\Omega|^2 =1$ and $E' = E + C$ for some $C \in \vx$. For general $R$, we have
\begin{align}
    & \Tr(\rb^R) \nonumber \\
    & = \sum_{E \in {\cal V}} \, \sum_{C_{2,..,R} \in \vx} P_x(E) P_x(E+C_2)\dots P_x(E+C_R) \nonumber\\
    &= \frac{1}{2^{{\rm dim} \vx}} ~ \sum_{E\in {\cal V}} \left(\sum_{C_{1,2,..,R} \in \vx}P_x(E + C_1)\dots P_x(E+C_R)\right). \label{eq:renyisymm_b}
\end{align}
The last line is obtained from the second by shifting $E \rightarrow E+C_1$, $C_{2,...,R} \rightarrow - C_1 +  C_{2,...,R}$ with $C_1 \in \vx$. 

Later, we will see that $\Tr(\rb^R)$ can be mapped to the partition function of an $R$-replica statistical model of classical $\mathbb Z_2$ spins with random couplings. $E \in {\cal V}$ specifies a pattern of the random couplings, and the partition function of each replica is given by $\sum_{C\in \vx} P_x(E + C)$ (before averaging/summing over all randomness patterns $E$). The interaction between different replicas is only mediated by the random coupling $E$. 

In the $R\rightarrow 1$ limit, the R\'enyi entropy $S_R$ reduces to the von Neumann entropy $S_1$:  
\begin{align}
     S_1(\rb) & = \lim_{R\rightarrow 1} \frac{1}{1-R} \left(\Tr(\rb^R)-1 \right) \nonumber \\
    &= -\sum_{E\in {\cal V}} P_x(E) \log\left(\sum_{C\in \vx}P_x(E+C)\right), \label{eq:vn}
\end{align}
which turns out to be the quenched-disorder-average of the free energy $-\log\left(\sum_{C\in \vx}P_x(E+C)\right)$ in the random statistical model.

Similar to $\Tr(\rb^R)$, we can expand $\Tr(\rp^R)$ as
\begin{align}
    & \Tr(\rp^R) \nonumber \\
    &= \frac{1}{2^{{\rm dim} \vz}} ~  \sum_{E\in {\cal V}} \left(\sum_{C_{1,2,..,R} \in \vz}P_z(E + C_1)\dots P_z(E+C_R)\right), \label{eq:renyisymm_p}
\end{align}
which will also be mapped to the partition function of certain $R$-replica statistical models with random couplings. Consequently, the von Neumann entropy $S_1(\rp) = \lim_{R\rightarrow 1} S_R(\rp)$ will be identified as the quenched-disorder-average free energy of the corresponding statistical model.

In the following, we present the systematic construction of the random statistical models of classical $\mathbb Z_2$ spins whose partition functions map to $\Tr(\rho_{\rm b/p}^R)$. For the sake of clarity, we introduce these statistical models in two steps. First, we construct these statistical models without randomness. These non-random statistical models naturally arise from ungauging the CSS code. In the second step, we establish the mapping between the multi-replica version of these statistical models (with randomness) and the quantities $\Tr (\rho_{\rm b/p}^R)$ (which are related to the R\'enyi entropies $S_R(\rho_{\rm b/p})$). 

Now, we perform the first step of our construction. For a general CSS code ${\cal C}$, we introduce two (non-random) classical statistical models, denoted as $\SM1$ and $\SM2$. They can be obtained naturally by treating the CSS code as a generalized $\mathbb{Z}_2$ gauge theory and then ungauging it. As we will see, this ungauging procedure is exactly the inverse of the gauging procedure introduced in Ref. \onlinecite{Vijay_2016}, which constructs CSS codes from statistical models of classical $\mathbb{Z}_2$ spins. In this step, we explain the ungauging as a procedure introduced by hand. When we consider $\Tr (\rho_{\rm b/p}^R)$ later in the second step, the ungauging is effectively implemented by the errors.  

The ungauging procedure follows from the intuition based on the 2D toric code (see Fig. \ref{fig:toricungauge}), which is equivalent to a conventional $\mathbb{Z}_2$ gauge theory. The general idea is as follows. For a general CSS code $\cal C$, we can treat the $Z_\mu$'s as the gauge field variables. The $A_i[X]$ terms in the CSS code Hamiltonian Eq. \eqref{eq:CSS_Ham} implement the generalized Gauss law while the $B_j[Z]$ terms describe the energy costs of gauge fluxes. Ungauging the CSS code ${\cal C}$ from this perspective produces the classical statistical model $\SM1$. If we instead treat $X_\mu$ as the gauge field and interchange the roles of $A_i[X]$ and $B_j[Z]$, ungauging the CSS code $\cal C$ results in classical statistical model SM$_2$.

\begin{figure}[tb]
    \centering
    \includegraphics[width=0.95 \linewidth]{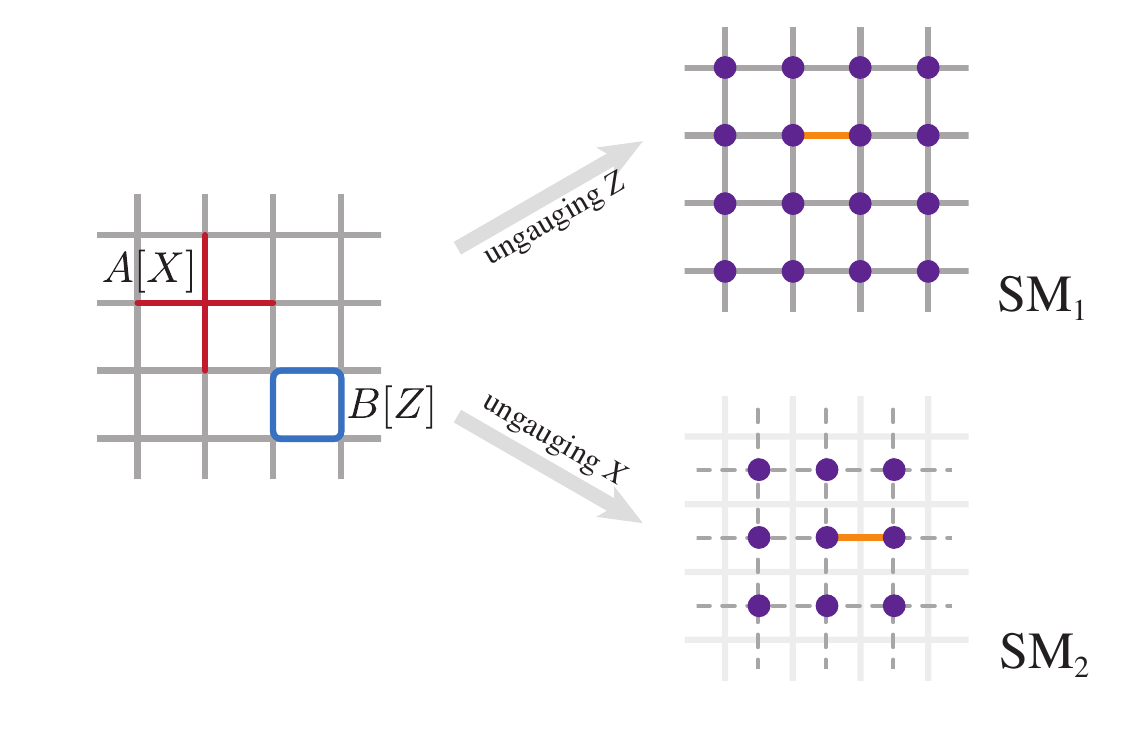}
    \caption{For the 2D toric code, the two types of stabilizers $A[X]$ and $B[Z]$ are pictorially represented above. Ungauging the gauge field $Z$ and the gauge field ungauging $X$ in the 2D toric code produce the 2D Ising model on the original lattice and that on the dual lattice, respectively. The original square lattice is depicted in gray solid lines, and the dual lattice is depicted in gray dashed lines. Each purple dot represents a classical $\mathbb Z_2$ spin $\tau$/$\tilde\tau$, and the orange edge presents the near-neighbor interactions in the Ising models. (see Sec. \ref{sec:2dtc} for the microscopic details) }
    \label{fig:toricungauge}
\end{figure}

We now explicitly carry out the ungauging procedure to construct the non-random version of SM$_1$ from the general CSS code $\cal C$.  We introduce a $\mathbb{Z}_2$ spin variable $\tau_i$ at the center of each $A_i[X]$ term and treat it as the matter field coupled to the generalized $\mathbb{Z}_2$ gauge field $Z_\mu$. The operator $A_i[X] \tau_i^x$ for each $i$ generates the gauge transformation, where $\tau_i^x$ is a Pauli-$X$ operator. The gauge field $Z_\mu$ couples to the matter fields via the gauge-invariant interaction $Z_\mu {\cal O}^z_\mu[\tau^z]$ with
\begin{align}
   {\cal O}^z_\mu[\tau^z] = \prod_{i} (\tau^z_i)^{(\mathsf{a}_i)_{\mu}}.
   \label{eq:def_Oz}
\end{align}
Recall that $\mathsf{a}_i \in {\cal V}$ is the $\mathbb {Z}_2$ vector associated with the stabilizer $A_i[X]$. $(\mathsf{a}_i)_\mu$ is the $\mu$th component of $\mathsf{a}_i$. Due to the locality of each stabilizer $A_i[X]$, ${\cal O}^z_\mu[\tau^z]$ must be a local term as well for each $\mu$. The commutation relation $[A_i[X] \tau_i^x, Z_\mu {\cal O}^z_\mu[\tau^z] ] = 0 $ for any $i,\mu$ implies the gauge-invariance of the interaction $Z_\mu {\cal O}^z_\mu[\tau^z]$. Hence, we can write the following Hamiltonian which describes the generalized $\mathbb Z_2$ gauge theory coupled to matter fields:
\begin{align}
    H_{\rm gt1} = - \sum_i A_i[X] \tau_i^x - K \sum_\mu Z_\mu {\cal O}^z_\mu[\tau^z] - \sum_j B_j[Z],
    \label{eq:gauge_matter_int}
\end{align}
where $K>0$ is a coupling constant and the first term $\sum_i A_i[X] \tau_i^x$ leads to an emergent Gauss law (with matter fields) $A_i[X] \tau_i^x = 1,~\forall i$ at low energies. 

If we add an extra term $- g \sum_i \tau^x_i$ with a large positive $g$ to the Hamiltonian $H_{\rm gt1}$, the matter field $\tau_i$ will be gapped out and the CSS code Hamiltonian Eq. \eqref{eq:CSS_Ham} will be recovered as the low-energy theory.

On the other hand, the model $H_{\rm gt1}$ allows us to ungauge by turning off all the gauge fields, namely setting $Z_\mu = 1$, and removing the Gauss-law terms $-\sum_i A_i[X] \tau_i^x$. The resulting Hamiltonian is the Hamiltonian of SM$_1$ (up to a constant):
\begin{align}
    H_{{\rm SM}_1} = - K \sum_\mu {\cal O}^z_\mu[\tau]. 
    \label{eq:SM1_Ham_clean}
\end{align}
Here, we've suppressed the superscript $z$ for the spin variables $\tau_i$ to emphasize that $H_{{\rm SM}_1}$ is essentially a {\it classical} Hamiltonian with commuting spin variables $\tau_i = \pm 1$. It is helpful to note that the lattice of the spin variable $\tau_i$ of $\SM1$ differs from the lattice of the qubits labeled by $\mu$.

The construction of SM$_2$ is parallel to SM$_1$. We simply need to interchange the roles of the $A_i[X]$ and $B_j[Z]$ stabilizers. For $\SM2$, a $\mathbb{Z}_2$ spin variable $\ttau_j$ is introduced at the center of each $B_j[Z]$ stabilizer. With $X_\mu$ treated as the generalized $\mathbb{Z}_2$ gauge field, ungauging the CSS code produces the classical Hamiltonian of SM$_2$:
\begin{align}
    H_{{\rm SM}_2} = - \tilde{K} \sum_\mu \cO^x_\mu[\ttau],
    \label{eq:SM2_Ham_clean}
\end{align}
where $\tilde{K}$ is the coupling constant of SM$_2$ and the spin interaction ${\cal O}^x_\mu[\ttau]$ is given by 
\begin{align}
    {\cal O}_\mu^x [\ttau] = \prod_j (\ttau_j)^{(\mathsf{b}_j)_\mu}.
     \label{eq:def_Ox}
\end{align}
We emphasize that each $\cO_\mu^x [\ttau]$ term is a local spin interaction due to the locality of the $B_j[Z]$ stabilizers. 

The two (non-random) statistical models SM$_1$ and SM$_2$ are dual to each other under a Kramers-Wannier-like HLT duality. We defer the details of this duality to the next subsection, where we include this duality as a component of the tapestry of dualities. 

With $\SM{1,2}$ introduced, we are ready to establish the mapping between $\Tr\rho_{\rm b/p}^R$ for error-corrupted mixed states $\rho_{\rm b/p}$ and the $R$-replica $\SM{1,2}$ with random couplings. There are two types of random couplings: real random couplings and imaginary random couplings, which we refer to as rRC and iRC for short. 

For the error-corrupted mixed state $\rho_{\rm b}$, we obtain the following theorem.
\begin{tcolorbox}
\begin{theorem}\label{thm:sm_b}   
{\rm For the CSS code in the infinite system limit, $\Tr\rho_{\rm b}^R$ of the mixed state $\rb$ with bit-flip errors is proportional to the partition function of the $R$-replica SM$_1$ with real random couplings (rRC). It is also proportional to the partition function of the $R$-replica SM$_2$ with imaginary random couplings (iRC):}
    \begin{align}
        {\rm Tr}(\rb^R) &\propto \sum_{E\in {\cal V}} \left(Z_{\rm SM_1}(K, E) \right)^R \nonumber \\
        &\propto \sum_{E\in {\cal V}}  \left(W_{\rm SM_2}(\tilde{K},E)\right)^R
        \label{eq:thm1},
    \end{align}
{\rm where the coupling constants $K$ and $\tilde K$ are given by the bit-flip error rate $p_x \in [0,1/2]$ via $e^{-2K} = \tanh{\tilde{K}} = \frac{p_x}{1-p_x}$. The proportionality constants only depend on $p_x $ smoothly and, hence, are unimportant for potential phases and DIPTs.} 
\end{theorem}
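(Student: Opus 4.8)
The plan is to start from Eq.~\eqref{eq:renyisymm_b} and first peel off the replica structure. The inner multi-variable sum factorizes over replicas,
\begin{align}
 \sum_{C_{1,\dots,R}\in\vx}\prod_{a=1}^{R}P_x(E+C_a)=\Big(\sum_{C\in\vx}P_x(E+C)\Big)^{\!R},
\end{align}
so that $\Tr(\rb^{R})=\tfrac{1}{2^{\dim\vx}}\sum_{E\in\cV}\big(\sum_{C\in\vx}P_x(E+C)\big)^{R}$. It therefore suffices to identify the single-replica object $Z(E):=\sum_{C\in\vx}P_x(E+C)$ with the partition function $Z_{\rm SM_1}(K,E)$ of $\SM1$, and separately with the partition function $W_{\rm SM_2}(\tK,E)$ of $\SM2$, each up to a factor depending only smoothly on $p_x$; raising to the $R$-th power and summing over $E$ then produces the two lines of Eq.~\eqref{eq:thm1}, with the ``$R$-replica SM with random couplings'' being exactly the disorder sum $\sum_{E}(\,\cdot\,)^{R}$.

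For the $\SM1$ side I would rewrite Boltzmann weights directly. Factoring $(1-p_x)^{N}$ out, $P_x(E+C)=(1-p_x)^{N}\big(\tfrac{p_x}{1-p_x}\big)^{|E+C|}$; writing $|E+C|=\tfrac12\big(N-\sum_\mu(-1)^{(E+C)_\mu}\big)$ and using that for $C=\sum_i c_i\mathsf a_i\in\vx$ with $\tau_i=(-1)^{c_i}$ one has $(-1)^{C_\mu}=\prod_i\tau_i^{(\mathsf a_i)_\mu}=\cO^z_\mu[\tau]$, the weight becomes $(1-p_x)^{N}e^{-KN}\exp\!\big(K\sum_\mu(-1)^{E_\mu}\cO^z_\mu[\tau]\big)$ with $e^{-2K}=p_x/(1-p_x)$. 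Summing $C$ over $\vx$ equals summing the $\tau_i$ over all configurations, each $C\in\vx$ being hit a fixed, $p_x$-independent number of times ($2^{n_A-\dim\vx}$, with $n_A$ the number of $A_i[X]$ generators). Hence $Z(E)\propto Z_{\rm SM_1}(K,E):=\sum_{\{\tau_i\}}\exp\!\big(K\sum_\mu(-1)^{E_\mu}\cO^z_\mu[\tau]\big)$, i.e.\ $\SM1$ with the sign pattern $(-1)^{E_\mu}$ as real random couplings, with a prefactor $(1-p_x)^{N}e^{-KN}2^{-(n_A-\dim\vx)}$ that is smooth in $p_x$.

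For the $\SM2$ side I would implement the HLT (Kramers--Wannier-like) duality as a Poisson resummation over a coset. Writing $Z(E)=\sum_{C'\in E+\vx}P_x(C')$ and dualizing gives $Z(E)=\tfrac{|\vx|}{2^{N}}\sum_{D\in\vx^\perp}(-1)^{E\cdot D}\,\widehat P_x(D)$, where $\widehat P_x(D)=\sum_{C'\in\cV}P_x(C')(-1)^{C'\cdot D}=(1-2p_x)^{|D|}=e^{-2\tK|D|}$, i.e.\ $\tanh\tK=p_x/(1-p_x)$. In the infinite-system limit I replace $\vx^\perp$ by $\vz$: this is precisely the step licensed by the vanishing code rate (the cosets of $\vz$ in $\vx^\perp$ are labeled by the $N_{\rm c}$ non-local logical operators, sub-extensive and smooth in $p_x$, the same input used to justify Eq.~\eqref{eq:bp_fractorized}). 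Parametrizing $D=\sum_j d_j\mathsf b_j\in\vz$ by $\ttau_j=(-1)^{d_j}$ so that $(-1)^{D_\mu}=\cO^x_\mu[\ttau]$, one gets $(-1)^{E\cdot D}=\prod_\mu(\cO^x_\mu[\ttau])^{E_\mu}$ and $e^{-2\tK|D|}=e^{-\tK N}e^{\tK\sum_\mu\cO^x_\mu[\ttau]}$; combining these and absorbing an overall $E$-dependent phase $e^{i\pi|E|/2}$ into the normalization (this is exactly the phase that keeps $\Tr(\rb^{R})$ real and positive) identifies the $\ttau$-sum with $W_{\rm SM_2}(\tK,E):=\sum_{\{\ttau_j\}}\exp\!\big(\sum_\mu(\tK-i\tfrac{\pi}{2}E_\mu)\cO^x_\mu[\ttau]\big)$, the $\SM2$ partition function with imaginary random couplings $-i\tfrac{\pi}{2}E_\mu$, again up to a $p_x$-smooth constant. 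Raising to the $R$-th power and summing over $E$ gives the second line of Eq.~\eqref{eq:thm1}.

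The $\SM1$ identification is routine rewriting. The main obstacle is the duality step: one must (i) control the replacement $\vx^\perp\to\vz$ in the thermodynamic limit, arguing that the discarded non-local sectors cannot shift the free-energy density or alter the phase/DIPT structure --- the same subtlety flagged around Eq.~\eqref{eq:bp_fractorized} and the local indistinguishability of logical states --- and (ii) track the imaginary couplings and the overall $E$-dependent phase carefully enough that the end result is genuinely proportional to $\sum_E(W_{\rm SM_2})^{R}$ with a constant depending only smoothly on $p_x$. As a consistency check I would also verify that the $R\to1$ limit of both lines reproduces Eq.~\eqref{eq:vn}, confirming the quenched-disorder-averaged-free-energy reading of the von Neumann entropy.
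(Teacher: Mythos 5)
Your proof takes essentially the same route as the paper's Appendix~\ref{app:thmsm}: the replica factorization $\Tr\rb^R\propto\sum_E\big(\sum_{C\in\vx}P_x(E+C)\big)^R$, the identification of $\sum_{C\in\vx}P_x(E+C)$ with $Z_{\SM1}(K,E)$ by parametrizing $C\in\vx$ through the $\tau$-spins (the paper calls this the low-temperature expansion of $\SM1$; your $2^{n_A-\dim\vx}$ is the paper's $N_s$), and a Kramers--Wannier step to reach $\SM2$. The only structural difference is the direction of that last step: the paper high-temperature-expands $W_{\SM2}(\tK,E)$ to land on $\sum_{C\in\vz^\perp}\lambda^{|E+C|}$ and then replaces $\vz^\perp\to\vx$, while you Poisson-resum $\sum_{C\in\vx}P_x(E+C)$ to land on $\sum_{D\in\vx^\perp}(-1)^{E\cdot D}\hat P_x(D)$ and replace $\vx^\perp\to\vz$; these are the same duality run in opposite directions, with the same thermodynamic-limit input.

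One small but real slip: you cannot ``absorb the overall $E$-dependent phase $e^{\i\pi|E|/2}$ into the normalization'' --- by construction that factor varies with $E$, while the proportionality constant in the theorem must be $E$-independent. The issue is only that you redefined $W_{\SM2}$ as $\sum_{\{\ttau\}}\exp\big(\sum_\mu(\tK-\tfrac{\i\pi}{2}E_\mu)\cO^x_\mu\big)$, whereas the paper's $W_{\SM2}$ (Eq.~\eqref{eq:WSM2_dis}) includes the shift $-\tfrac{\i\pi}{2}E_\mu(\cO^x_\mu-1)$, equivalently the bare operator insertion $\prod_\mu(\cO^x_\mu)^{E_\mu}$, which is manifestly real and equals your $\ttau$-sum exactly. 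Your Poisson computation in fact produces $\sum_{\{\ttau\}}\prod_\mu(\cO^x_\mu)^{E_\mu}e^{\tK\sum_\mu\cO^x_\mu}$, i.e.\ the paper's $W_{\SM2}(\tK,E)$ with no stray phase; the fix is just to keep that as the definition rather than peeling off the $e^{\i\pi|E|/2}$.
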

\end{tcolorbox}

On the first line of Eq. \eqref{eq:thm1}, $Z_{\rm SM_1}(K, E)$ is the single-replica partition function of SM$_1$ in the presence of the rRC configuration labeled by the error chain $E$:
\begin{align}
        Z_{\rm SM_1}(K, E) &= \sum_{\{\tau_i=\pm 1\}}\exp \left(K \sum_\mu  (-1)^{E_\mu}\mathcal{O}^z_\mu[\tau]\right).
        \label{eq:ZSM1_dis}
\end{align}
Here, $E_\mu$ is the $\mu$th component of the $\mathbb{Z}_2$ vector $E\in {\cal V}$. $E$ adds extra sign randomness to the real coupling constant $K$. Hence the name rRC. Note that $K>0$ since $p_x\in [0,1/2]$. One can show that
\begin{align}
   Z_{\rm SM_1}(K, E) = [p_x(1-p_x)]^{-N/2} N_s\sum_{C \in \vx} P_x(E+C),
   \label{eq:ZSM1_loop}
\end{align}
where $N_s$ is a factor related to internal symmetries of $\SM1$. We refer to App. \ref{app:thmsm} for the derivation of this expression. The first line of Eq. \eqref{eq:thm1} is an immediate consequence of Eqs. \eqref{eq:ZSM1_loop} and \eqref{eq:renyisymm_b}. Eq. \eqref{eq:ZSM1_loop} also shows that $Z_{\rm SM_1}(K, E)$ is proportional to the total probability of all the error chains within the same equivalence class $[E]_x \in {\cal V}/ \vx$. Physically, the error chains within the same class $[E]_x$ differ from each other only by elements in $\vx$ and, hence, cause excitations on the same set of $B_j[Z]$ stabilizers.

To provide some heuristics, we argue that the bit-flip errors suppress the coherent quantum fluctuations of the generalized $\mathbb Z_2$ gauge field $Z_\mu$. Hence, in the CSS code with bit-flip decoherence, a modified version of the Eq. \eqref{eq:gauge_matter_int} should appear. The modification includes turning off the first term (that generates coherent quantum fluctuations of the gauge field $Z_\mu$) and treating gauge field $Z_\mu$ as a classical degree of freedom with values $\pm1$. After this modification, $\SM1$ with rRC as defined in Eq. \eqref{eq:ZSM1_dis} emerges.

On the second line of Eq. \eqref{eq:thm1}, $W_{\rm SM_2}(\tilde{K},E)$ is the single-replica partition function of SM$_2$ in the presence of the iRC configuration labeled $E$:
\begin{align}
   & W_{\rm SM_2}(\tilde{K},E) = \sum_{\{\tilde{\tau}_j = \pm 1\}}\exp(-H_{\rm SM_2}) \prod_{\mu} \left( \cO^x_\mu[\tilde\tau]\right)^{E_\mu}
    \nonumber \\
   & = \sum_{\{\tilde{\tau}_j = \pm 1\}} \exp\left(-H_{\rm SM_2} -  \frac{\i \pi}{2} \sum_\mu E_\mu \Big(\cO^x_\mu[\tilde\tau] - 1\Big)\right),
    \label{eq:WSM2_dis}
\end{align}
where $H_{\rm SM_2}$ is the same as in Eq. \eqref{eq:SM2_Ham_clean}. For $p_x\in [0,1/2]$, $\tilde K$ is positive. In $W_{\rm SM_2}(\tilde{K},E)$, a non-trivial $E$ introduces extra imaginary coupling constants for the spin interaction $\cO^x_\mu[\tilde\tau]$. And hence the name iRC.

Based on Theorem \ref{thm:sm_b}, the phases and the DIPTs of the decohered CSS code probed by the R\'enyi entropy $S_R(\rb)$ can be investigated via studying the phases and transitions in the $R$-replica $\SM1$ with rRC and in the $R$-replica $\SM2$ with iRC. The fact that two different statistical models describe the behavior of the same quantity $\Tr \rb^R$ (or $S_R(\rb)$) indicates a duality between them. This duality is, in fact, an HLT duality, which we will discuss more about in the next subsection. For the proof of Theorem \ref{thm:sm_b} (including the HLT duality), we refer to App. \ref{app:thmsm}. The applications of Theorem \ref{thm:sm_b} in concrete examples are provided in Sec. \ref{sec:3dtoric} and \ref{sec:xcube}.

Using Eq. \eqref{eq:ZSM1_loop} and taking the $R\rightarrow 1$ limit of Theorem \ref{thm:sm_b}, we obtain 
\begin{tcolorbox}
\begin{corollary} \label{corollary:S1_b}
{\rm The von Neumann entropy of the error-corrupted state $\rb$ is given by the quenched-disorder-averaged free energy of SM$_1$ with rRC (up to an unimportant additive constant):}
\begin{align}
    S_1(\rb) = -\sum_{E\in {\cal V}} P_x(E) \log(Z_{\rm SM_1}(K, E)),
    \label{eq:S1_b}
\end{align}
{\rm where $K$ is given by the bit-flip error rate $p_x$ via $e^{-2K} = \frac{p_x}{1-p_x}$.}
\end{corollary}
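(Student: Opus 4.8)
The statement is essentially immediate once one combines two ingredients already in hand: the $R\to 1$ limit of $\Tr(\rb^R)$ worked out in Eq.~\eqref{eq:vn}, and the loop-expansion identity Eq.~\eqref{eq:ZSM1_loop} relating the single-replica partition function $Z_{\rm SM_1}(K,E)$ to the coset probability $\sum_{C\in\vx}P_x(E+C)$. So the plan is: (i) re-derive the replica limit from Eq.~\eqref{eq:renyisymm_b} to make the argument self-contained, landing on Eq.~\eqref{eq:vn}; (ii) substitute Eq.~\eqref{eq:ZSM1_loop} and check that the $E$-independent prefactor becomes a harmless additive constant under the disorder average. The one structural fact that makes everything go through is that $Q(E):=\sum_{C\in\vx}P_x(E+C)$ depends on $E$ only through its coset $[E]_x\in{\cal V}/\vx$, which lets one freely pass between the ``flat'' average $2^{-\dim\vx}\sum_{E\in{\cal V}}$ and the physical average $\sum_{E\in{\cal V}}P_x(E)$.

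First I would observe that the inner multi-replica sum in Eq.~\eqref{eq:renyisymm_b} factorizes, $\sum_{C_1,\dots,C_R\in\vx}P_x(E+C_1)\cdots P_x(E+C_R)=\big(Q(E)\big)^R$, so $\Tr(\rb^R)=2^{-\dim\vx}\sum_{E\in{\cal V}}Q(E)^R$. Since $\sum_{E\in{\cal V}}P_x(E)=1$ one has $2^{-\dim\vx}\sum_{E}Q(E)=1$, consistent with $\Tr(\rb^R)\to 1$ as $R\to 1$. Writing $Q(E)^R=Q(E)\,e^{(R-1)\log Q(E)}$ and expanding to first order in $R-1$ gives
\begin{align}
S_1(\rb)=\lim_{R\to 1}\frac{1}{1-R}\log\Tr(\rb^R)=-2^{-\dim\vx}\sum_{E\in{\cal V}}Q(E)\log Q(E).
\end{align}
Because $Q$ is constant on each coset $[E]_x$ (each containing $2^{\dim\vx}$ elements of ${\cal V}$), the flat weight collapses: $2^{-\dim\vx}\sum_{E}Q(E)\log Q(E)=\sum_{[E]_x}Q([E]_x)\log Q([E]_x)=\sum_{E\in{\cal V}}P_x(E)\log Q(E)$, which is precisely Eq.~\eqref{eq:vn}.

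Next I would insert Eq.~\eqref{eq:ZSM1_loop} in the form $Q(E)=[p_x(1-p_x)]^{N/2}N_s^{-1}\,Z_{\rm SM_1}(K,E)$. Under the logarithm this splits as $\log Q(E)=\log Z_{\rm SM_1}(K,E)+\tfrac{N}{2}\log[p_x(1-p_x)]-\log N_s$; the last two terms are independent of $E$, so when multiplied by $\sum_{E}P_x(E)=1$ they contribute only the additive constant $-\tfrac{N}{2}\log[p_x(1-p_x)]+\log N_s$, which is a smooth function of $p_x$ (with $N_s$ fixed by the code). What survives is $S_1(\rb)=-\sum_{E\in{\cal V}}P_x(E)\log Z_{\rm SM_1}(K,E)$ with $e^{-2K}=p_x/(1-p_x)$, identifying $S_1(\rb)$ as the quenched-disorder-averaged free energy of $\SM1$ with rRC, up to that unimportant constant.

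The only genuinely delicate point is the legitimacy of the replica limit itself: differentiating $\Tr(\rb^R)$ term-by-term in $R$ at $R=1$ and interchanging this with the infinite-system limit, i.e.\ that $R\mapsto\Tr(\rb^R)$ is analytic near $R=1$. In any finite volume this is automatic, since $\Tr(\rb^R)$ is then a finite sum of smooth positive terms; in the thermodynamic limit it is exactly the analyticity assumption already implicit in defining $S_R$ for non-integer $R$ and in stating Theorem~\ref{thm:sm_b}, so no new hypothesis is needed. Everything else is the bookkeeping described above, i.e.\ exploiting the $\vx$-coset invariance of $Q(E)$ (equivalently, of $Z_{\rm SM_1}(K,E)$) to convert between flat and physical disorder averages.
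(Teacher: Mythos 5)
Your proof is correct and follows the same route as the paper: combine the $R\to 1$ replica limit of $\Tr(\rho_{\rm b}^R)$ with the identity $Z_{\rm SM_1}(K,E)=[p_x(1-p_x)]^{-N/2}N_s\sum_{C\in\vx}P_x(E+C)$, noting that the $E$-independent prefactor contributes only an additive constant because $\sum_E P_x(E)=1$. The one cosmetic difference is that you re-derive Eq.~\eqref{eq:vn} from the \emph{symmetric} form of $\Tr(\rho_{\rm b}^R)$, i.e.\ $2^{-\dim\vx}\sum_E Q(E)^R$, and then convert the flat average into the physical disorder average via $\vx$-coset invariance of $Q(E)$; the paper instead obtains Eq.~\eqref{eq:vn} directly from the asymmetric expansion $\Tr(\rho_{\rm b}^R)=\sum_E P_x(E)\,Q(E)^{R-1}$, which already carries the weight $P_x(E)$ and makes the coset manipulation unnecessary. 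Both are correct and essentially equivalent; your version is slightly longer but makes the $\vx$-coset structure of the disorder explicit, which is a nice observation in its own right.
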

\end{tcolorbox}

\noindent Here, we treat the rRC configuration $E$ as a disorder following the probability distribution $P_x(E)$. Therefore, the right-hand side of Eq. \eqref{eq:S1_b} is interpreted as a quenched-disorder-averaged free energy. In a general disordered spin model, the probability distribution of disorder and the coupling constant $K$ can be independent parameters. The relation $e^{-2K} = \frac{p_x}{1-p_x}$ that appears in $\SM1$ with rRC is known as the Nishimori condition \cite{nishimori1981internal,nishimori2001statistical}.

We remark that the right-hand side of Eq. \eqref{eq:S1_b} is exactly the (averaged) free energy of the disordered statistical model introduced by Ref. \onlinecite{dennis2002topological} (and later generalized by Ref. \onlinecite{Flammia2021}) to study the decodability of $\mathbb{Z}_2$ stabilizer codes with Pauli errors (bit-flip errors in this case). In particular, Ref. \onlinecite{dennis2002topological} established that the phase transition (where the quenched-disorder-averaged free energy becomes singular) in the disordered statistical model signals the decoding error threshold for the code.  From the perspective of our current work, the error threshold is viewed as the $R\rightarrow 1$ limit of the family of DIPTs labeled by $R$. Recall that the DIPT with index $R$ is associated with the singularity of the R\'enyi entropy $S_R$ of the decohered code. We generally expect the critical error rate $p_{x}^\star(R) $ of the DIPT to be a function of $R$. The bit-flip error threshold for decoding is given by the limit $p_{x}^\star(R \rightarrow 1)$. As a clarification, for a generic $R$, we use the term ``critical error rate" for the error rate where the R\'enyi entropy $S_R$ develops singularity and the DIPT with index $R$ occurs. The term ``(decoding) error threshold" is only associated with the DIPT in the limit $R\rightarrow 1$ and is hence given by the critical error rate in the same limit. The same terminology convention applies to all types of errors considered in this paper. We will discuss the physical meaning of DIPTs at different $R$'s and the dependence of the critical error rate $p_{x}^\star(R)$ on $R$ in Sec. \ref{sec:differentR}.

For the mixed state $\rp$ corrupted by phase-flip errors, we prove a similar theorem that maps $\Tr\rho_{\rm p}^R$ to the partition functions of two $R$-replica statistical models with randomness.
\begin{tcolorbox}
\begin{theorem}\label{thm:sm_p}   
{\rm For the CSS code in the infinite system limit, $\Tr\rho_{\rm p}^R$ of the mixed state $\rp$ corrupted by phase-flip errors is proportional to the partition function of the $R$-replica SM$_2$ with rRC. It is also proportional to the partition function of the $R$-replica SM$_1$ with iRC: }
    \begin{align}
        {\rm Tr}(\rp^R) &\propto \sum_{E\in {\cal V}} \left(Z_{\rm SM_2}(\tilde{K}, E) \right)^R \nonumber \\
        &\propto \sum_{E\in {\cal V}}  \left(W_{\rm SM_1}(K,E)\right)^R,
    \end{align}
{\rm where the coupling constants $K$ and $\tilde K$ are given by the phase-flip error rate $p_z\in [0,1/2]$ via $e^{-2\tilde K} = \tanh{K} = \frac{p_z}{1-p_z}$. The proportionality constants only depend on $p_z$ smoothly and, hence, are unimportant for potential phases and DIPTs.} 
\end{theorem}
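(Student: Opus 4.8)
The plan is to obtain Theorem~\ref{thm:sm_p} as the mirror image of Theorem~\ref{thm:sm_b} under the formal exchange of the $X$-type and $Z$-type data of the CSS code, after verifying that this exchange is an exact symmetry of every object entering the statement and its proof. The dictionary I would use is: $A_i[X]\leftrightarrow B_j[Z]$, hence $\mathsf{a}_i\leftrightarrow\mathsf{b}_j$ and $\vx\leftrightarrow\vz$; the error channels ${\cal N}_x\leftrightarrow{\cal N}_z$, so $\rb\leftrightarrow\rp$ and $P_x\leftrightarrow P_z$; the ungauged statistical models $\SM1\leftrightarrow\SM2$ with spins $\tau_i\leftrightarrow\ttau_j$, interactions ${\cal O}^z_\mu\leftrightarrow{\cal O}^x_\mu$ [Eqs.~\eqref{eq:def_Oz},~\eqref{eq:def_Ox}], Hamiltonians $H_{\SM1}\leftrightarrow H_{\SM2}$ [Eqs.~\eqref{eq:SM1_Ham_clean},~\eqref{eq:SM2_Ham_clean}], and couplings $K\leftrightarrow\tilde K$; and the disordered partition functions $Z_{\SM1}\leftrightarrow Z_{\SM2}$, $W_{\SM1}\leftrightarrow W_{\SM2}$ [Eqs.~\eqref{eq:ZSM1_dis},~\eqref{eq:WSM2_dis}, the latter with $1\leftrightarrow2$]. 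Since none of these definitions, nor the infinite-system factorization~\eqref{eq:bp_fractorized}, nor the locality and code-rate assumptions single out one type of stabilizer, the exchange is a genuine symmetry of the whole construction; under it the Nishimori-type relation $e^{-2K}=\tanh\tilde K=p_x/(1-p_x)$ of Theorem~\ref{thm:sm_b} maps exactly onto $e^{-2\tilde K}=\tanh K=p_z/(1-p_z)$, as claimed.

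Granting the dictionary, I would establish the first (rRC) line from two ingredients. The first is the expansion~\eqref{eq:renyisymm_p} of $\Tr(\rp^R)$ over ${\cal V}$ and $\vz$, already derived in the text and itself the $x\to z$ image of~\eqref{eq:renyisymm_b}. The second is the $\SM1\to\SM2$ image of the loop identity~\eqref{eq:ZSM1_loop},
\[
Z_{\SM2}(\tilde K,E)=[p_z(1-p_z)]^{-N/2}\,\tilde N_s\sum_{C\in\vz}P_z(E+C),
\]
valid when $e^{-2\tilde K}=p_z/(1-p_z)$, with $\tilde N_s$ a factor fixed by the internal symmetries of $\SM2$. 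I would prove it exactly as in App.~\ref{app:thmsm}: high-temperature expand $Z_{\SM2}(\tilde K,E)=\sum_{\{\ttau_j=\pm1\}}\exp(\tilde K\sum_\mu(-1)^{E_\mu}{\cal O}^x_\mu[\ttau])$ bond by bond, sum over $\{\ttau_j\}$ to project onto qubit subsets $S$ with $\mathsf{b}_j\cdot S=0$ for all $j$ (i.e.\ $S\in\vz^\perp$), and apply the Kramers--Wannier Poisson resummation converting this high-temperature series on $\vz^\perp$ into the low-temperature (error-chain) series on the dual space $\vz=(\vz^\perp)^\perp$. Substituting into~\eqref{eq:renyisymm_p} gives $\Tr(\rp^R)\propto\sum_{E\in{\cal V}}\left(Z_{\SM2}(\tilde K,E)\right)^R$ up to a prefactor depending smoothly on $p_z$.

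The second (iRC) line I would deduce from the HLT (Kramers--Wannier) duality between $\SM1$ and $\SM2$ proven in App.~\ref{app:thmsm}, read under the same exchange: for couplings related by $\tanh K=e^{-2\tilde K}$, the disorder-insertion partition function $W_{\SM1}(K,E)$ [Eq.~\eqref{eq:WSM2_dis} with $1\leftrightarrow2$] equals $Z_{\SM2}(\tilde K,E)$ up to an $E$-independent constant. The mechanism is the standard one: the inserted product $\prod_\mu({\cal O}^z_\mu[\tau])^{E_\mu}$ in $\SM1$ acts as a source of domain walls whose image under the Kramers--Wannier map to the $\ttau$ spins of $\SM2$ is precisely the sign pattern $(-1)^{E_\mu}$ decorating the bonds of $\SM2$; this map identifies the relevant spaces of the two models, which agree in the infinite-system limit where $N_{\rm c}/N\to0$ forces $\vx^\perp\to\vz$. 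Combined with the first line, this yields $\Tr(\rp^R)\propto\sum_{E\in{\cal V}}\left(W_{\SM1}(K,E)\right)^R$ with $\tanh K=p_z/(1-p_z)$, completing the proof.

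I expect the genuinely delicate steps to be exactly those already delicate in Theorem~\ref{thm:sm_b}, inherited verbatim through the exchange: the careful bookkeeping of the symmetry factor $\tilde N_s$ and of the smooth $p_z$-dependent prefactor in the Poisson-resummation step (one must check these are $E$-independent so that they cannot affect phases or DIPTs), and the use of $N_{\rm c}/N\to0$ to identify $\vx^\perp$ with $\vz$, which is what makes the $\SM1$--$\SM2$ HLT duality hold in the infinite-system limit. The one thing worth double-checking beyond a pure relabeling is that no hidden asymmetry between the numbers of $A$- and $B$-type generators enters; since neither~\eqref{eq:renyisymm_b} nor~\eqref{eq:ZSM1_loop} nor the HLT duality relied on those counts being equal, the exchange should go through unmodified.
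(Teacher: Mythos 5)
Your plan — obtaining Theorem~\ref{thm:sm_p} as the image of Theorem~\ref{thm:sm_b} under the exchange $A_i[X]\leftrightarrow B_j[Z]$, $\vx\leftrightarrow\vz$, $\SM1\leftrightarrow\SM2$, $K\leftrightarrow\tilde K$ — is exactly what the paper does (implicitly): it never writes a separate proof for Theorem~\ref{thm:sm_p}, and Eqs.~\eqref{eq:renyisymm_p}, \eqref{eq:ZSM2_loop}, \eqref{eq:ZSM2_dis}, \eqref{eq:WSM1_dis} are already the $x\to z$ translations of their bit-flip counterparts. Your dictionary is complete, and your observation that neither \eqref{eq:renyisymm_b} nor \eqref{eq:ZSM1_loop} nor the HLT duality relies on the numbers of $A$- and $B$-type generators being equal is correct. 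The one place you diverge from the paper's method is the derivation of the loop identity \eqref{eq:ZSM2_loop}: App.~\ref{app:thmsm} proves the $\SM1$ version (Eq.~\eqref{eq:mapsm1}) by a \emph{direct low-temperature expansion} — flipping $\tau_i$ generates a domain wall $C\in\vx$ and the sum over $C$ appears immediately, with $N_s$ the degeneracy of the spin-to-domain-wall map — so the faithful mirror is a direct LTE of $Z_{\SM2}$ in which flipping $\ttau_j$ produces $C\in\vz$; you instead describe a high-temperature expansion of $Z_{\SM2}$ onto $S\in\vz^\perp$ followed by a Kramers--Wannier/Poisson resummation back to $\vz=(\vz^\perp)^\perp$. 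Both routes are exact, need no infinite-system substitution (as you correctly note), and reproduce $\tilde N_s = 2^{N_B-\dim\vz}$, but the HTE-plus-Poisson path is not ``exactly as in App.~\ref{app:thmsm}'' as you claim; the direct LTE is shorter and is the intended mirror. Your treatment of the iRC line, via the HLT duality $W_{\SM1}(K,E)\propto Z_{\SM2}(\tilde K,E)$, equivalently the HTE of $W_{\SM1}$ onto $\vx^\perp$ with the $\vx^\perp\to\vz$ substitution that localizes where $N_c/N\to 0$ enters, matches the paper's argument faithfully.
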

\end{tcolorbox} 
\noindent Here, the partition function of the $R$-replica SM$_2$ with rRC and that of the $R$-replica SM$_1$ with iRC are defined in parallel with Eq. \eqref{eq:ZSM1_dis} and \eqref{eq:WSM2_dis}:
 \begin{align}
        Z_{\rm SM_2}(\tilde{K}, E) &= \sum_{\{\ttau_j=\pm 1\}}\exp \left(\tilde{K} \sum_\mu  (-1)^{E_\mu}\cO^x_\mu[\ttau]\right),
        \label{eq:ZSM2_dis}
\end{align}
and 
\begin{align}
    & W_{\rm SM_1}(K,E) = \sum_{\{\tau_i = \pm 1\}}\exp(-H_{\rm SM_1}) \prod_{\mu} \left( {\cal O}^z_\mu[\tau]\right)^{E_\mu}
    \label{eq:WSM1_dis}
     \nonumber \\
   & = \sum_{\{\tau_i = \pm 1\}} \exp\left(-H_{\rm SM_1} -  \frac{\i \pi}{2} \sum_\mu E_\mu \Big(\cO^z_\mu[\tau] - 1\Big)\right).
\end{align}
Note that statistical models in Theorem \ref{thm:sm_p} for the phase-flip errors are related to those in Theorem \ref{thm:sm_b} for the bit-flip errors by interchanging the roles of rRC and iRC. 

Similar to Eq. \eqref{eq:ZSM1_loop}, we can show that 
\begin{align}
    Z_{\rm SM_2}(\tilde{K}, E)  = [p_z(1-p_z)]^{-N/2} \tilde{N}_s\sum_{C \in \vz} P_z(E+C),
   \label{eq:ZSM2_loop}
\end{align}
where $\tilde{N}_s$ is a factor associated with the internal symmetries of ${\rm SM}_2$. This equation implies that $Z_{\rm SM_2}(\tilde{K}, E)$ is the total probability of phase-flip error within the same equivalence class $[E]_z \in {\cal V}/\vz$. All error chains within the same class $[E]_z$ only differ from each other by elements in $\vz$ and, hence, lead to the same pattern of excitations on the $A_i[X]$ stabilizers. 

Just like the case with bit-flip errors, the two $R$-replica random statistical models in Theorem \ref{thm:sm_p} are dual to each other under an HLT duality, which we will discuss more in the next subsection.

With Eq. \eqref{eq:ZSM2_loop}, we can take the $R\rightarrow 1$ limit of Theorem \ref{thm:sm_p} and obtain the following corollary.
\begin{tcolorbox}
\begin{corollary} \label{corollary:S1_p}
{\rm The von Neumann entropy of the error-corrupted state $\rp$ is given by the quenched-disorder-averaged free energy of SM$_2$ with rRC (up to unimportant additive constants):}
\begin{align}
    S_1(\rp) = -\sum_{E\in {\cal V}} P_z(E) \log(Z_{\rm SM_2}(\tilde{K}, E)),
    \label{eq:S1_p}
\end{align}
{\rm where $\tK$ is given by the phase-flip error rate $p_z$ via $e^{-2\tK} = \frac{p_z}{1-p_z}$.}
\end{corollary}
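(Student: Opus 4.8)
\emph{Proof proposal.} The plan is to read off Corollary~\ref{corollary:S1_p} as the $R\to 1$ limit of Theorem~\ref{thm:sm_p} together with the loop-sum identity Eq.~\eqref{eq:ZSM2_loop}. First I would write down, for $\rp$, the exact analogue of the intermediate line of Eqs.~\eqref{eq:2ndRenyi_rb}--\eqref{eq:renyisymm_b}: expanding $\rp$ via Eq.~\eqref{eq:rho_p_Expansion} and using, exactly as there, that in the infinite-system limit $|\braket{Z_E Z_{E'}}_\Omega|^2$ equals $1$ if $E+E'\in\vz$ and $0$ otherwise (the absence of local logical operators kills $\braket{Z_W}_\Omega$ for every finite $W\notin\vz$), one gets
\begin{align}
    \Tr(\rp^R) = \sum_{E\in\cV} P_z(E)\,\Big(\textstyle\sum_{C\in\vz}P_z(E+C)\Big)^{R-1}.
    \label{eq:prop_renyip}
\end{align}
Since $\sum_E P_z(E)=1$, the $R=1$ case gives $\Tr\rp=1$, so $S_1(\rp)=-\partial_R\Tr(\rp^R)|_{R=1}$; differentiating Eq.~\eqref{eq:prop_renyip} termwise at $R=1$ gives the $\rp$-analogue of Eq.~\eqref{eq:vn}, $S_1(\rp)=-\sum_{E}P_z(E)\log\big(\sum_{C\in\vz}P_z(E+C)\big)$.

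The second step trades the loop sum for the single-replica partition function. Eq.~\eqref{eq:ZSM2_loop} states $\sum_{C\in\vz}P_z(E+C)=[p_z(1-p_z)]^{N/2}\,\tilde N_s^{-1}\,Z_{\rm SM_2}(\tK,E)$ with $\tK$ on the Nishimori line $e^{-2\tK}=p_z/(1-p_z)$. Substituting this and splitting the logarithm, the two $E$-independent terms combine and factor out of the disorder sum (again using $\sum_E P_z(E)=1$), yielding
\begin{align}
    S_1(\rp) = -\sum_{E\in\cV}P_z(E)\log Z_{\rm SM_2}(\tK,E) + \log\tilde N_s - \frac{N}{2}\log[p_z(1-p_z)].
\end{align}
The last two terms are smooth in $p_z$ and independent of the disorder $E$, so they form the ``unimportant additive constant'' of the statement and cannot produce a DIPT; this is Eq.~\eqref{eq:S1_p}, with the Nishimori relation inherited directly from Theorem~\ref{thm:sm_p}.

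I do not expect a serious obstacle here --- the content is essentially unpacking Theorem~\ref{thm:sm_p}. The one point that needs care (the ``hard part,'' such as it is) is that these manipulations interchange $R\to 1$ with the thermodynamic limit and with the exponentially large sums over $\cV$ and $\vz$; as in the derivation of Eq.~\eqref{eq:vn} this is done at a physicist's level of rigour, the implicit assumption being that the quenched free energy per site on the Nishimori line is well defined as $N\to\infty$, so the order of limits is immaterial for the singular (phase-transition) part. A secondary, purely presentational subtlety: had one instead differentiated the first line of Theorem~\ref{thm:sm_p} directly, one would land on the annealed-looking average $\big(\sum_E Z_{\rm SM_2}\log Z_{\rm SM_2}\big)/\big(\sum_E Z_{\rm SM_2}\big)$, which matches Eq.~\eqref{eq:S1_p} only via a Nishimori identity; going through Eq.~\eqref{eq:prop_renyip} avoids this detour.
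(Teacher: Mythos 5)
Your proof is correct and follows essentially the same route the paper takes: expand $\Tr(\rp^R)$ as $\sum_E P_z(E)\big(\sum_{C\in\vz}P_z(E+C)\big)^{R-1}$, take $-\partial_R|_{R=1}$ to get the $\rp$-analogue of Eq.~\eqref{eq:vn}, and substitute the loop-sum identity Eq.~\eqref{eq:ZSM2_loop}, exactly mirroring the App.~\ref{app:thmsm} derivation of Corollary~\ref{corollary:S1_b} with $x\leftrightarrow z$, $\vx\leftrightarrow\vz$, $\SM1\leftrightarrow\SM2$. Your observation about which form of Theorem~\ref{thm:sm_p} to differentiate (the unsymmetrized one, not the one with $1/2^{\dim\vz}$, which would naively produce an annealed-looking expression) is the right point of care and is also implicit in how the paper sets up Eq.~\eqref{eq:vn}.
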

\end{tcolorbox}
Again, combining Ref. \onlinecite{dennis2002topological} and Eq. \eqref{eq:S1_p}, we conclude that the phase-flip error decoding threshold of the code is the critical phase-flip error rate $p_{z}^\star(R)$ of the DIPT in the limit $R\rightarrow 1$.

\subsection{Tapestry of dualities of a general CSS code}
\label{sec:tapestry}

In this subsection, we discuss the tapestry (Fig. \ref{fig:duality1}) woven by the dualities among the statistical models originating from a general CSS code $\cal C$. 

First, there is a Kramers-Wannier-like high-low-temperature (HLT) duality between the non-random SM$_1$ and SM$_2$ obtained from the same CSS code $\cal C$ through ungauging. For bit-flip errors, Theorem \ref{thm:sm_b} shows that two $R$-replica statistical models with random couplings describe the same quantity $\Tr \rb^R$ associated with the $R$th R\'enyi entropy $S_R(\rb)$ of the error-corrupted mixed state $\rb$. As shown in the bottom left corner of Fig. \ref{fig:duality1}, these two statistical models are also dual to each other under an HLT duality, a generalization of the HLT duality that relates the non-random $\SM1$ and $\SM2$.  Similarly, for phase-flip errors, the $R$th R\'enyi entropy $S_R(\rp)$ of the mixed state $\rp$ is described by another pair of $R$-replica statistical models with random couplings as stated in Theorem \ref{thm:sm_p}. These two $R$-replica statistical models are also dual under an HLT duality, as shown in the bottom right corner of Fig. \ref{fig:duality1}. In addition to these HLT dualities, we find that there are extra dualities, dubbed the BPD (bit-phase-decoherence) dualities, that map the random $R$-replica statistical models associated with the bit-flip errors to those associated with phase-flip errors. In the following, we provide a general discussion and the physical intuition for these dualities. The full technical details of the proofs of these dualities are presented in the App. \ref{app:thmsm} and \ref{app:bpd}. 

Now, we discuss the HLT duality between the non-random SM$_1$ and SM$_2$. We provide below the physical intuition of this duality by comparing the high-temperature expansion of the partition of $\SM1$ and the low-temperature expansion of the partition function of $\SM2$. The partition function of the non-random SM$_1$ can be written as a high-temperature expansion:
\begin{align}
        Z_{\rm SM_1}(K) &= \sum_{\{\tau_i=\pm 1\}}\exp \left(K \sum_\mu \mathcal{O}^z_\mu[\tau]\right) \nonumber \\
        & = \sum_{\{\tau_i=\pm 1\}} (\cosh K)^N \prod_\mu  \left(1+ \mathcal{O}^z_\mu[\tau] \tanh K \right).
        \label{eq:ZSM1_clean}
\end{align}
Recall that $N$ is the number of qubits in the original CSS code, which equals the number of $\mathcal{O}^z_\mu[\tau]$ terms in $\SM1$. After we expand the product over $\mu$ and sum over all the spin configurations $\{\tau_i = \pm 1\}$, the non-vanishing contribution must come from the products of $\mathcal{O}^z_\mu[\tau]$ terms that equal identity. Based on Eq. \eqref{eq:a_dot_b} and Eq. \eqref{eq:def_Oz}, we know that $\prod_{\mu} \left(\mathcal{O}^z_\mu[\tau]\right)^{({\rm b}_j)_\mu} = 1$ for any $j$.  In fact, we can readily see that $\prod_{\mu} \left(\mathcal{O}^z_\mu[\tau]\right)^{C_\mu} = 1$ for a local set $C \in {\cal V}$ implies $C \in \vz$ (in the absence of any local logical operators in the code $\cal C$ which we've assumed). Hence, in the infinite system limit, we can write 
\begin{align}
    Z_{\rm SM_1}(K) \propto (\cosh K)^N \sum_{C \in \vz} \left( \tanh K \right)^{|C|}
    \label{eq:SM1_HTE}
\end{align}
with a proportionality constant only depending on the system size. Strictly speaking, the summation here should run over the space $\vx^\perp$ instead. Note that the space $\vx^\perp/\vz$ is associated with the logical-$Z$ operators of the code which are all non-local. Also, the number of logical-$Z$ operators $N_c$ obeys $N_c/N\rightarrow 0$ as $N\rightarrow \infty$. Hence, replacing $\sum_{C \in\vx^\perp}$ by $\sum_{C \in \vz}$ does not change the free energy density of SM$_1$ in the infinite system limit. 

For SM$_2$, we perform a low-temperature expansion of the partition function. A classical ground state of the SM$_2$ Hamiltonian $H_{{\rm SM}_2}$ Eq. \eqref{eq:SM2_Ham_clean} is the state with $\ttau_j = +1$ for all $j$. The ground state energy is $- N \tK$. The energy cost of a single spin flip $\ttau_j\rightarrow -1$ at the site $j$ is $ + 2\tK |{\rm b}_j|$. The low-temperature expansion is a summation over all possible spin flips on top of the classical ground state:
\begin{align}
    Z_{\rm SM_2}(\tK) \propto e^{ N\tK} \sum_{C \in \vz} e^{- 2\tK |C|},
     \label{eq:SM2_LTE}
\end{align}
where the proportionality constant is given by the degeneracy of the classical Hamiltonian $H_{{\rm SM}_2}$, which is independent of $\tK$ but related to the internal symmetry of $H_{{\rm SM}_2}$. By comparing Eq. \eqref{eq:SM2_LTE} and Eq. \eqref{eq:SM1_HTE}, we establish the HLT duality between the non-random SM$_1$ and SM$_2$ under the condition 
\begin{align}
    \tanh{K} = e^{-2\tK}. \label{eq:kw}
\end{align}
This condition relates a large (small) positive value of $K$ to a small (large) positive value of $\tK$, which is natural for an HLT duality. Another way to obtain this HLT duality is to apply Wegner's general approach for dualities in Ising-type models \cite{wegner1971duality}.

Now, we discuss the dualities among the random statistical models appearing in the decohered CSS code. In the case of bit-flip errors, as shown in Theorem \ref{thm:sm_b}, the $R$-replica $\SM1$ with rRC is dual to the $R$-replica $\SM2$ with iRC under an HLT duality. The duality relation between the coupling constants $K$ and $\tK$ is the same as Eq. \eqref{eq:kw} for the non-random case. The proof of the HLT duality between the two $R$-replica random statistical models generalizes the discussion above for the non-random case. The mathematical details of the proof are provided in App. \ref{app:thmsm}. For the case of phase-flip errors, a parallel analysis can be made for the two different $R$-replica random statistical models that describe $\tr \rp^R$, as shown in Theorem \ref{thm:sm_p}. These two statistical models are also dual to each other under a similar HLT duality. 

Next, we introduce the BPD dualities between the random statistical models associated with bit-flip errors and those associated with phase-flip errors. The statement of these BPD dualities is the following.

\begin{tcolorbox}
\begin{theorem}\label{thm:BPD}   
{\rm For $R = 2,3$, and $\infty$, the $R$-replica random statistical models that describe the R\'enyi entropy $S_R(\rb)$ caused by the bit-flip errors are dual to the statistical models that describe the R\'enyi entropy $S_R(\rp)$ caused by the phase-flip errors. We can summarize these BPD dualities as
}
\begin{align}
        {\rm Tr}(\rb^R) \propto {\rm Tr}(\rp^R), ~~~ {\rm for~}R=2,3, {\rm and~ } \infty,
\end{align}
{\rm (up to unimportant proportionality constants) when the error rates $0<p_{x,z}<\frac{1}{2}$ satisfy the duality relations}
\begin{align}
    \left[(1-p_x)^R + p_x^R\right]\left[(1-p_z)^R + p_z^R\right] = \frac{1}{2^{R-1}}. \label{eq:BPD_duality_condition_1}
\end{align}
\end{theorem}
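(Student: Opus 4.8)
The plan is to compute both $\operatorname{Tr}(\rho_{\rm b}^R)$ and $\operatorname{Tr}(\rho_{\rm p}^R)$ for $R = 2, 3, \infty$ directly as sums over error chains, using Eqs.~\eqref{eq:renyisymm_b} and \eqref{eq:renyisymm_p}, and show that after carrying out the sum over $E\in\mathcal V$ each of them factorizes into a product of a ``local'' per-qubit factor and a combinatorial factor that depends only on the vector spaces $\vx$ and $\vz$. The key observation is that for these special values of $R$ the inner sum $\sum_{C_{1,\dots,R}\in\vx} P_x(E+C_1)\cdots P_x(E+C_R)$ collapses to something tractable. For $R=\infty$ the R\'enyi entropy is governed by $\max_{[E]_x} \sum_{C\in\vx}P_x(E+C)$, i.e.\ by the single most probable error coset, so $\operatorname{Tr}(\rho_{\rm b}^\infty)\propto \max_{[E]}(\dots)$; for $R=2$ one has (from Eq.~\eqref{eq:2ndRenyi_rb}) the clean expression $\sum_{E\in\mathcal V}\sum_{C\in\vx}P_x(E)P_x(E+C)$, and $R=3$ is a mild extension of the same manipulation with two coset shifts $C_2, C_3$.

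First I would treat $R=2$ as the template. Writing $P_x(E)=\prod_\mu p_x^{E_\mu}(1-p_x)^{1-E_\mu}$ and $P_x(E+C)$ similarly, the product $P_x(E)P_x(E+C)$ is a product over qubits $\mu$ of a factor depending only on $(E_\mu, C_\mu)$: it equals $p_x^2$ or $(1-p_x)^2$ when $C_\mu=0$ (according to $E_\mu$), and $p_x(1-p_x)$ when $C_\mu=1$. Summing over $E\in\mathcal V$ freely (all $2^N$ configurations) turns every $C_\mu=0$ qubit into a factor $p_x^2+(1-p_x)^2$ and leaves $C_\mu=1$ qubits contributing $2p_x(1-p_x)$ (the factor $2$ from the two choices of $E_\mu$). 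Hence $\operatorname{Tr}(\rho_{\rm b}^2)\propto \sum_{C\in\vx}\big[p_x^2+(1-p_x)^2\big]^{N-|C|}\big[2p_x(1-p_x)\big]^{|C|}$, which up to an overall $E$-independent constant is $\big[p_x^2+(1-p_x)^2\big]^N \sum_{C\in\vx} w_x^{|C|}$ with $w_x = \tfrac{2p_x(1-p_x)}{p_x^2+(1-p_x)^2}$. The analogous computation for $\operatorname{Tr}(\rho_{\rm p}^2)$ gives $\big[p_z^2+(1-p_z)^2\big]^N \sum_{C\in\vz} w_z^{|C|}$ with the analogous $w_z$. Now invoke the no-local-logical-operator assumption and the vanishing code rate, exactly as in the derivation of Eq.~\eqref{eq:SM1_HTE}: up to subleading corrections $\sum_{C\in\vz}(\tanh K)^{|C|}$ is the high-temperature expansion of $\mathrm{SM}_1$ and $\sum_{C\in\vx}(\dots)^{|C|}$ is likewise a partition sum; the relation $\vx^\perp \approx \vz$ (and $\vz^\perp\approx\vx$) means that the combinatorial sum $\sum_{C\in\vx} w_x^{|C|}$ and $\sum_{C\in\vz} w_z^{|C|}$ are the \emph{same} function of a single variable once we match $w_x$ to the HLT-dual of $w_z$. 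Concretely, using Wegner/Kramers--Wannier duality one shows $\sum_{C\in\vx} w^{|C|} \propto \sum_{C\in\vz}\big(\tfrac{1-w}{1+w}\big)^{|C|}$ (up to an $E$-independent and $w$-smooth prefactor), so the two combinatorial sums agree when $w_z = \tfrac{1-w_x}{1+w_x}$, i.e.\ $\tfrac{2p_z(1-p_z)}{p_z^2+(1-p_z)^2} = \tfrac{(1-p_x)^2 - 2p_x(1-p_x)... }{\dots}$; simplifying, $\tfrac{1-w_x}{1+w_x} = \tfrac{(1-2p_x)^2}{2[p_x^2+(1-p_x)^2]}$, wait — rather $(1-p_x)^2-p_x^2$ over \dots — in any case the identity to check is purely algebraic and reduces to $\big[p_x^2+(1-p_x)^2\big]\big[p_z^2+(1-p_z)^2\big] = \tfrac12$, which is Eq.~\eqref{eq:BPD_duality_condition_1} for $R=2$.

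For $R=3$ the same strategy applies with the product $P_x(E+C_1)P_x(E+C_2)P_x(E+C_3)$; after shifting $E\to E+C_1$ we sum over $E$ freely, and each qubit $\mu$ contributes a factor determined by the triple $(C_{2,\mu}, C_{3,\mu})$ of $\mathbb Z_2$ values (four cases), giving a Boltzmann weight $\propto \big[p_x^3+(1-p_x)^3\big]^{N}$ times a coset sum over $\vx\oplus\vx$ with weight $w_x^{(3)}$ on the ``defect'' qubits. For $R=\infty$, $\operatorname{Tr}(\rho_{\rm b}^\infty)=\max_E\sum_{C\in\vx}P_x(E+C)$, and one checks that the maximizing coset is $[0]$ (the trivial coset) in the relevant regime, giving $\sum_{C\in\vx}P_x(C)\propto[(1-p_x)^\infty+\dots]$ — more carefully, $\big[(1-p_x)^R+p_x^R\big]^N \to (1-p_x)^N$ times a coset correction — so the same factorized form appears with $R=\infty$. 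In each case the BPD duality is the statement that the ``spin-model'' combinatorial factor $\sum_{C\in\vx} (w_x^{(R)})^{|C|}$ for bit-flip errors equals (up to smooth prefactors) $\sum_{C\in\vz}(w_z^{(R)})^{|C|}$ for phase-flip errors, which holds under the HLT-matching $w_x^{(R)} \leftrightarrow w_z^{(R)}$, and this matching condition simplifies to Eq.~\eqref{eq:BPD_duality_condition_1}.

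The main obstacle is twofold. First, one must verify that for $R=2,3,\infty$ — and \emph{only} these — the inner replica sum genuinely collapses to a single coset sum over $\vx$ (times a purely local factor), because the per-qubit weight for a given choice of the $(R-1)$-tuple $(C_{2,\mu},\dots,C_{R,\mu})$ must depend only on \emph{whether} each $C_{k,\mu}$ vanishes in a way that re-exponentiates into $w^{|C_k|}$-type factors with a \emph{single} effective $C = C_2+\dots+C_R$ (or the relevant combination); for $R\ge 4$ this fails because the weight genuinely couples the replicas and one cannot reduce to a single $\mathbb Z_2$ loop variable — this is the structural reason the BPD duality is special to $R=2,3,\infty$, and the proof must make this explicit rather than incidental. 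Second, one has to control the $N\to\infty$ replacements $\vx^\perp\rightsquigarrow\vz$ and the $E$-independent-prefactor bookkeeping (the $N_s$-type symmetry factors and the $[p_x(1-p_x)]^{-N/2}$-type factors) carefully enough to be sure the proportionality constants in Eq.~\eqref{eq:BPD_duality_condition_1} really are smooth in $p_{x,z}$ and do not secretly carry singular dependence; this is where appealing to the vanishing code rate and the established HLT duality (Eq.~\eqref{eq:kw} and the Wegner argument) does the real work. The final algebraic reduction of the matching condition to the symmetric form $[(1-p_x)^R+p_x^R][(1-p_z)^R+p_z^R]=2^{1-R}$ is then routine.
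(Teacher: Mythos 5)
Your proposal follows essentially the same path as the paper's Appendix~B argument: sum over $E\in\mathcal V$ per qubit, obtain a per-qubit Boltzmann weight $\lambda^{n_\mu}+\lambda^{R-n_\mu}$ with $n_\mu=\sum_{\alpha\geq 2}C_{\alpha,\mu}$ as an integer, resolve the constraint $C_\alpha\in\vx$ via a binary Fourier transform (the paper's Lemma with the indicator function $\delta(C\in\vx)=2^{-\dim\vx^\perp}\sum_{\tilde C\in\vx^\perp}(-1)^{C\cdot\tilde C}$, which is your ``Wegner/Kramers--Wannier'' step), substitute $\vx^\perp\rightsquigarrow\vz$ using the vanishing-code-rate assumption, and then match the resulting per-qubit weight against the direct $\vz$-sum for $\rp$. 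Your $R=2$ computation is correct and reproduces the condition $[(1-p_x)^2+p_x^2][(1-p_z)^2+p_z^2]=1/2$.

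The one place worth tightening is your diagnosis of why $R=2,3,\infty$ are special. You attribute the $R\geq 4$ failure to the statement that one ``cannot reduce to a single $\mathbb Z_2$ loop variable,'' but that is not the structural reason. For every $R$, the Fourier transform happily converts the sum over $(C_2,\dots,C_R)\in\vx^{\,R-1}$ into a sum over $(\tilde C_2,\dots,\tilde C_R)\in\vz^{\,R-1}$; the obstruction lies in the per-qubit matching itself. After the transform, the transformed bit-flip weight at a qubit where $\tilde n_\mu=\sum_\alpha\tilde C_{\alpha,\mu}$ (integer) is
\begin{align}
\frac{1+(-1)^{\tilde n_\mu}\lambda_x}{1+\lambda_x}\left(\frac{1-\lambda_x}{1+\lambda_x}\right)^{\tilde n_\mu},
\end{align}
while the direct phase-flip weight is $\bigl(\lambda_z^{\tilde n_\mu}+\lambda_z^{R-\tilde n_\mu}\bigr)/(1+\lambda_z^R)$. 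The duality requires equality for every $\tilde n_\mu\in\{1,\dots,R-1\}$. Both sides are invariant under $\tilde n\leftrightarrow R-\tilde n$? Not the left side --- and that is exactly the point: for $R=2$ there is a single constraint ($\tilde n=1$); for $R=3$, $\tilde n=1$ and $\tilde n=2$ produce the same equation; for $R\geq4$, e.g.\ $\tilde n=1$ and $\tilde n=3$ give inequivalent constraints on the one unknown $\lambda_z$, so the system is overdetermined and generically has no solution. Your $R=3$ sketch (``defect weight'' $w_x^{(3)}=\tfrac{\lambda+\lambda^2}{1+\lambda^3}$ on qubits where $(C_{2,\mu},C_{3,\mu})\neq(0,0)$) is in fact right, but the duality comparison still runs over $\vz\oplus\vz$, not a single coset, and you should verify explicitly that the two weight values (for $\tilde n=1$ and $\tilde n=2$) give the same matching equation; it happens that they do, but it is not automatic from ``re-exponentiation into $w^{|C|}$-type factors.'' For $R=\infty$, your trivial-coset-dominance argument is the right idea, but one should prove the bound $Z_{\SM1}(K,E)\leq Z_{\SM1}(K,0)$ for all $E$ (the paper does this via the high-temperature expansion with $(-1)^{E\cdot\tilde C}$ phases and the triangle inequality) rather than asserting that the maximizer is $[0]$. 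With these two refinements your outline becomes a complete proof matching the paper's.
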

\end{tcolorbox}
In particular, in $R\to \infty$ limit, the duality condition reduces to
\begin{equation}
    (1-p_x)(1-p_z) = \frac{1}{2}
\end{equation}
for the range of error rates $p_{x,z} \in (0,1/2)$ under consideration. Note that duality relation Eq. \eqref{eq:BPD_duality_condition_1} maps the weak bit-flip decoherence ($p_x$ close to 0) to the strong phase-flip decoherence ($p_z$ close to 1/2), and vice versa. Hence, the BPD duality is a ``strong-weak" duality for the strength of decoherence.
We provide the mathematical details of the proof of this duality in App. \ref{app:bpd}. Here, we sketch the general idea that leads to these BPD dualities. Based on Theorem \ref{thm:sm_b} and \ref{thm:sm_p}, we can use the $R$-replica SM$_1$ with rRC to describe $S_R(\rb)$ and the $R$-replica SM$_1$ with iRC to describe $S_R(\rp)$. Integrating out the randomness, rRC or iRC, results in the interactions between the $R$ replicas of $\SM1$ (see App. \ref{app:intdisorder}). For a general $R$, the inter-replica interactions mediated by rRC differ from those mediated by iRC. However, the cases with $R=2,3$ are exceptions. The inter-replica interactions mediated by rRC and iRC are identical when the error rates $p_{x,z}$ satisfy the relations Eq. \eqref{eq:BPD_duality_condition_1}. Therefore, there are BPD dualities between the $R$-replica random statistical models for $S_R(\rb)$ and $S_R(\rp)$ when $R=2,3$.

We can also understand the BPD duality with $R=2$ using the HLT duality between the non-random SM$_1$ and SM$_2$. Combining Eqs. \eqref{eq:2ndRenyi_rb}, \eqref{eq:ZSM1_dis} and \eqref{eq:ZSM1_loop}, we can integrate out the randomness $E$ (see App. \ref{app:intdisorder}) and obtain
\begin{align}
    \Tr \rb^2 \propto Z_{\rm SM{_1}} (K') {~~\rm with~~ } K'= {\rm arctanh}\left((1-2p_x)^2\right),
    \label{eq:2ndRenyi_rb_clean_SM1}
\end{align}
where $Z_{\rm SM{_1}}(K') $ is the partition function of the non-random SM$_1$. Essentially, integrating out the real random-coupling ``renormalizes" the coupling constant from $K = -\frac{1}{2}\log \frac{p_x}{1-p_x}$ to $K'$. Similarly, we can show that
\begin{align}
    \Tr \rp^2 \propto Z_{\rm SM{_2}} (\tK') {~~\rm with~~ } \tK' = {\rm arctanh}\left((1-2p_z)^2\right).
    \label{eq:2ndRenyi_rp_clean_SM2}
\end{align}
The BPD duality with $R=2$ shown in Theorem \ref{thm:BPD} is equivalent to the HLT duality between the non-random statistical models in Eq. \eqref{eq:2ndRenyi_rb_clean_SM1} and \eqref{eq:2ndRenyi_rp_clean_SM2}.

The BPD duality at $R\rightarrow \infty$ can also be justified using the HLT duality between the non-random SM$_1$ and SM$_2$. From Eq. \eqref{eq:thm1}, we notice that $\Tr \rb^R \Big|_{R\rightarrow \infty}$ is dominated by the randomness pattern $E\in {\cal V}$ that maximizes the partition function $Z_{\SM{1}}(K, E)$. A high-temperature expansion similar to Eq. \eqref{eq:SM1_HTE} implies that the maximum of $Z_{\SM{1}}(K, E)$ can be reached by the trivial randomness pattern $E=0$. Therefore, the behavior of $\Tr \rb^R \Big|_{R\rightarrow \infty}$ is dominated by (the $R$th power of) the  partition function $Z_{\SM{1}}(K)^R$ of the non-random $\SM1$. Similarly, $\Tr \rp^R \Big|_{R\rightarrow \infty}$ is dominated by the partition function $Z_{\SM{2}}(\tK)^R$ without randomness. Therefore, the HLT duality between the non-random $\SM{1}$ and $\SM{2}$ leads to the BPD duality between the R\'enyi entropies $S_R(\rb)$ and $S_R(\rp)$ in the limit $R\rightarrow \infty$.

\subsection{Discussion on the interpretation of DIPTs and the monotonicity of critical error rates}
\label{sec:differentR}

In this subsection, we briefly discuss the physical interpretation of DIPTs with different values of $R$. Also, we present a conjecture on the monotonicity of the critical error rates $p^\star_{x,z}(R)$ of the DIPTs as functions of the R\'enyi index $R$. 

As discussed in the previous subsections, the critical error rates $p^\star_{x,z}$ of the DIPT in the limit $R\rightarrow 1$ matches the error threshold for the decodability of the logical information in the error-corrupted CSS code. For the DIPT with $R=2$, in addition to the singularity of the 2nd R\'enyi entropy, we can also view it from the perspective of a quantum phase transition in the doubled Hilbert space, which we explain in the following. 

The discussions of bit-flip errors and phase-flip errors are completely parallel. Let's take the former as an example. Using the Choi-Jamio{\l}kowski isomorphism \cite{CHOI1975,JAMIOLKOWSKI1972}, we can map the mixed-state density matrix $\rb$ to a pure state $\kket{\rb}$, called the Choi representation of $\rb$, in the doubled Hilbert space. $\kket{\rb}$ is related to its error-free counterpart $\kket{\rho_0}$ via
\begin{align}
    \kket{\rb} \propto \left(
    e^{\tK \sum_\mu X_\mu \otimes X_\mu} \right)\kket{\rho_0},
\end{align}
with $\tK = {\rm arctanh}\left(\frac{p_x}{1-p_x}\right)$. One can show that $\kket{\rb}$ is a ground state of the following parent Hamiltonian in the doubled Hilbert space
\begin{align}
    H_{\rm b}^D = H_{\rm css} \otimes \mathds{1} +  \mathds{1}  \otimes H_{\rm css} + 2 \sum_j e^{-2\tK \sum_\mu ({\rm b}_j)_\mu X_\mu \otimes X_\mu}, 
    \label{eq:Ham_doubled}
\end{align}
where $H_{\rm css}$ is the CSS-code Hamiltonian given in Eq. \eqref{eq:CSS_Ham}. The construction of this parent Hamiltonian generalizes the construction for the decohered cluster states studied in Ref. \onlinecite{LeeYouXu2022}. A similar construction for the decohered 2D toric code was given in Ref. \onlinecite{BaoFanError_Field_Double}. App. \ref{app:choi} contains the details of the construction of Eq. \eqref{eq:Ham_doubled} in the most general setting. An interesting property is that $H_{\rm b}^D$ is frustration-free up to additive constants. In other words, we can decompose this Hamiltonian into a sum of positive-semidefinite local terms, and each of such local terms annihilates the ground state $\kket{\rb}$. 

Let's analyze the phase diagram of this model Eq. \eqref{eq:Ham_doubled} in the doubled Hilbert space. When $p_x$ is close to 0, we can treat the last term of $H_{\rm b}^D $ as a perturbation.  Since $H_{\rm css}$ by itself is a gapped Hamiltonian, the ground state $\kket{\rb}$ for small $p_x$ is smoothly connected to $\kket{\rho_0}$ which contains two decoupled copies of the same CSS code (in the doubled Hilbert space). Another regime is represented by the limit $p_x=1/2$. The ground state at $p_x=1/2$ is a stabilizer state with stabilizers,
\begin{align}
    A_i[X]\otimes \mathds{1},~\mathds{1} \otimes  A_i[X],~ X_\mu \otimes X_\mu,~\text{and}~B_j[Z]\otimes B_j[Z]. \nonumber 
\end{align}
This stabilizer state at $p_x =1/2$ is equivalent to a single copy of the original CSS code $\cal C$ (embedded inside the doubled Hilbert space). If two copies of the original CSS codes represent a quantum phase of matter different 
from a single copy, there must be a quantum phase transition between them. This quantum phase transition must occur at the critical error rate $p_x^\star(R=2)$ of the DIPT with $R=2$. Note that this quantum phase transition lives in the same spatial dimension as the statistical model but has non-trivial temporal dynamics in the doubled Hilbert space. The spatial correlation of this quantum phase transition can be equivalently captured by the random statistical models with $R=2$ discussed in previous subsections.  For example, the equal-time correlation function $\bbra{\rb} O_1 \otimes O_2 \kket{\rb} \propto \Tr\left(\rb O_1 \rb O_2^{\mathsf T}\right)$ can be translated into a correlation function of the 2D random statistical models with $R=2$. Here, $O_2^{\mathsf T}$ is the transpose of the operator $O_2$. When the statistical models for $R=2$ exhibit critical behavior, the same critical behavior appears in the spatial correlation of the quantum system in the doubled Hilbert space, indicating a quantum phase transition.

The physical implication of the DIPTs with a general $R$ (beyond the singularities of the R\'enyi entropies) and the relations among the DIPTs with different $R$'s are both interesting future research directions. Pertaining to the latter, we would like to present the following conjecture.

\vspace{0.2cm}
\noindent {\bf Conjecture:} The critical error rates $p_{x,z}^\star(R)$ are both monotonically increasing functions of $R$.
\vspace{0.2cm}

\noindent Here, we've assumed that for every error type, bit-flip or phase-flip, and for every R\'enyi index $R$, there is a unique DIPT in the range of error rates $p_{x,z} \in (0,1/2)$. The associated critical error rates are the subject of the conjecture above. 

A piece of evidence for this conjecture is given by the relation
\begin{align}
    p_{x,z}^\star(R=2)< p_{x,z}^\star(R\rightarrow \infty),
\end{align}
which we prove in the following. Take the case of bit-flip errors as an example. As discussed in Sec. \ref{sec:tapestry}, the behavior of $S_R$ for both $R=2$ and $R\rightarrow \infty$ are related to the non-random SM$_1$. The ``renormalized" coupling constant in the $R=2$ case is $K'={\rm arctanh}\left((1-2p_x)^2)\right)$ (see Eq. \eqref{eq:2ndRenyi_rb_clean_SM1}) and the coupling constant in the $R\rightarrow\infty$ case is $K=\frac{1}{2}\log\frac{1-p_x}{p_x}$. The fact $K>K'$ for $p_x\in \left(0,\frac{1}{2}\right)$ implies $p_{x}^\star(R=2)< p_{x}^\star(R\rightarrow \infty)$.

As we will see in Sec. \ref{sec:em-symmetric CSS}, for the class of CSS codes with an $em$ symmetry, the critical error rates $p_{x,z}^\star(R)$ for $R=2,3,\infty$ are exactly fixed (see Eq. \eqref{eq:self-dual error rate}) by the dualities (assuming the uniqueness of DIPT for each $R$ in the range $p_{x,z} \in (0,1/2)$). These exact critical error rates also obey our conjecture.  

More evidence of our conjecture will be provided when we discuss specific examples, including the 3D toric code, the X-cube model, and the 2D toric code. In particular, we will see that our conjecture is consistent with the error thresholds, i.e., $p_{x,z}^\star(R\rightarrow 1)$, of these models obtained in previous literature.

If true, our conjecture establishes a general relation between DIPTs with different R\'enyi indices $R$. It would provide an interesting method to upper bound the error threshold $p_{x,z}^\star(R\rightarrow 1)$ using $p_{x,z}^\star(R=2)$. Note that studying the limit $R\rightarrow 1$ requires the averaging over quenched disorders/randomness in the corresponding statistical models, while the $R=2$ case is essentially captured by the same statistical model (with a renormalized coupling) but without randomness. We expect the latter to be generally simpler to analyze than the former.

\subsection{Example: 3D toric code} \label{sec:3dtoric}

As a pedagogical example, let us apply our formalism to the decohered toric code model. The 2D toric code belongs to a special class of CSS codes with an $em$ symmetry, which we will focus on in Sec. \ref{sec:em-symmetric CSS}. Here, we consider the random statistical models that describe the 3D toric code with bit-flip and phase-flip errors.

The 3D toric code is a CSS code defined on a cubic lattice where the qubits, labeled by $\mu$, are located on the edges of the lattice. As shown in Fig. \ref{fig:3dtoric}, the 3D toric code has an $X$-type stabilizer $A_i[X]$ for every site $i$ and a $Z$-type stabilizer $B_p[Z]$ for every plaquette $p$. Each $A_i[X]$ is a product of the Pauli-$X$ operators on the six edges connected to the site $i$, while each $B_p[X]$ is a product of the Pauli-$Z$ operators on the four edges that belong to the plaquette $p$:
\begin{align}
   & A_i = \prod_{\mu\in \vcenter{\hbox{\includegraphics[width=1em]{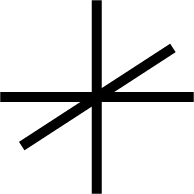}}}_i    } X_\mu,~~~~  B_p = \prod_{\mu\in \square_p}Z_\mu,
\end{align}

\begin{figure}[h!]
    \centering
    \includegraphics[width=0.9\linewidth]{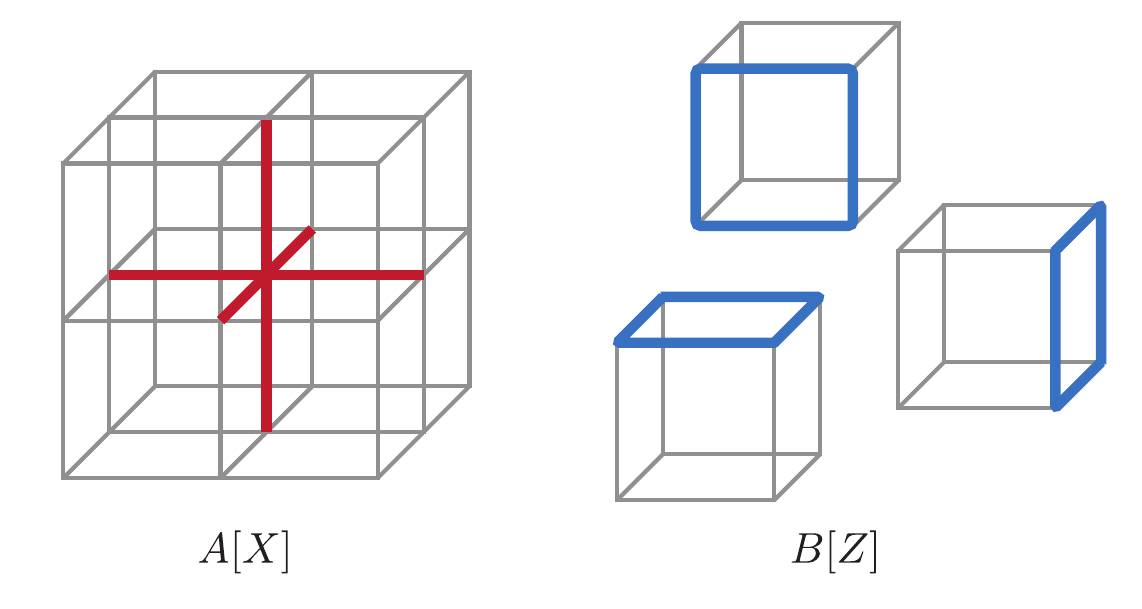}
    \caption{Stabilizers of the 3D toric code. Qubits live on the edges of the lattice. 
    An $X$-type stabilizer (red) is a product of 6  Pauli-$X$ operators on the edges connected to the same site. A $Z$-type stabilizer (blue) is a product of 4  Pauli-$Z$ operators on the edges of a plaquette.}
    \label{fig:3dtoric}
\end{figure}

Now we construct $\SM{1}$ following the recipe in Sec. \ref{sec:sm_construction}. We introduce a classical spin $\tau_i$ on each site $i$ (where the $A_i[X]$ stabilizer is located). Eq. \eqref{eq:def_Oz} tells us that the spin interaction ${\cal O}^z_\mu$ is a product of $\tau_i$'s whose associated stabilizers $A_i$ overlap with the edge $\mu$. Hence, we have 
\begin{align}
    {\cal O}^z_{\mu=\langle i,i'\rangle} = \tau_i\tau_{i'},
\end{align}
where $\mu=\langle i,i'\rangle$ means $\mu$ is the edge between the nearest-neighbor pair of sites $i$ and $i'$ (see Fig. \ref{fig:3dtoricsm} (a)). The classical Hamiltonian of the non-random $\SM1$ is then given by 
\begin{align}
    H_{\SM1} = -K\sum_{\braket{i,i'}}\tau_i\tau_{i'},
\end{align}
which is exactly the Hamitlonian of the 3D classical Ising model (with the nearest-neighbor interaction). 

For $\SM2$, we introduce a classical spin $\tilde{\tau}_p$ for each plaquette $p$. Each spin interaction $\cO^x_\mu$ is the product of the four spins $\tilde{\tau}_p$ on the four plaquettes bordering the edge $\mu$. Hence, the Hamiltonian of $\SM2$ is given by
\begin{align}
    H_{\rm SM_2} = -\tilde{K}\sum_\mu ~ \prod_{p~{\rm s. t.}~\mu\in \partial p} \tilde{\tau}_p,
\end{align}
where ``$\prod_{p~{\rm s. t.}~\mu\in \partial p}$" represents the product over the plaquette $p$ such that the edge $\mu$ belongs to the boundary $\partial p$ of the plaquette. As shown in Fig. \ref{fig:3dtoricsm} (b), each spin interaction term in $H_{\rm SM_2}$, when depicted on the dual cubic lattice, involves four classical spins on the four dual edges that form a dual plaquette. Hence, $H_{\rm SM_2}$ describes a 3D classical $\mathbb Z_2$ gauge theory on the dual lattice. It is well-known that the non-random 3D Ising model and the 3D classical $\mathbb Z_2$ gauge theory are dual to each other under an HLT duality (or Kramers-Wannier duality)\cite{wegner1971duality}.

\begin{figure}[tb]
    \centering
    \includegraphics[width=0.9\linewidth]{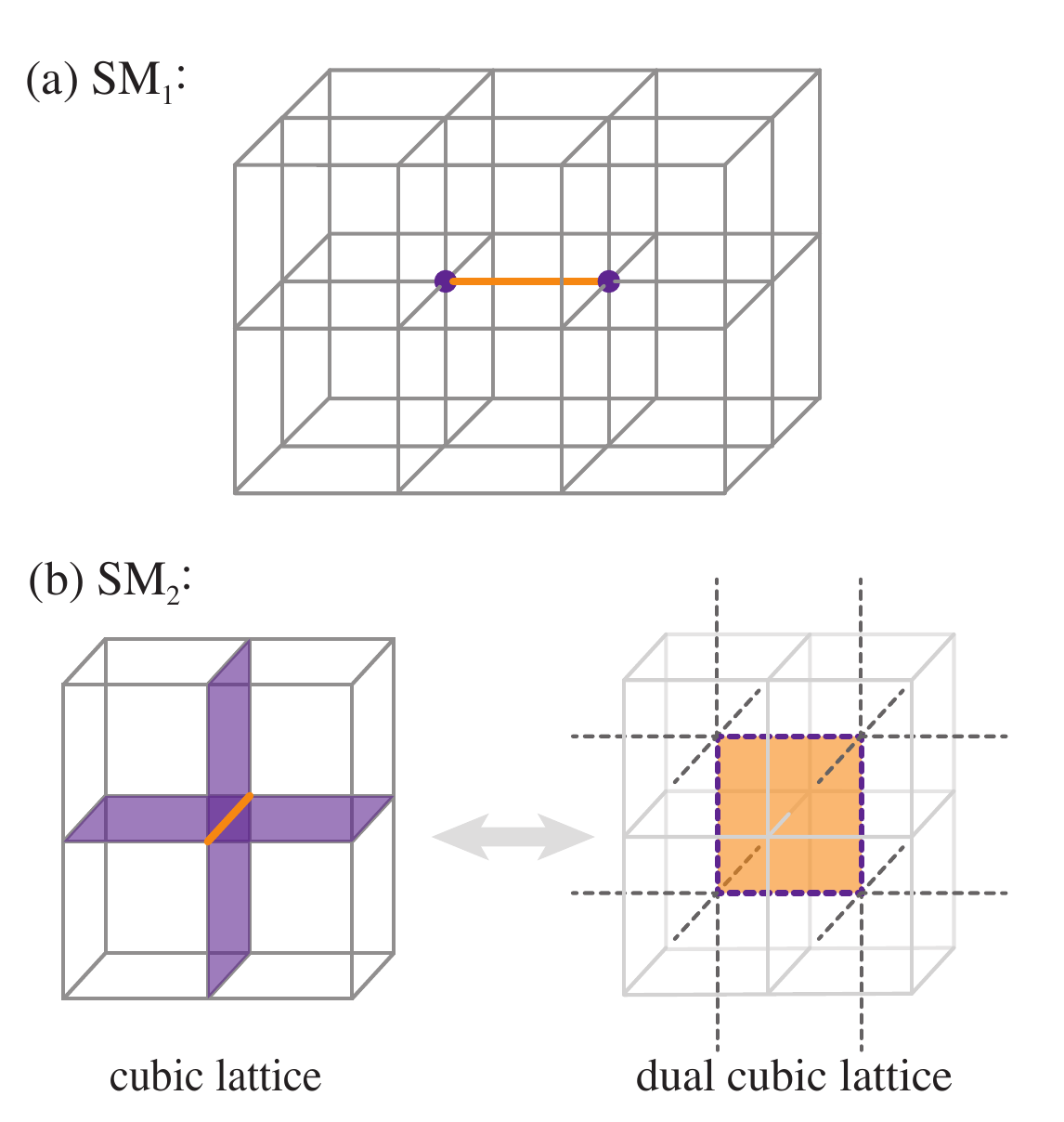}
    \caption{Statistical models associated with the 3D toric code: (a) The 3D Ising model has a spin on each site of the cubic lattice and nearest-neighbor interactions (depicted as the two spins (purple dots) ``coupled" by the orange edge). (b)  Classical 3D $\mathbb{Z}_2$ gauge theory on the dual cubic lattice has a spin per dual edge (dashed lines). Each spin interaction involves the four classical spins on the boundary of a dual plaquette (orange).}
    \label{fig:3dtoricsm}
\end{figure}

In the following, we discuss the DIPTs in the decohered 3D toric code using the multi-replica $\SM1$ and $\SM2$ with random couplings. We will focus on the cases with $R\rightarrow 1$ and $R= 2,3,\infty$. Note that the rRC in the Ising model is also known as the random-bond disorder, while the rRC in the 3D $\mathbb Z_2$ gauge theory is often referred to as the random-plaquette disorder. 

For $R\rightarrow 1$, it follows from Corollary \ref{corollary:S1_b} and \ref{corollary:S1_p} that, the von Neumann entropy $S_1(\rb)$ of the error-corrupted mixed state $\rb$ is the quenched-disorder-averaged free energy of the 3D random-bond Ising model, while $S_1(\rp)$ is quenched-disorder-averaged free energy of the 3D random-plaquette $\mathbb Z_2$ gauge theory. The respective disorder distribution satisfies the Nishimori condition. As mentioned earlier, these quenched-disordered statistical models are exactly the ones previously used to study the bit-flip and phase-flip error thresholds $p^\star_{x/z}(R\rightarrow 1)$ for the 3D toric code \cite{dennis2002topological, Preskill_3dToricCodeThreshold,ohno2004phase}. The phase diagram of the 3D random-bond Ising model (in the $R\rightarrow 1$) was studied numerically in Refs. \onlinecite{ozeki1987phase, ito1999non}. The 3D random-plaquette $\mathbb Z_2$ gauge theory was numerically investigated in Refs. \onlinecite{Preskill_3dToricCodeThreshold,ohno2004phase}. These numerical studies obtained the following error thresholds for the 3D toric code:
\begin{align}
    p_x^\star(R\rightarrow1) \approx 0.233,~~
    p_z^\star(R\rightarrow1) \approx 0.033. 
\end{align}

For $R=2$, as described around Eqs. \eqref{eq:2ndRenyi_rb_clean_SM1} and \eqref{eq:2ndRenyi_rp_clean_SM2}, the $2$-replica random statistical models that describes the 2nd R\'enyi entropy $S_2(\rho_{\rm b/p})$ can be reduced to the non-random $\SM{1,2}$ with renormalized couplings $K'={\rm arctanh}((1-2p_x)^2)$ and $\tK'={\rm arctanh}((1-2p_z)^2)$. The critical point of $\SM1$, i.e. the 3D classical Ising model in this case, has been numerically studied (see Ref. \onlinecite{ferrenberg1991critical} for example), which allows us to extract the critical bit-flip error rate $p^\star_x(2)$ for the DIPT with $R=2$.  The critical phase-flip error rate $p^\star_z(2)$ is related to the $p^\star_x(2)$ via the BPD duality Eq. \eqref{eq:BPD_duality_condition_1}. Therefore, we conclude
\begin{align}
    p_x^\star(2) \approx 0.266,~~
    p_z^\star(2) \approx 0.099,
\end{align}
for the DIPTs with $R=2$.

For $R=3$, the $3$-replica disordered statistical models that describe the behaviors of $S_3(\rho_{\rm b/p})$ can be reduced to Askin-Teller-type statistical models, namely two coupled copies of $\SM{1,2}$, after the rRC or iRC are integrated out (see general discussions in App. \ref{app:intdisorder}). For bit-flip errors, $S_3(\rho_{\rm b})$ is captured by the 3D Askin-Teller model on the cubic lattice: 
\begin{equation}
    H_{\rm AT, b} = -K''\sum_{\langle ii'\rangle}\left(\tau_i^{(1)}\tau_{i'}^{(1)} + \tau_i^{(2)}\tau_{i'}^{(2)} + \tau_i^{(1)}\tau_{i'}^{(1)}\tau_i^{(2)}\tau_{i'}^{(2)}\right).
    \label{eq:H_ATb_3dtc}
\end{equation}
Here, the coupling constant $K''$ is given by the relation $\tanh K'' + (\tanh K'')^{-1} = (1-2p_x)^{-2}+1$. $\tau_i^{(1)}$ and $\tau_i^{(2)}$ are the classical $\mathbb Z_2$ variables on the two copies of cubic lattices respectively. Previous numerical studies showed that this model exhibits a first-order transition between the paramagnetic and the ferromagnetic phase at the critical coupling $K_c^{''}  \approx 0.157$ \cite{arnold1997monte}.  This result implies that, with $R=3$, the DIPT driven by bit-flip errors is a first-order transition occurring at the critical bit-flip error rate 
\begin{align}
    p_x^\star(3) \approx 0.288.
\end{align}

For phase-flip errors, $S_3(\rho_{\rm p})$ is described by a 3D Askin-Teller-type $\mathbb Z_2$ gauge theory with the Hamiltonian 
\begin{equation}
    H_{\rm AT, p} = -\tK''\sum_\mu \left( \prod_{ \mu \in \partial p } \tilde{\tau}^{(1)}_p + \prod_{\mu \in \partial p} \tilde{\tau}^{(2)}_p + \prod_{\mu \in \partial p} \tilde{\tau}^{(1)}_p\tilde{\tau}^{(2)}_p\right). 
\end{equation}
To the best of our knowledge, this model has not been studied before. The BPD duality implies that this model is dual to the model in Eq. \eqref{eq:H_ATb_3dtc}. Therefore, the DIPT with $R=3$ driven by phase-flip errors is also first-order. It occurs at the critical phase-flip error rate 
\begin{align}
    p_z^\star(3) \approx 0.135.
\end{align}

As shown in Sec. \ref{sec:tapestry}, in the limit $R\rightarrow \infty$, the behavior of the R\'enyi entropies $S_\infty({\rho_{\rm b/p}})$ are effectively described by the non-random $\SM{1,2}$ with the couplings $K = \frac{-1}{2}\log \frac{p_x}{1-p_x}$ and $\tK = \frac{-1}{2}\log \frac{p_z}{1-p_z}$. Hence, we can obtain the critical error rates:
\begin{align}
    p_x^\star(\infty) \approx 0.391, ~~ p_z^\star(\infty) \approx 0.179. 
\end{align}

We've discussed the critical error rates of the DIPTs with several different $R$'s in the decohered 3D toric code. The results are summarized in Table \ref{tab:3dtoric}. As we can see, these critical error rates agree with our conjecture that $p_{x,z}^\star(R)$ monotonically increase as functions of $R$.

\begin{table}[]
    \centering
    \begin{tabular}{c|c|c}
        $R$ & $p_x^\star(R)$ & $p_z^\star(R)$ \\
        \hline
         1 & 0.233& 0.033\\
         2 & 0.266& 0.099 \\
         3 & 0.288& 0.135 \\
         $\infty$ & 0.391 & 0.179 
    \end{tabular}
    \caption{Critical error rates of the DIPTs in the decohered 3D toric code}
    \label{tab:3dtoric}
\end{table}

\subsection{Example: X-cube model} \label{sec:xcube}

Our second example is the X-cube model, which is a CSS code that exhibits fracton topological order \cite{Vijay_2016}. The X-cube model is defined on a 3D cubic lattice with a qubit located on each edge of the lattice.  Each $X$-stabilizer involves four nearby edges forming a cross, while each $Z$-stabilizer involves twelve edges of a unit cube (see Fig. \ref{fig:xcube}):
\begin{align}
    A_{i,a} =  \prod_{\mu\in +_{i,a}} X_\mu, ~~B_c = \prod_{\mu\in \cube_c}Z_\mu,
\end{align}
where $+_{i,a}$ denotes a cross formed by the four edges centered at site $i$, and $a=x,y,z$ labels the perpendicular direction to the cross.  $\cube_c$ denotes the cube labeled by $c$.

In $\SM1$, there should be one classical spin per $A_{i,a}$ stabilizer. Hence, we introduce three colors of classical spins $\tau_{i,a=x,y,z}$ for each site $i$ on the cubic lattice and represent them as the purple arrows in Fig. \ref{fig:xcubesm} (a). 
The ungauging procedure in Sec. \ref{sec:sm_construction} produces the following $\SM1$ Hamiltonian, which describes a tricolor Ising model with the four-spin interactions
\begin{align}
    H_{\rm SM_1} & = -K\left(\sum_{\braket{ii'}\parallel\hat{z}}\tau_{i,x}\tau_{i,y} \tau_{i',x} \tau_{i',y}  \right. 
     \nonumber \\ 
    & \left.
+\sum_{\braket{ii'}\parallel\hat{x}}\tau_{i,z}\tau_{i,y}\tau_{i',z}\tau_{i',y} + \sum_{\braket{ii'}\parallel\hat{y}}\tau_{i,x}\tau_{i,z}\tau_{i',x}\tau_{i',z} \right).
\label{eq:xcubesm1a}
\end{align}
For each nearest-neighbor pair of sites $\langle i,i'\rangle$, only the spins with their colors different from the direction of the edge $\langle i,i'\rangle$ are involved in the nearest-neighbor four-spin interaction (see Fig. \ref{fig:xcubesm} (a)).

\begin{figure}[tb]
    \centering
    \includegraphics[width=\linewidth]{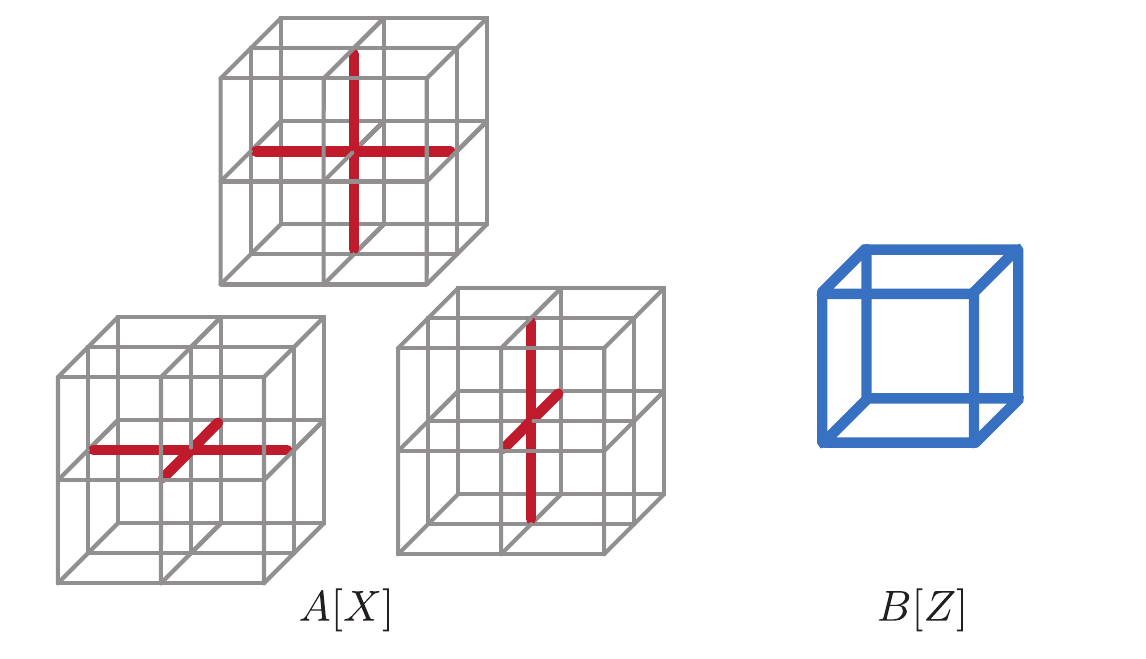}
    \caption{$X$-type stabilizers (red) and $Z$-type stabilizers (blue) of the X-cube model }
    \label{fig:xcube}
\end{figure}

This tricolor Ising model has a classical gauge symmetry that changes $\tau_{i,a}\to -\tau_{i,a}$ independently for each site $i$. A change of variables effectively imposes a gauge fixing:
\begin{align}
    \eta_{i,x} \equiv \tau_{i,y}\tau_{i,z}, ~~~~ \eta_{i,y} \equiv \tau_{i,x}\tau_{i,z},
\end{align}
where $\eta_{i,x/y}=\pm1$ are gauge invariant classical $\Z_2$ variables. Using these new variables, we can rewrite the Hamiltonian as an 
anisotropic 3D Ashkin-Teller model \cite{johnston2020four,song2022optimal},
\begin{align}
    H_{\SM1} = -K & \left(\sum_{\braket{ii'}\parallel\hat{z}}\eta_{i,x}\eta_{i,y}\eta_{i',x}\eta_{i',y} \right. \nonumber \\
    & \left.
    + \sum_{\braket{ii'}\parallel\hat{x}}\eta_{i,x}\eta_{i',x} + \sum_{\braket{ii'}\parallel\hat{y}}\eta_{i,y}\eta_{i',y} \right). \label{eq:xcubesm1b}
\end{align}
Note we can also directly obtain this statistical model if we remove the $A_{i,z}$ stabilizers from the X-cube model. Removing the $A_{i,z}$ stabilizers does not change the nature of this CSS code because $A_{i,z}$ can be expressed using the remaining stabilizers, i.e., $ A_{i,z}= A_{i,x} A_{i,y}$. 

\begin{figure}[tb]
    \centering
    \includegraphics[width=\linewidth]{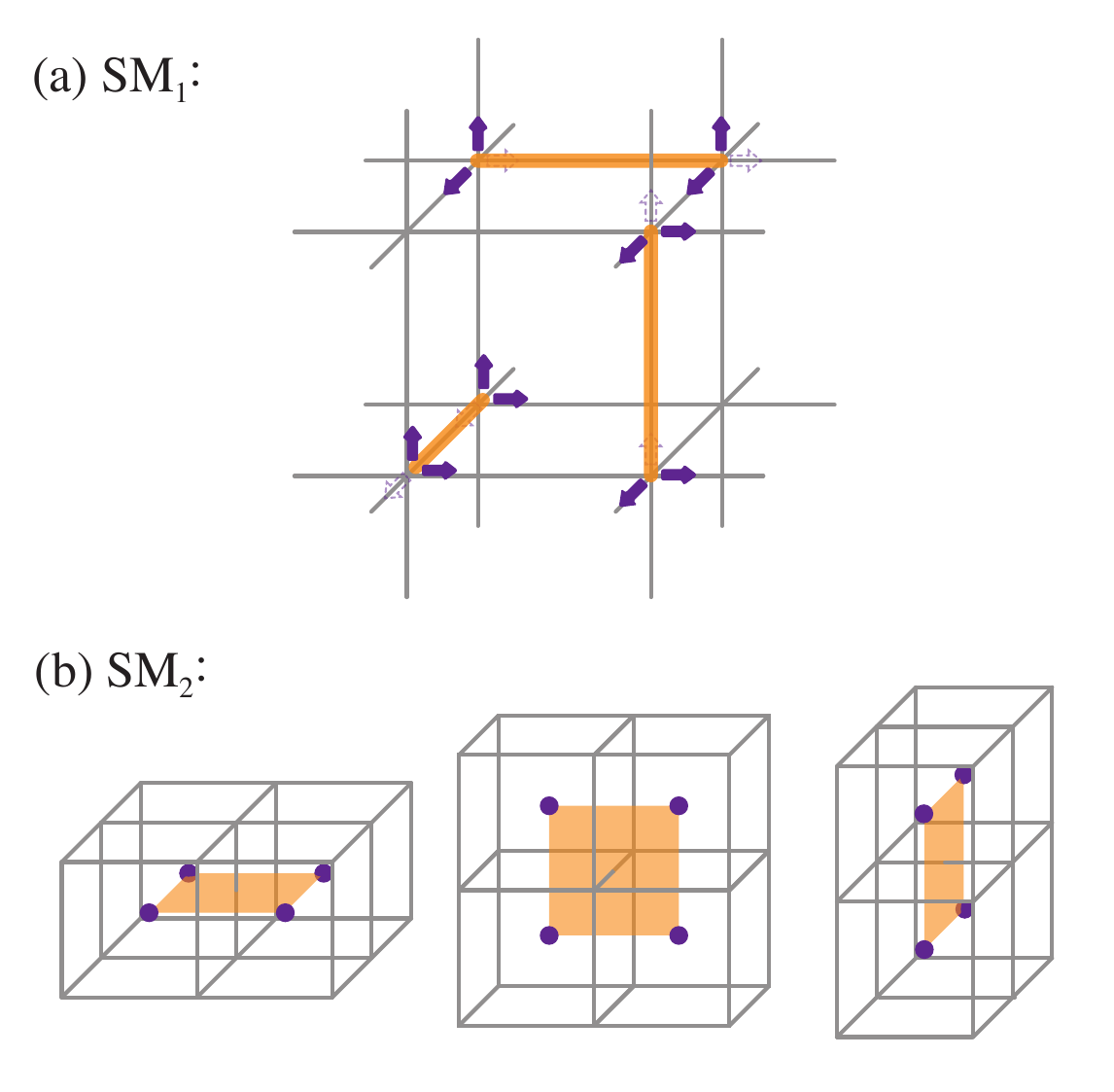}
    \caption{Two statistical models originate from the X-cube model. (a) $\SM1$ is a model with three colors of classical spins (depicted by the purple arrows along the three directions) per site on a cubic lattice. Each nearest-neighbor interaction (orange) involves the four spins on the two neighboring sites whose colors (arrow directions) are perpendicular to the edge connecting the neighboring sites. (b) $\SM2$ is a plaquette Ising model on the dual cubic lattice. There is one classical spin per dual site. Each spin interaction term involves the four spins on the four corners of a dual plaquette. }
    \label{fig:xcubesm}
\end{figure}

For $\SM2$, we introduce a classical spin $\ttau_c$ to each cube $c$. The cube label $c$ can also be viewed as the site index for the dual cubic lattice. The Hamiltonian of $\SM2$ is given by
\begin{equation}
    H_{\rm SM_2} = -\tilde{K} \sum_{\tilde{\square}} \prod_{c\in \tilde{\square}} \tilde{\tau}_c \label{eq:xcubesm2},
\end{equation}
where $\tilde{\square}$ sums over the dual plaquettes $\tilde{\square}$ on the dual lattice and ``$c\in \tilde{\square}$" indicates that $c$ is a dual site on one of the corners of $\tilde{\square}$. $H_{\rm SM_2}$ represents a 3D plaquette Ising model with a four-spin interaction for each dual plaquette (see Fig. \ref{fig:xcubesm} (b)).

We are now ready to examine the DIPTs of the X-cube model. The non-random anisotropic Ashkin-Teller model (Eq. \eqref{eq:xcubesm1b}) and the 3D plaquette Ising model (Eq.\eqref{eq:xcubesm2}) were studied numerically in Ref. \onlinecite{mueller2014nonstandard}. First-order phase transitions were found at the critical couplings $K_c \approx 0.657$ and $\tilde{K}_c \approx 0.276$. We can convert these results into the critical error rates of the DIPTs with $R=2$ and $R=\infty$. The $R \rightarrow 1$ limit was numerically studied by Ref. \onlinecite{song2022optimal} as the error thresholds of the X-cube model. We summarize these results in Table \ref{tab:xcube}. These numerically obtained critical error rates agree with our conjectured monotonicity of $p^\star_{x,z}(R)$.

\begin{table}[t]
    \centering
    \begin{tabular}{c|c|c}
        $R$ & $p_x^\star$ & $p_z^\star$\\
        \hline
         1& 0.075 & 0.152\\
         2& 0.120 & 0.241\\
         $\infty$ & 0.212 & 0.365\\
    \end{tabular}
    \caption{Critical error rates for the X-cube model. The results for $R=2,\infty$ are derived from the study of the (non-random) anisotropic Ashkin-Teller model and the plaquette Ising model in Ref. \onlinecite{mueller2014nonstandard}. The error thresholds in the limit $R\rightarrow 1$ were obtained in Ref. \onlinecite{song2022optimal}}.
    \label{tab:xcube}
\end{table}

The $R=3$ case requires the consideration of two coupled copies of plaquette Ising models on the dual cubic lattice:
\begin{equation}
H_{\rm AT,p} = -\tilde{K}''\sum_{\tilde\square} \left(\prod_{c\in \tilde\square} \tilde{\tau}^{(1)}_c + \prod_{c\in \tilde\square} \tilde{\tau}^{(2)}_c + \prod_{c\in \tilde\square} \tilde{\tau}^{(1)}_c \tilde{\tau}^{(2)}_c    \right)
\end{equation}
and its BPD dual. To our knowledge, this model has not been studied before.

\section{Decohered CSS code with electric-magnetic symmetry}
\label{sec:em-symmetric CSS}
\begin{figure*}[t!]
    \centering
    \includegraphics[width=.9\linewidth]{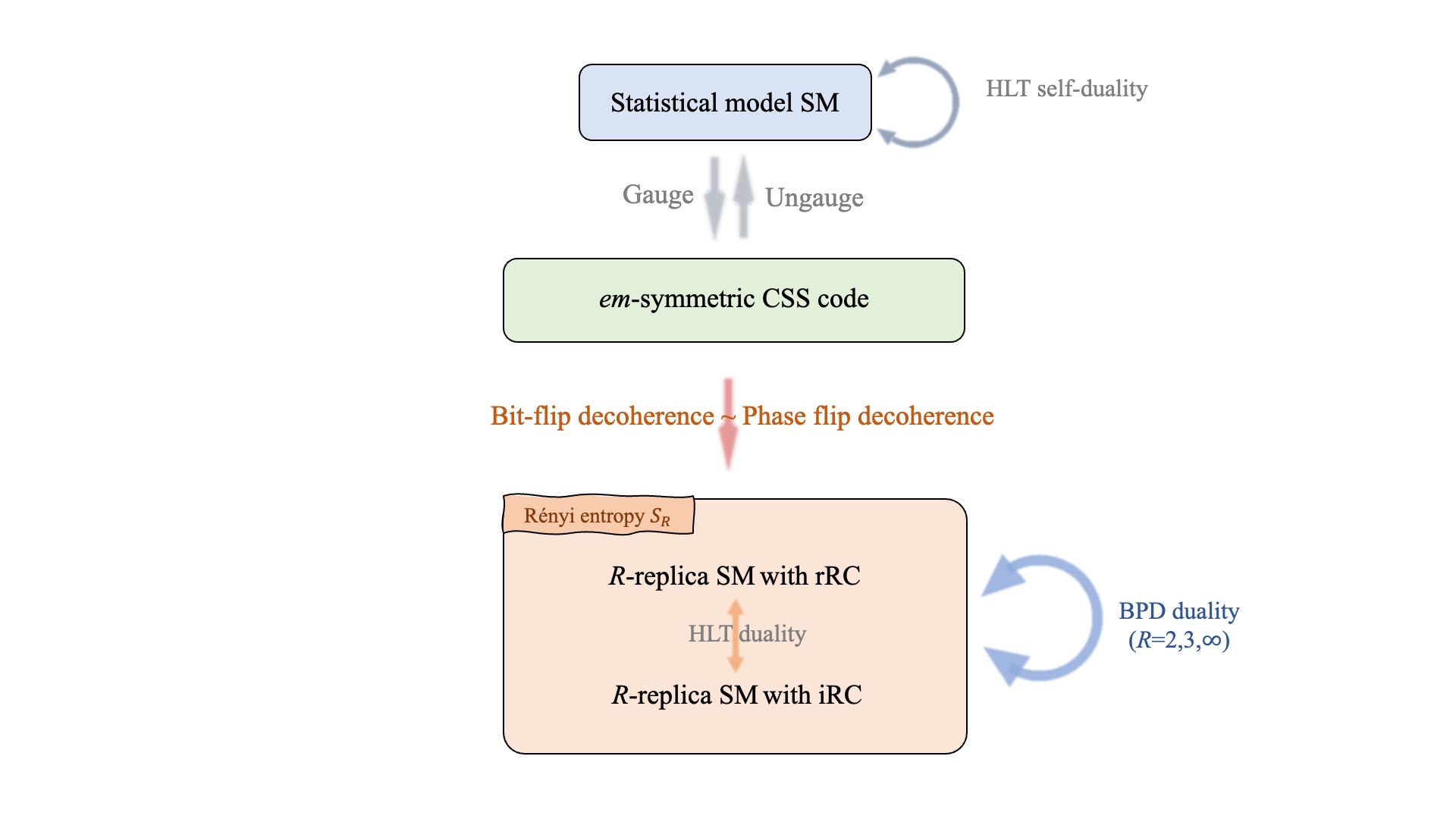}
    \caption{The tapestry of dualities of an $em$-symmetric CSS code. The ungauging procedure leads to SM with a self-duality. Under bit-flip or phase-flip decoherence, $R$-replica SM with rRC and $R$-replica SM with iRC emerge as the description of the system's R\'enyi entropy. These random statistical models are related by HLT dualities for a general $R$. For $R=2,3,\infty$, the BPD dualities (combined with the $em$ symmetry) relate different values of error rates.
    }
    \label{fig:self-duality}
\end{figure*}

In this section, we study the consequences of the dualities for the CSS codes with an electric-magnetic ($em$) symmetry, a symmetry relating the $X$-type and $Z$-type stabilizers. We will see that the $em$ symmetry effectively folds the tapestry of dualities in Fig. \ref{fig:duality1} in half, resulting in a new tapestry shown in Fig. \ref{fig:self-duality}.  The new tapestry allows us to pin down the {\it super-universal} self-dual error rates for the R\'enyi entropies $S_R(\rho_{\rm b/p})$ with $R=2,3,\infty$. These super-universal self-dual error rates must coincide with the critical error rates $p^\star_{x/z}$ of the DIPTs if there is a unique DIPT for each $R$. We will also discuss concrete examples of $em$-symmetric CSS codes, including the 2D toric code and Haah's code in 3D.  

\subsection{Tapestry of dualities and self-dual error rates in $em$-symmetric CSS codes}

Let's first define the $em$ symmetry. A CSS code is $em$-symmetric if there is a unitary transformation ${\cal U}_{em}$ such that 
\begin{align}
    & {\cal U}_{em} A_i[X] {\cal U}_{em}^\dag= B_{j=f(i)}[Z], \nonumber \\
    &{\cal U}_{em} B_j[X] {\cal U}_{em}^\dag= A_{i=g(j)}[Z].
    \label{eq:em_symmetry_U}
\end{align}
Here, ${\cal U}_{em}$ is a product of the Hadamard gate $\frac{1}{\sqrt{2}} \left(\begin{smallmatrix}
    1 & 1\\1& -1
\end{smallmatrix}\right)$ on each qubit, which exchanges the Pauli-$X$ and the Pauli-$Z$ operators, and a spatial action that permutes the locations of the qubits. The functions $f$ and $g$ are bijections between the locations of $X$- and $Z$-type stabilizers induced by the spatial action in ${\cal U}_{em}$.

Since the $em$ symmetry maps between the $X$-type and $Z$-type stabilizers, the two statistical models $\SM{1,2}$ produced by the ungauging procedure in Sec. \ref{sec:sm_construction} become identical (up to some spatial rearrangement of the classical spins) in an $em$-symmetric CSS code. Hence, we will not distinguish $\SM{1,2}$ in this section and call them both SM. The original HLT duality between the non-random $\SM1$ and $\SM2$ becomes an HLT self-duality of the non-random SM.

The $em$ symmetry maps the bit-flip errors and the phase-flip errors into each other. Therefore, the decoherence effects caused by the two error types are effectively identical. We remark that the $em$ symmetry is an intrinsic property of the CSS code under the current discussion. We do {\it not} require the error model to respect this symmetry. As followed from Theorem \ref{thm:sm_b} and \ref{thm:sm_p}, the $R$th R\'enyi entropies $S_R({\rho}_{\rm b/p})$ are described by both the $R$-replica SM with rRC and $R$-replica SM with iRC. The two types of the $R$-replica random statistical models are dual to each other under the HLT dualities. 

The BPD duality leads to profound consequences for $em$-symmetric CSS codes. Combining the $em$ symmetry action and the BPD duality, both of which map between bit-flip errors and phase-flip errors, we conclude that the random statistical models for $S_R(\rb)$ with $R=2,3,\infty$ are self-dual, and so are those for $S_R(\rp)$. 
\begin{tcolorbox}
\begin{corollary} \label{thm:selfduality}   
{\rm For an $em$-symmetric CSS code with decoherence, the R\'enyi entropies $S_R (\rho_{\rm b/p})$ with $R = 2,3,\infty$ are described by $R$-replica random statistical models with self-dualities. The dualities map between the two error rates $0<p,\tilde{p} <\frac{1}{2}$ related by:
}
\begin{align}
    \left[(1-p)^R + p^R\right]\left[(1-\tilde{p})^R + \tilde{p}^R\right] = \frac{1}{2^{R-1}}. \label{eq:BPD_self_duality}
\end{align}
{\rm Here, both of the error rates $p$ and $\tilde p$ can be either the bit-flip error rates or the phase-flip error rates, depending on the error type considered. }
\end{corollary}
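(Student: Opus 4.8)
The plan is to obtain the self-duality as the composition of two maps that each interchange bit-flip and phase-flip decoherence: the $em$ symmetry ${\cal U}_{em}$, which does so at a \emph{fixed} error rate, and the BPD duality of Theorem \ref{thm:BPD}, which does so at a \emph{shifted} error rate. Chaining them maps $S_R(\rb)$ at rate $p$ onto $S_R(\rb)$ at rate $\tilde p$, and likewise for $\rp$.

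First I would exploit the $em$ symmetry. Since ${\cal U}_{em}$ is unitary and conjugates $A_i[X]\leftrightarrow B_{f(i)}[Z]$ and $X_\mu \leftrightarrow Z_{\sigma(\mu)}$ for the qubit permutation $\sigma$ underlying Eq. \eqref{eq:em_symmetry_U}, it intertwines the single-qubit channels ${\cal N}_{x,\mu}$ and ${\cal N}_{z,\sigma(\mu)}$ at the \emph{same} rate. Applying it to the logical state $\rho_0$ (whose choice is immaterial in the infinite-system limit, all logical states being locally indistinguishable) and using that $P_x$, $P_z$ depend only on the cardinality $|E|$, which $\sigma$ preserves, gives ${\cal U}_{em}\,\rb(p)\,{\cal U}_{em}^\dagger = \rp(p)$ up to the spatial relabeling, hence
\begin{align}
    \Tr\big(\rb(p)^R\big) = \Tr\big(\rp(p)^R\big)
\end{align}
for every $R$ and every $p\in[0,1/2]$ in the infinite-system limit. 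On the statistical-model side, the same symmetry collapses $\SM1$ and $\SM2$ into one model SM (as already noted), so by Theorems \ref{thm:sm_b} and \ref{thm:sm_p} the quantity $\Tr \rb^R$ is, up to a smooth prefactor, the partition function of a single $R$-replica random statistical model — the $R$-replica SM with rRC, or equivalently (by the HLT duality built into those theorems) the $R$-replica SM with iRC.

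Next I would compose this identity with the BPD duality. For $R=2,3,\infty$, Theorem \ref{thm:BPD} gives $\Tr(\rb(p_x)^R)\propto \Tr(\rp(p_z)^R)$ whenever $(p_x,p_z)$ satisfy Eq. \eqref{eq:BPD_duality_condition_1}, with a proportionality constant analytic in the error rates. Substituting $\Tr(\rp(p_z)^R)=\Tr(\rb(p_z)^R)$ from the previous step and renaming $(p_x,p_z)\to(p,\tilde p)$ yields
\begin{align}
    \Tr\big(\rb(p)^R\big) \;\propto\; \Tr\big(\rb(\tilde p)^R\big)
\end{align}
precisely when $p,\tilde p\in(0,1/2)$ satisfy Eq. \eqref{eq:BPD_self_duality}, which is exactly Eq. \eqref{eq:BPD_duality_condition_1} under that renaming. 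Since $\Tr\rb^R$ is, up to a smooth term, the (logarithm of the) partition function of the single model identified above, this is the claimed self-duality; running ${\cal U}_{em}$ in the other direction gives the same statement for $\rp$. Setting $p=\tilde p$ extracts the self-dual fixed point $(1-p)^R+p^R = 2^{-(R-1)/2}$, which is \emph{super-universal} in that it does not depend on the particular $em$-symmetric code.

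The hard part will be bookkeeping rather than conceptual. One must check that the spatial permutation carried inside ${\cal U}_{em}$ (the bijections $f,g$ on stabilizer locations and the induced $\sigma$ on qubits) acts on the construction of Sec. \ref{sec:sm_construction} merely by relabeling the classical spins $\tau_i$ (equivalently $\ttau_j$) and the interaction terms ${\cal O}^z_\mu$, ${\cal O}^x_\mu$, so that it leaves all partition functions — and in particular free-energy densities and their singular locus — invariant in the infinite-system limit. One should also confirm that the ``up to smooth proportionality constant'' caveats inherited from Theorems \ref{thm:sm_b}, \ref{thm:sm_p}, and \ref{thm:BPD} cannot create or cancel a singularity; this is immediate since each such factor is a product of powers of $p$ and $1-p$ (resp.\ $\tilde p$ and $1-\tilde p$) and hence analytic on the open interval $(0,1/2)$, so the self-duality genuinely constrains the DIPTs.
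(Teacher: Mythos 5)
Your proposal is correct and follows essentially the same route as the paper: the paper's stated justification is precisely the composition of the $em$-symmetry action (which exchanges bit-flip and phase-flip at the \emph{same} rate) with the BPD duality of Theorem \ref{thm:BPD} (which exchanges them at \emph{dual} rates), yielding a self-duality of the single statistical model SM. Your bookkeeping remarks about the spatial permutation inside ${\cal U}_{em}$, the irrelevance of the logical-state choice in the infinite-system limit, and the analyticity of the proportionality constants are exactly the points one would check, and they all go through.
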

\end{tcolorbox}
\noindent 
Similar to Theorem. \ref{thm:BPD}, the duality relations Eq. \eqref{eq:BPD_self_duality} are dualities between weak and strong decoherence strength. By solving the self-dual condition $p = \tilde{p}$ together with Eq. \eqref{eq:BPD_self_duality}, we obtain the super-universal self-dual error rates (listed below) shared by {\it all} $em$-symmetric CSS codes. If we further assume that there is a unique DIPT for each $R$ (within the range $p\in (0,1/2)$), the self-dual error rates must coincide with the critical error rates of the corresponding DIPTs:
\begin{tcolorbox}
\begin{align}
    p^\star(R)=
    \begin{dcases}
     \frac{1}{2}\left(1-\sqrt{\sqrt{2} -1}\right)\approx 0.178 &{\rm for~} R=2,\\
     \frac{1}{6}(3-\sqrt{3})\approx 0.211 &{\rm for~} R=3, \\
     \frac{1}{2}(2-\sqrt{2})\approx 0.293 &{\rm for~} R \rightarrow \infty.
    \end{dcases}
    \label{eq:self-dual error rate}
\end{align}
\end{tcolorbox}

\noindent It is worth noting that these super-universal critical error rates are consistent with the conjectured monotonicity of $p^\star_{x/z} (R)$. Also, we called these values super-universal because they encompass DIPTs of different universality classes and in different dimensions, as demonstrated by the examples discussed below.  

If there is more than one DIPT for $p\in [0,1/2]$, the values given Eq. \eqref{eq:self-dual error rate} should be viewed as the self-dual error rates. The critical error rates that differ from the self-dual error rates must form pairs. For every pair, there must be a critical error rate smaller than the self-dual value.   Therefore, in the case of an $em$-symmetric CSS code with multiple DIPTs for $p\in [0,1/2]$ and at a given $R$, the self-dual error rate in Eq. \eqref{eq:self-dual error rate} upper bounds the critical error rate of the first DIPT encountered as the error rate increases from 0. 

The tapestry summarizing all the dualities of an $em$-symmetric CSS code is shown in Fig. \ref{fig:self-duality}. Conceptually, this tapestry is Fig. \ref{fig:duality1} folded in half in the middle by the $em$ symmetry.

In subsequent subsections, as demonstrations, we discuss two concrete models where our general results of $em$-symmetric CSS codes apply.

\subsection{Example: 2D toric code}
\label{sec:2dtc}

The 2D toric code is a familiar example of an $em$-symmetric CSS code. It is defined on a 2D square lattice with a qubit on each edge. As shown in Fig. \ref{fig:2dtoric}, each $X$-type stabilizer is a product of four Pauli-$X$ operators on the four edges forming a cross while each $Z$-type stabilizer is a product of four Pauli-$Z$ operators on the four edges forming a plaquette:
\begin{align}
    A_i = \prod_{\mu\in +_i} X_\mu ~~ B_p = \prod_{\mu \in \square_p} Z_\mu,
\end{align}
where $+_i$ labels the cross centered at the site $i$ and $\square_p$ is the plaquette labeled by $p$.

It is straightforward to visualize the $em$ symmetry by comparing the pictorial representations of the stabilizers on the square lattice and their representations on the dual square lattice (see Fig. \ref{fig:2dtoric}). This $em$ symmetry is generated by the unitary operator ${\cal U}_{em}$ that interchanges the $X$ and $Z$ operators on each qubit and translates the system both in the horizontal and vertical directions by half of the lattice spacing.

\begin{figure}[tb]
    \centering
    \includegraphics[width=\linewidth]{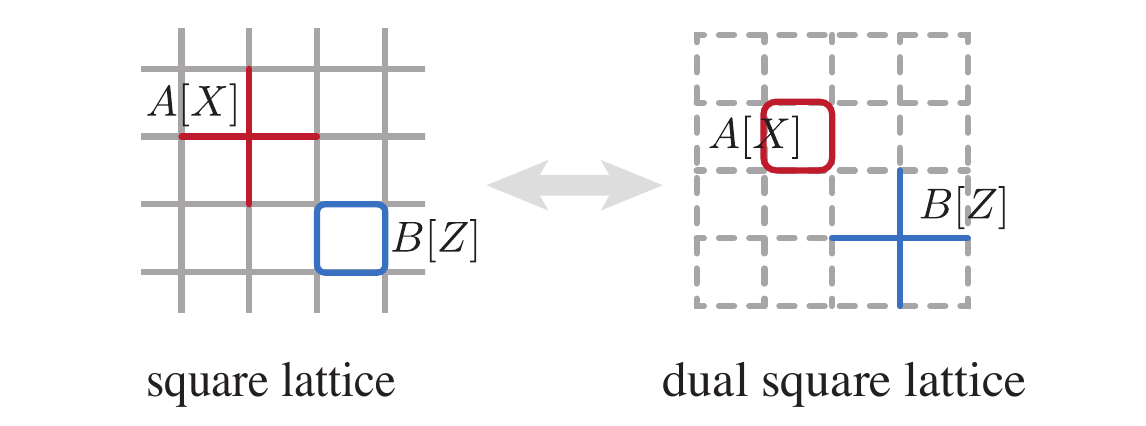}
    \caption{The $em$ symmetry of the 2D toric code is evident from the pictorial representations of the $X$- and $Z$-type stabilizers on the square lattice and its dual. The unitary operator ${\cal U}_{em}$ is a product of Hadamard gates on all sites and a translation action by half of a lattice spacing in both the horizontal and vertical directions. 
    }
    \label{fig:2dtoric}
\end{figure}

For the 2D toric code, SM is the 2D classical Ising model with the nearest-neighbor coupling:
\begin{align}
    H_{\SM{}} = -K\sum_{\braket{i,i'}}\tau_i\tau_{i'},
\end{align}
where the classical spins $\tau_i$ are located on the site of the square lattice. The HLT self-duality of this model (without randomness) is the well-known Kramers-Wannier duality. 

Now, we discuss the R\'enyi entropies $S_R(\rho_{\rm b/p})$ in the decohered 2D toric code. The $R\rightarrow 1$ limit is described by SM with quenched random-coupling disorder, which exactly recovers the random-bond Ising model introduced in Ref. \onlinecite{dennis2002topological} for the study of the error thresholds $p^\star(R\rightarrow 1)$. The cases with $R=2$ and $R = \infty$ can be effectively captured by the non-random SM (see App. \ref{app:intdisorder}). Hence, there is a unique DIPT for $R=2,\infty$, whose universality class is equivalent to the Ising critical point in 2D. The corresponding critical error rates, fixed by the self-dualities, are given by Eq. \eqref{eq:self-dual error rate}. 

For $R=3$, integrating out the randomness in the statistical model yields the 2D square-lattice Ashkin-Teller model, whose Hamiltonian is given by
\begin{equation}
    H_{\rm AT} = -K''\sum_{\langle ii'\rangle}\left(\tau_i^{(1)}\tau_{i'}^{(1)} + \tau_i^{(2)}\tau_{i'}^{(2)} + \tau_i^{(1)}\tau_{i'}^{(1)}\tau_i^{(2)}\tau_{i'}^{(2)}\right).
    \label{eq:H_ATb_2dtc}
\end{equation}
This Ashkin-Teller model tuned by the coupling constant $K''$ has a unique second-order transition \cite{kohmoto1981hamiltonian}, corresponding to $p^\star(3)$. 

For $R=4,5,6$, Ref. \onlinecite{FanBaoTopoMemory} numerically simulated the corresponding multi-replica statistical models and found a unique phase transition in each case. The values of $p^\star(R)$ with $R\rightarrow 1$ and $R=2,3,4,5,6,\infty$ are consistent with the conjectured monotonicity of $p^\star(R)$ as a function of $R$.

\subsection{Example: Haah's code}
Haah's code \cite{haah2011local} provides an example of 3D $em$-symmetric CSS code. It is defined on a 3D cubic lattice with two qubits per site. There is one $X$-type and one $Z$-type stabilizer per cube. They are pictorially represented in Fig. \ref{fig:haah} (a). Each $A[X]$ stabilizer is a product of eight Pauli-$X$ operators on the qubits in red, while each $B[Z]$ stabilizer is a product of eight Pauli-$Z$ operators on the qubits in blue. The $em$ symmetry action ${\cal U}_{em}$ of Haah's code is the product of (1) a Hadamard gate on each qubit, (2) a spatial inversion $\Vec{r} \rightarrow -\Vec{r}$, and (3) a swap of the two qubits on each site.

\begin{figure}[tb]
    \centering
    \includegraphics[width=0.95\linewidth]{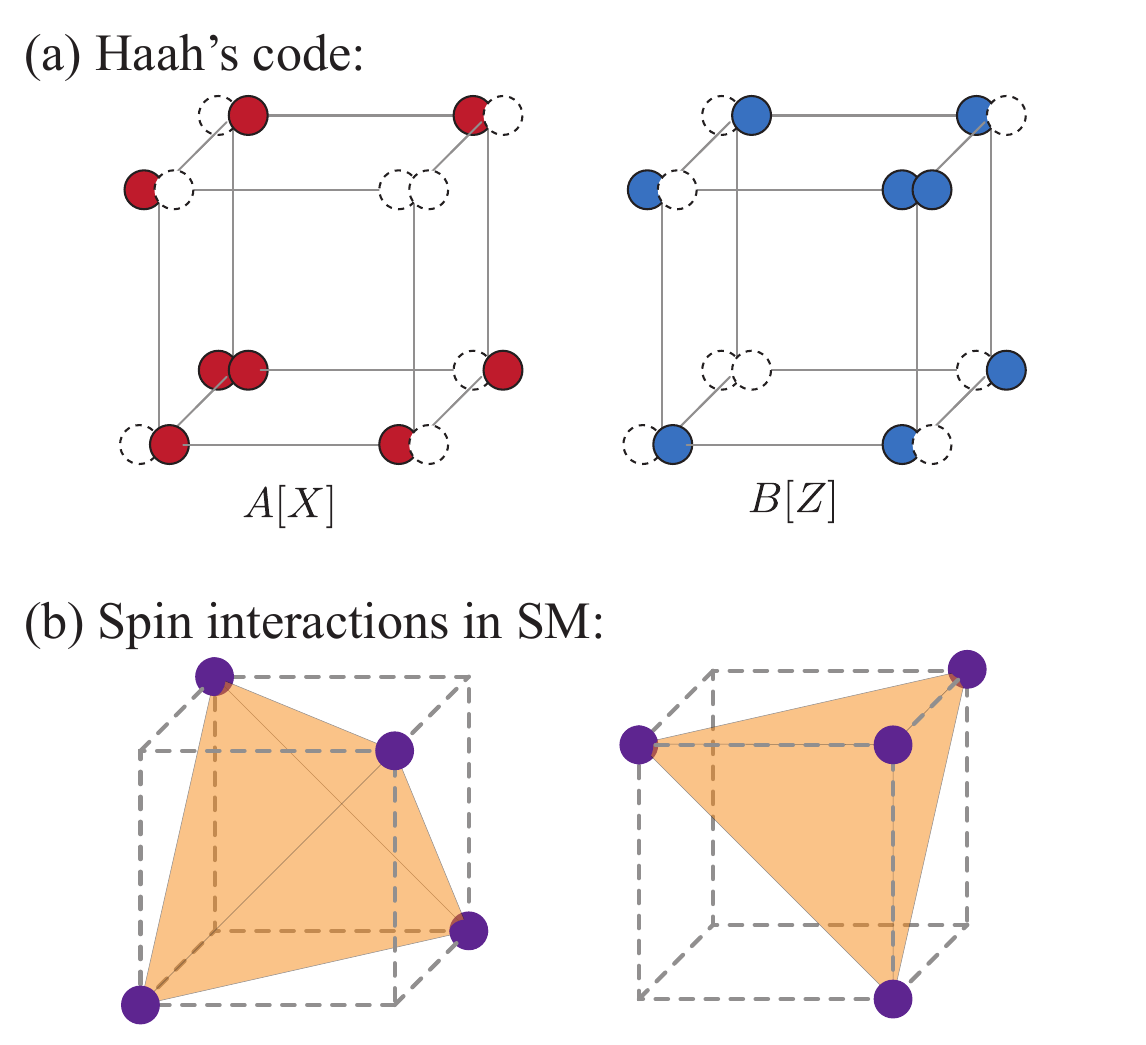}
    \caption{ (a) Haah's code is defined on a 3D cubic lattice with two qubits (depicted as the red, blue, and dashed circles) per site. For each cube, the $X$-type stabilizer is the product of the Pauli-$X$ operators on the 8 red qubits, and the $Z$-type stabilizer is the product of the Pauli-$Z$ operators on the 8 blue qubits.
    (b) In SM, which originates from Haah's code, there are two types of four-spin interactions in each unit cell. Each type is a product of the 4 classical spins (purple dots) on the corners of the corresponding tetrahedron (orange).}
    \label{fig:haah}
\end{figure}

Apply the ungauging procedure in Sec. \ref{sec:sm_construction}, we obtain a 3D SM with a classical $\mathbb Z_2$ spin per site on the dual cubic lattice. The Hamiltonian of this SM contains two types of four-spin interaction per unit cell, each associated with a tetrahedron inside a cube. They are pictorially represented in Fig. \ref{fig:haah}. This SM is also known as the ``fractal Ising model". It was introduced in Ref. \onlinecite{Vijay_2016} as the classical model that produces Haah's code after gauging. The (non-random) fractal Ising model was recently shown through numerical studies to exhibit a unique first-order transition \cite{canossa2023exotic}.

Following the recipe in Sec. \ref{sec:sm_construction}, the R\'enyi entropies $S_R(\rho_{\rm b/p})$ are mapped to the $R$-replica random fractal Ising models. As discussed earlier, the cases with $R=2$ and $R\rightarrow \infty$ can be effectively reduced to the non-random fractal Ising model, which has a unique first-order transition. Therefore, the DIPTs with $R=2$ and $R= \infty$ are first-order transitions occurring at the self-dual error rates listed in Eq. \eqref{eq:self-dual error rate}. For $R=3$, the $3$-replica random fractal Ising model can be reduced to an Ashkin-Teller-type model after integrating out the randomness (see App. \ref{app:intdisorder}). The phase diagram of this model and the nature of the self-dual error rate $p^\star(3)$ are both interesting problems that we leave for future investigation.

\section{General stabilizer codes under decoherence} \label{sec:GPN}

The mapping from the R\'enyi entropies to statistical models can be generalized to general stabilizer codes subject to a general Pauli noise channel (to be explained below). As we will see, we can associate a statistical model SM with a general stabilizer code without the CSS structure. The non-random version of SM has an HLT self-duality. The $R$-th R\'enyi entropy of the decohered stabilizer code is mapped to both the $R$-replica SM with rRC and the $R$-replica SM with iRC, which are related by an HLT duality. For $R = 2$ and $R\rightarrow \infty$, there is a general-Pauli-noise (GPN) duality that maps between two different Pauli noise channels with different sets of error rates. We obtain the self-dual conditions of the Pauli noise channels and discuss their relations to DIPTs. We also include the Chamon model \cite{chamon2005quantum} as a concrete example to demonstrate our general construction.

\subsection{Stabilizer codes and general Pauli noise}

Let's start our discussion with the basics of stabilizer codes and the model for
general Pauli noise. We still focus on the stabilizer code defined on a system of qubits (with each carrying a 2-fold local Hilbert space). 

For a general stabilizer code, each stabilizer $A_J[X,Z]$ can be written as a product of the Pauli-$X$ and $Z$ operators. In this convention, any Pauli-$Y$ operator in $A_J[X,Z]$ is decomposed into $\i X Z$. The subscript $J$ of $A_J[X,Z]$ labels the stabilizer's center location (and the species of stabilizers if applicable). Each stabilizer $A_J[X,Z]$ can be represented by a pair of $\mathbb Z_2$ vectors $(\sfa_J,\sfb_J)$ with $\sfa_J,\sfb_J \in {\cal V}$:
\begin{align}
    A_J[X,Z] = (\i)^{\sfa_J \cdot \sfb_J} \prod_\mu (X_\mu)^{(\sfa_J)_\mu} \prod_\mu (Z_\mu)^{(\sfb_J)_\mu}. \label{eq:z2pauli}
\end{align}
Recall ${\cal V} = {\mathbb Z}_2^N$, the ${\mathbb Z}_2$ vector space  associated with a system of $N$ qubits. The prefactor $(\i)^{\sfa_J \cdot \sfb_J}$ ensures that $A_J[X,Z]^2 = \mathds{1}$. The fact that all the stabilizers commute with each other is equivalent to the condition
\begin{align}
    \sfa_J\cdot \sfb_{J'} + \sfa_{J'}\cdot \sfb_{J} = 0, ~~~~\forall J, J'. \label{eq:bsp}
\end{align}
Note that the ``$\cdot$" above is the dot product in the ${\mathbb Z}_2$ vector space $\cal V$.  This ${\mathbb Z}_2$-vector-based representation of the stabilizers is called ``binary symplectic representation" in the literature \cite{eczoo_qubit_stabilizer}. Eq. \eqref{eq:bsp} defines the binary symplectic inner product between $(\sfa_J,\sfb_J)$ and $(\sfa_{J'},\sfb_{J'})$.  The binary symplectic representation is applicable to any Pauli strings. In other words, any product of Pauli operators can be represented by a pair of $\mathbb Z_2$ vectors $({\sfe}, {\sff}) \in {\cal V}\oplus {\cal V}$.

In this section, we consider the general Pauli noise that induces the Pauli noise channel $\cal E = \otimes_\mu \cal E_\mu $, a product of the local Pauli noise channel $\cal E_\mu$ on each qubit $\mu$: 
\begin{align}
    \mathcal{E}_\mu[\rho_0] =  & (1-p_x-p_y-p_z)\rho_0 + p_x X_\mu \rho_0 X_\mu 
    \nonumber \\
    &+ p_y Y_\mu\rho_0 Y_\mu + p_z Z_\mu\rho_0 Z_\mu.
\end{align}
where $p_{x,y,z}>0$ are the probability/error rates of the $X$-, $Y$-, $Z$-type errors on each qubit.  The general Pauli noise channel $\cal E$ includes the two quantum channels ${\cal N}_{x,z}$ induced by the bit-flip and the phase-flip errors and studied in Sec. \ref{sec:CSS} and \ref{sec:em-symmetric CSS}. $\cal E$ reduces to the depolarizing channel when we set $p_x = p_y = p_z$. For the following analysis, we make a technical assumption that $(1-p_x-p_y-p_z)>p_{x,y,z}$ without the loss of generality. For a general Pauli noise channel, one can always combine it with an appropriate choice of unitary global action $\prod_\mu X_\mu$, $\prod_\mu Y_\mu$, or $\prod_\mu Z_\mu$ to construct a new quantum channel that satisfies the condition $(1-p_x-p_y-p_z)>p_{x,y,z}$. Since this global action does not change the system's entropy, studying the decoherence effect of the new channel is equivalent to studying the original channel. 

In what follows, we investigate the R\'enyi entropies $S_R(\rho_{\rm m})$ of the error-corrupted mixed state $\rho_{\rm m} = {\cal E}(\rho_0)$ where $\rho_0$ is a pure logical state of the stabilizer code. Like the case of CSS codes, we will construct the statistical models associated with $S_R(\rho_{\rm m})$ and study their dualities.

\subsection{Statistical models for general stabilizer codes under general Pauli decoherence} \label{sec:recipe2}

To motivate the statistical model associated with a general decohered stabilizer code $\cal C$, it is useful first to consider the setting of CSS code with the general Pauli noise channel $\cal E$ defined above. In the case where the decoherence channel is generated by independent bit-flip and phase-flip errors, namely when ${\cal E} = {\cal N}_x \circ {\cal N}_z$, we have 
\begin{align}
    p_x = (1-q_z) q_x,~~p_y = q_x q_z,~~p_z = (1-q_x) q_z, 
\end{align}
where $q_{x/z}$ are the respective rates of the bit-flip and phase-flip errors in ${\cal N}_{x/z}$. The $Y$-type error, now occurring with probability $p_y = q_x q_z$, is the consequence of a bit-flip and a phase-flip error occurring on the same qubit. As discussed in Sec. \ref{sec:CSS}, the effects of independent bit-flip and phase-flip errors decouple on a CSS code. Therefore, the entropy of the entire system can be captured by two decoupled (random and multi-replica) statistical models $\SM1$ and $\SM2$. The former lives on the lattice formed by the centers of the $X$-type stabilizers, while the latter lives on the lattice formed by the $Z$-type stabilizers. 

It turns out that, when $p_y$ deviates from $q_x q_z$, namely when the $Y$-type errors start to become independent from the bit-flip and phase-flip errors, $\SM1$ couples to $\SM2$ forming a single statistical model, which will be called $\SM{}$ in the following, living on the lattice formed by the centers of {\it all} stabilizers. The duality between $\SM1$ and $\SM2$ becomes the self-duality of SM. In fact, as we show below, even for a general stabilizer code $\cal C$ with Pauli decoherence, there is a corresponding self-dual SM whose $R$-replica random versions describe the R\'enyi entropies of the error-corrupted mixed state.

Like Sec. \ref{sec:CSS}, we first introduce the non-random version of the statistical model SM associated with the general stabilizer code $\cal C$. 
We assign one classical $\mathbb Z_2$ spin $\tau_J$ to each stabilizer $A_J[X,Z]$. The Hamiltonian of SM is given by
\begin{align}
    H_{{\rm SM}} = -\sum_{\mu}K_x {\cal O}^x_{\mu}[\tau]+K_y {\cal O}^y_{\mu}[\tau] + K_z {\cal O}^z_{\mu}[\tau],
\end{align}
which contains three types of spin interactions
\begin{align}
   {\cal O}^z_\mu[\tau] = \prod_J (\tau_J)^{(\mathsf{a}_J)_{\mu}}, & ~~~
    {\cal O}^x_\mu[\tau] = \prod_J (\tau_J)^{(\mathsf{b}_J)_\mu}, \nonumber \\ 
    {\cal O}^y_\mu[\tau] = \prod_J &(\tau_J)^{(\mathsf{a}_J+\mathsf{b}_J)_\mu},
    \label{eq:def_Oxyz}
\end{align}
with coupling constants $K_{x,y,z}$. These spin interactions are the generalizations of Eqs. \eqref{eq:def_Oz}
and \eqref{eq:def_Ox}. They satisfy the relation 
\begin{align}    \mathcal{O}^x_\mu[\tau]\mathcal{O}^y_\mu[\tau]\mathcal{O}^z_\mu[\tau] = 1.
\end{align}

For the non-random SM, the comparison of the high-temperature and low-temperature expansion of the partition function $Z_{\rm SM} = \sum_{\{\tau_J = \pm 1 \}} e^{-H_{\rm SM}}$ yields a self-duality of SM that maps the coupling constants $\mathbf{K}=(K_x, K_y, K_z)$ to $\mathbf{\tilde{K}}=(\tK_x, \tK_y, \tK_z)$,
\begin{align}
    e^{-2(\tilde{K}_x+\tilde{K}_y)} = \frac{\tanh K_z + \tanh K_x\tanh K_y }{1+ \tanh K_x\tanh K_y\tanh K_z}, \nonumber\\
    e^{-2(\tilde{K}_z+\tilde{K}_x)} = \frac{\tanh K_y + \tanh K_z\tanh K_x }{1+ \tanh K_x\tanh K_y\tanh K_z}, \label{eq:noncssKW} \\
    e^{-2(\tilde{K}_y+\tilde{K}_z)} = \frac{\tanh K_x + \tanh K_y\tanh K_z }{1+ \tanh K_x\tanh K_y\tanh K_z}. \nonumber 
\end{align}
The detailed derivation of this duality in presented in App. \ref{app:ncss}.

Under the Pauli noise channel, the error-corrupted mixed state of the stabilizer code is denoted as $\rho_{\rm m}$. We prove the following theorem pertaining to the $R$-th R\'enyi entropy $S_R(\rho_{\rm m}) = \frac{1}{1-R} \log \Tr(\rho_{\rm m}^R)$ of the decohered stabilizer code.

\begin{widetext}
\begin{tcolorbox}[width = \textwidth]
\begin{theorem}\label{thm:noncss}
{\rm For the general stabilizer code in the infinite system limit, $\Tr\rho_{\rm{m}}^R$ of the error-corrupted mixed state $\rho_{\rm{m}}$ generated by the Pauli noise channel is proportional to the partition function of the $R$-replica SM with the rRC. $\Tr\rho_{\rm{m}}^R$ is also proportional to the partition function of the $R$-replica SM with iRC: }
    \begin{align}
        {\rm Tr}(\rho_{\rm m}^R) &\propto \sum_{E\in{\cal V}\oplus {\cal V}} \Big(Z_{\rm SM}(\mathbf{K}, E)\Big)^R\\
        &\propto \sum_{E\in{\cal V}\oplus {\cal V}} \Big(W_{\rm SM}(\mathbf{\tK}, E)\Big)^R,
    \end{align}
{\rm with the coupling constants $\bf{K}$ and $\bf{\tilde{K}}$ given by the error rates $p_{x,y,z}$ via}
    \begin{align}
        e^{-2(K_x+K_y)} = \frac{\tanh \tilde{K}_z+\tanh \tilde{K}_x\tanh \tilde{K}_y}{1+\tanh \tilde{K}_x \tanh \tilde{K}_y \tanh \tilde{K}_z} = \frac{p_z}{1-p_x-p_y-p_z}, \nonumber\\
        e^{-2(K_y+K_z)} = \frac{\tanh \tilde{K}_x+\tanh \tilde{K}_y\tanh \tilde{K}_z}{1+\tanh \tilde{K}_x \tanh \tilde{K}_y \tanh \tilde{K}_z} = \frac{p_x}{1-p_x-p_y-p_z},   \label{eq:ncssrelation} \\
        e^{-2(K_x+K_z)} = \frac{\tanh \tilde{K}_y+\tanh \tilde{K}_x\tanh \tilde{K}_z}{1+\tanh \tilde{K}_x \tanh \tilde{K}_y \tanh \tilde{K}_z} = \frac{p_y}{1-p_x-p_y-p_z}. \nonumber
    \end{align}
\end{theorem}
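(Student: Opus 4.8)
The plan is to mirror the two-step proof behind Theorems~\ref{thm:sm_b} and \ref{thm:sm_p}, now carried out in the binary symplectic representation. First I would write $\rho_{\rm m}=\sum_{\varepsilon\in{\cal V}\oplus{\cal V}}P(\varepsilon)\,P_\varepsilon\rho_0 P_\varepsilon$, where $\varepsilon=(\sfe,\sff)$ labels a Pauli string, $P_\varepsilon$ is the Hermitian Pauli operator of Eq.~\eqref{eq:z2pauli}, and $P(\varepsilon)$ is the product over qubits of the appropriate weight from $\{1-p_x-p_y-p_z,\,p_x,\,p_y,\,p_z\}$ according to whether the string is $I,X,Y$ or $Z$ at each site. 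Expanding $\Tr(\rho_{\rm m}^R)$ over $R$ independent strings and using $\rho_0=|\Omega\rangle\langle\Omega|$ gives
\begin{align}
  \Tr(\rho_{\rm m}^R)=\sum_{\varepsilon_1,\dots,\varepsilon_R}\Big(\prod_{a=1}^{R}P(\varepsilon_a)\Big)\prod_{a=1}^{R}\langle\Omega|P_{\varepsilon_a}P_{\varepsilon_{a+1}}|\Omega\rangle,\qquad \varepsilon_{R+1}\equiv\varepsilon_1.
\end{align}
In the infinite-system limit the absence of local logical operators forces each factor $\langle\Omega|P_{\varepsilon_a}P_{\varepsilon_{a+1}}|\Omega\rangle$ to vanish unless $\varepsilon_a+\varepsilon_{a+1}$ lies in the stabilizer group ${\cal S}\subset{\cal V}\oplus{\cal V}$ generated by the $(\sfa_J,\sfb_J)$, in which case it is a phase. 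The surviving terms are parametrized by $E\equiv\varepsilon_1\in{\cal V}\oplus{\cal V}$ and $C_2,\dots,C_R\in{\cal S}$ via $\varepsilon_a=E+C_a$ (with $C_1\equiv0$); symmetrizing over a dummy shift $E\to E+C_1$, $C_a\to C_a+C_1$ — exactly the move taking the second line of Eq.~\eqref{eq:renyisymm_b} to the third — yields
\begin{align}
  \Tr(\rho_{\rm m}^R)\ \propto\ \sum_{E\in{\cal V}\oplus{\cal V}}\Big(\sum_{C\in{\cal S}}P(E+C)\Big)^{R},
\end{align}
up to the replica-loop phase discussed below.

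\textbf{Identification with SM.} The second step is to recognize the inner sum as the single-replica partition function $Z_{\rm SM}(\mathbf K,E)$ of SM with the real random couplings labelled by $E$. I would reorganize its spin sum as a sum over $C=\sum_J\nu_J(\sfa_J,\sfb_J)\in{\cal S}$ (with $\tau_J=(-1)^{\nu_J}$) and use the constraint ${\cal O}^x_\mu{\cal O}^y_\mu{\cal O}^z_\mu=1$: at any fixed configuration the local Boltzmann weight $\exp(K_x{\cal O}^x_\mu+K_y{\cal O}^y_\mu+K_z{\cal O}^z_\mu)$ can take only the four values $e^{\pm K_x\pm K_y\pm K_z}$ with an even number of minus signs, and which one is realized at qubit $\mu$ is fixed by whether $C_\mu$ is $I,X,Y$, or $Z$ (an $E$-dependent sign flip simply shifting $C\to C+E$). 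The relations in Eq.~\eqref{eq:ncssrelation} are precisely the statement that these four values are proportional, by a common $\mu$-independent constant, to $1-p_x-p_y-p_z,\,p_x,\,p_y,\,p_z$. Hence $Z_{\rm SM}(\mathbf K,E)=(\text{const})\,N_s\sum_{C\in{\cal S}}P(E+C)$, the analogue of Eq.~\eqref{eq:ZSM1_loop}, with $N_s$ counting the $\mathbb{Z}_2$ relations among the stabilizers and the constant depending smoothly on the $p$'s; combining with the previous display proves the first line of the theorem. For the iRC line I would either repeat the expansion with the operator insertions of Eq.~\eqref{eq:WSM2_dis} (one per independent interaction type, since ${\cal O}^x{\cal O}^y{\cal O}^z=1$ leaves two free), or invoke the non-random HLT self-duality Eq.~\eqref{eq:noncssKW} of SM termwise in $E$; composing that self-duality with the $\mathbf K$--$p$ relations produces the $\mathbf{\tilde K}$--$p$ relations quoted in Eq.~\eqref{eq:ncssrelation}.

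\textbf{Main obstacle.} The step demanding the most care is the phase bookkeeping in the first part. Products $P_{\varepsilon_a}P_{\varepsilon_{a+1}}$ generate factors in $\{\pm1,\pm i\}$ — the $\pm i$'s appearing precisely when $Y$-type errors participate — and $\langle\Omega|P_C|\Omega\rangle$ for $C\in{\cal S}$ contributes a further sign depending on the chosen generator decomposition. One must show that the product of all these phases around the replica cycle is trivial, so that $\prod_a\langle\Omega|P_{\varepsilon_a}P_{\varepsilon_{a+1}}|\Omega\rangle$ is the clean indicator that $\varepsilon_a+\varepsilon_{a+1}\in{\cal S}$ for every $a$. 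For $R=2$ this is immediate, as the factor is $|\langle\Omega|P_{\varepsilon_1}P_{\varepsilon_2}|\Omega\rangle|^2\ge0$; for general $R$ one argues that the offending phases form a coboundary in the replica index, equivalently that they are absorbed into a local redefinition of the spins $\tau_J$ — consistent with the earlier remark that the choice of $|\Omega\rangle$ is immaterial in the infinite-system limit. The remaining routine subtlety, identical to the CSS case, is that both the trace expansion and the evaluation of $\langle\Omega|P_Q|\Omega\rangle$ actually involve the symplectic complement ${\cal S}^\perp$ rather than ${\cal S}$; the two differ only by the necessarily non-local logical operators, which do not affect the free-energy (Rényi-entropy) density since $N_{\rm c}/N\to0$.
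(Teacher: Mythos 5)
Your proposal follows essentially the same route as the paper's Appendix \ref{app:ncss}: expand $\Tr(\rho_{\rm m}^R)$ over $R$ error chains in ${\cal V}\oplus{\cal V}$, restrict via the codespace expectation values to differences $\varepsilon_a+\varepsilon_{a+1}$ lying in the stabilizer group ${\cal V}_s$, symmetrize by a dummy shift, and then identify $\sum_{C\in{\cal V}_s}P(E+C)$ with the low-temperature expansion of $Z_{\rm SM}(\mathbf K,E)$ using the constraint $\cO^x_\mu\cO^y_\mu\cO^z_\mu=1$ to fix the four allowed local Boltzmann weights; and close the iRC side either by direct high-temperature expansion or by the termwise HLT self-duality.

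The one step worth spelling out — which you correctly flag as the main obstacle — is the phase bookkeeping. Your coboundary intuition is right, but the paper makes it precise with a short telescoping argument rather than a cohomological one: since $\ket\Omega$ is an eigenstate of each Hermitian operator $O(\varepsilon_a)O^\dagger(\varepsilon_{a+1})$ appearing in the product, one may merge all $R$ expectation values into a single expectation, and the resulting operator string
\[
O(\varepsilon_1)O^\dagger(\varepsilon_2)\,O(\varepsilon_2)O^\dagger(\varepsilon_3)\cdots O(\varepsilon_R)O^\dagger(\varepsilon_1)
\]
collapses to the identity by unitarity of the Pauli strings, so the product of all phases is exactly $1$. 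This replaces the vaguer claim that the phases can be absorbed into a redefinition of the $\tau_J$'s (which is true but requires its own justification) with a two-line algebraic identity. Apart from that, your outline — including the passage from ${\cal V}_s$ to the symplectic orthogonal $(\cV^*_s)^\perp$ differing only by non-local logicals and hence not affecting the free-energy density — matches the paper's argument.
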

\end{tcolorbox}
\noindent Here, $E \equiv (\mathsf{e}, \mathsf{f}) \in \cV\oplus \cV $ is the pair of $\mathbb Z_2$ vectors that represents the random couplings in the statistical models. We have defined the partition function of SM with the rRC pattern $E$ (and the coupling constants $\bf K$) as
    \begin{align}
        Z_{\rm SM}(\mathbf{K}, E) = \sum_{\{\tau_J=\pm1\}}\exp \left(\sum_\mu \left(K_x (-1)^{\sff_\mu} \mathcal{O}^x_\mu[\tau]+K_y (-1)^{(\sfe+\sff)_\mu} \mathcal{O}^y_\mu[\tau]+K_z (-1)^{\sfe_\mu} \mathcal{O}^z_\mu[\tau]\right)\right),
    \end{align}
and the SM partition function with the iRC pattern $E$ (and the coupling constants $\bf \tilde K$) as
    \begin{align}
        W_{\rm SM}(\mathbf{\tK}, E) & = \sum_{\{\tau_J=\pm1\}}\exp(-H_{\rm SM}(\mathbf{\tK})) \prod_{\mu} \Big(\cO^x_\mu[\tau]\Big)^{\sfe_\mu} \Big(\cO^z_\mu[\tau]\Big)^{\sff_\mu}. \nonumber \\        
        & = \sum_{\{\tau_J=\pm1\}}\exp\left(-H_{\rm SM} (\mathbf{\tK}) -  \frac{\i \pi}{2} \sum_\mu \mathsf{e}_\mu \Big(\cO^x_\mu[\tau] - 1\Big) -  \frac{\i \pi}{2} \sum_\mu \mathsf{f}_\mu \Big(\cO^z_\mu[\tau] - 1\Big)    \right).
    \end{align}
\end{widetext}

The two types of $R$-replica random SM are dual to each other under the HLT duality. The proof of Theorem \ref{thm:noncss} (and the HLT duality) is provided in App. \ref{app:ncss}.

The phase diagram of the decohered stabilizer code $\cal C$ has three tuning parameters $p_{x,y,z}$, which can potentially lead to even richer physics than the case of CSS codes with only bit-flip and phase-flip errors.  The DIPTs of the decohered stabilizer code $\cal C$, indicated by the singularities in the R\'enyi entropy $S_R({\rho_{\rm m}})$, corresponds to the phase transitions in these $R$-replica random statistical models.

Taking the limit of $R\rightarrow 1$ of Theorem \ref{thm:noncss}, we obtain the relation between the von Neumann entropy $S_1(\rho_{\rm m})$ and the quenched-disorder-average partition function of SM with random couplings (derivation summarized in App. \ref{app:ncss}):
\begin{tcolorbox}
\begin{corollary} \label{corollary:S1_m}
 {\rm The von Neumann entropy of the error-corrupted state $\rho_{\rm m}$ is given by the quenched-disorder-averaged free energy of SM with rRC (up to an unimportant additive constant):}
\begin{align}
    S_1(\rho_{\rm m}) = -\sum_{E\in {\cal V}\oplus {\cal V}} P(E) \log(Z_{\rm SM}(\mathbf{K}, E)),
    \label{eq:S1_m}
\end{align}
{\rm with the generalized Nishimori condition Eq. \eqref{eq:ncssrelation} relating the probability distribution of the randomness and the coupling constant $\bf K$. }
\end{corollary}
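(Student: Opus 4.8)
\emph{Proof idea.} The plan is to obtain Corollary~\ref{corollary:S1_m} as the $R\to 1$ limit of Theorem~\ref{thm:noncss}, exactly as Corollary~\ref{corollary:S1_b} follows from Theorem~\ref{thm:sm_b}. The key input is the coset-sum form of $\Tr\rho_{\rm m}^R$ that underlies Theorem~\ref{thm:noncss} (established in App.~\ref{app:ncss}): writing $\mathcal{G}\subset\cV\oplus\cV$ for the subspace spanned by the binary-symplectic vectors $(\sfa_J,\sfb_J)$ of the stabilizers and $P(E)$ for the probability of the Pauli string labeled by $E$, one has
\begin{align}
    \Tr(\rho_{\rm m}^R) = \frac{1}{2^{\dim\mathcal{G}}}\sum_{E\in\cV\oplus\cV}\Big(\sum_{C\in\mathcal{G}}P(E+C)\Big)^{R},
    \label{eq:cosetform_m}
\end{align}
together with the non-CSS analog of Eq.~\eqref{eq:ZSM1_loop}, namely $Z_{\rm SM}(\mathbf{K},E)=\kappa(p)\sum_{C\in\mathcal{G}}P(E+C)$, where $\kappa(p)>0$ depends smoothly on $p_{x,y,z}$ (and on $N$ and the internal-symmetry degeneracy of SM). Equation~\eqref{eq:cosetform_m} is derived just like Eqs.~\eqref{eq:2ndRenyi_rb}--\eqref{eq:renyisymm_b}: expand each factor of $\rho_{\rm m}$ into Pauli strings, keep only the terms whose product stabilizes $\ket{\Omega}$ (i.e.\ lies in $\mathcal{G}$, using the absence of local logical operators), and absorb one coset sum by the shift $E\to E+C_1$.

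First I would verify normalization: at $R=1$, exchanging the two sums in \eqref{eq:cosetform_m} gives $\frac{1}{2^{\dim\mathcal{G}}}\sum_{C\in\mathcal{G}}\sum_{E}P(E+C)=1$, consistent with $\Tr\rho_{\rm m}=1$ for the trace-preserving channel $\mathcal{E}$. Then I would use
\begin{align}
    S_1(\rho_{\rm m})=\lim_{R\to1}\frac{\Tr(\rho_{\rm m}^R)-1}{1-R}=-\left.\partial_R\Tr(\rho_{\rm m}^R)\right|_{R=1}
\end{align}
and differentiate \eqref{eq:cosetform_m} term by term, obtaining
\begin{align}
    S_1(\rho_{\rm m})=-\frac{1}{2^{\dim\mathcal{G}}}\sum_{E\in\cV\oplus\cV}\Big(\sum_{C\in\mathcal{G}}P(E+C)\Big)\log\Big(\sum_{C\in\mathcal{G}}P(E+C)\Big).
\end{align}
Since the summand depends on $E$ only through the coset $[E]=E+\mathcal{G}$, each coset has $2^{\dim\mathcal{G}}$ elements, and $\sum_{E'\in[E]}P(E')=\sum_{C\in\mathcal{G}}P(E+C)$, the prefactor $2^{-\dim\mathcal{G}}$ cancels the coset multiplicity and the sum collapses to $S_1(\rho_{\rm m})=-\sum_{E}P(E)\log\big(\sum_{C\in\mathcal{G}}P(E+C)\big)$. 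Substituting $\sum_{C\in\mathcal{G}}P(E+C)=Z_{\rm SM}(\mathbf{K},E)/\kappa(p)$ splits the logarithm; the $-\log\kappa(p)$ piece factors out of $\sum_E P(E)=1$ as the unimportant additive constant in the statement, leaving Eq.~\eqref{eq:S1_m}. The identification of $\mathbf{K}$ with $p_{x,y,z}$ is precisely the generalized Nishimori relation Eq.~\eqref{eq:ncssrelation} inherited from Theorem~\ref{thm:noncss}.

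The hard part is not this algebra but two supporting facts. First, one must prove the non-CSS coset identity for $Z_{\rm SM}(\mathbf{K},E)$ itself (the analog of Eq.~\eqref{eq:ZSM1_loop}), with the correct internal-symmetry degeneracy; this needs the binary-symplectic formalism of Eqs.~\eqref{eq:z2pauli}--\eqref{eq:bsp} and the high-/low-temperature expansions that also produce the self-duality \eqref{eq:noncssKW}, and is the technical core relegated to App.~\ref{app:ncss}. Second, one must justify interchanging the $R\to1$ derivative with the infinite-system limit --- that the quenched average of $\log Z_{\rm SM}$ is finite and term-by-term differentiation of \eqref{eq:cosetform_m} is legitimate --- which is standard on the Nishimori line but deserves a remark, since $Z_{\rm SM}(\mathbf{K},E)$ can be small for atypical randomness patterns $E$.
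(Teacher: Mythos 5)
Your proposal is correct and follows essentially the same route as the paper: take the $R\to 1$ limit of Theorem~\ref{thm:noncss}, using the identity $Z_{\rm SM}(\mathbf{K},E)\propto\sum_{C\in\cV_s}P(E+C)$ from App.~\ref{app:ncss} (your $\mathcal{G}$ is the paper's $\cV_s$) to identify the quenched-disorder-averaged free energy. The only technical difference is that you start from the symmetrized coset form Eq.~\eqref{eq:renyincsssym} and then collapse the $2^{-\dim\cV_s}$ prefactor against the coset multiplicity, whereas the paper differentiates the unsymmetrized expansion Eq.~\eqref{eq:renyincssasym} directly, which already carries the single factor $P(E)$; both give the same result (note a small sign slip: the additive constant that factors out is $+\log\kappa(p)$, not $-\log\kappa(p)$). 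Your final remark about justifying the interchange of the $R\to 1$ derivative with the infinite-system limit is a legitimate point of rigor that the paper does not address explicitly.
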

\end{tcolorbox}
\noindent Here, the error chain $E\in {\cal V}\oplus {\cal V}$ represents a Pauli string in a similar fashion as Eq. \eqref{eq:z2pauli}. The probability $P(E)$ of the error chain is given by
\begin{align}
    P(E) = 
    p_0^{N-N_x(E)-N_y(E)-N_z(E)} p_x^{N_x(E)} p_y^{N_y(E)} p_z^{N_z(E)},
\end{align}
where $N_{x/y/z}(E)$ are the number of Pauli $X/Y/Z$-operators contained in the Pauli string $E$, $p_0\equiv(1-p_x-p_y-p_z)$, and $N$ is the total number of qubits. Similar to the cases of CSS code, the $R\rightarrow 1$ limit recovers the statistical models introduced in the previous literature to specifically study the decodability and the error thresholds in stabilizer codes (see Ref. \onlinecite{PhysRevX.2.021004} for example).

At this point, we have finished the general construction of the random statistical models for $S_R(\rho_{\rm m})$. These statistical models offer tools to study emergent quantum matters in decohered stabilizer codes and the DIPTs between them. The physical implication of the DIPTs for a general $R$ will be left for future studies.

For $R=2$, based on the same reasoning as in Sec. \ref{sec:differentR}, we can use the Choi-Jamio{\l}kowski isomorphism to map the error-corrupted matrix state $\rho_{\rm m}$ to its Choi representation $\kket{\rho_{\rm m}}$ in the doubled Hilbert space. We can show that
\begin{align}
    \kket{\rho_{\rm m}} \propto \left(\prod_\mu e^{\tK_x X_\mu\otimes X_\mu - \tK_y Y_\mu\otimes Y_\mu + \tK_z Z_\mu\otimes Z_\mu} \right)\kket{\rho_0},
\end{align}
where $\tK_{x,y,z}$ are given by $p_{x,y,z}$ through Eq. \eqref{eq:ncssrelation}. The negative sign in front of the $Y_\mu\otimes Y_\mu $ terms is the consequence of $Y_\mu^\mathsf{T}=-Y_\mu$ under the  Choi-Jamio{\l}kowski isomorphism. The spatial correlation function on $\kket{\rho_{\rm m}}$ is identical to the correlation functions in the $2$-replica random SM. One can construct a frustration-free parent Hamiltonian similar to Eq. \eqref{eq:Ham_doubled} for which $\kket{\rho_{\rm m}}$ is an exact ground state (see App. \ref{app:choi} for details). Therefore, the $R=2$ DIPT for a decohered stabilizer code is related to a quantum phase transition in the doubled Hilbert space.

As a simple illustration of the general construction of the statistical models, we briefly discuss its application to the Chamon model \cite{chamon2005quantum} decohered by the Pauli noise channel. The Chamon model is defined on a cubic lattice with a single qubit per site. There is one stabilizer for each cube on the lattice.  Each stabilizer is a product of 6 Pauli operators (see Fig. \ref{fig:chamon} (a)). The SM associated with the decohered Chamon model is defined on the dual cubic lattice with one classical $\mathbb Z_2$ spin per site. The interactions ${\cal O}^{x,y,z}$ in the classical Hamiltonian of SM are four-spin interactions illustrated in Fig. \ref{fig:chamon} (b). To the best of our knowledge, the phase diagram of SM (with or without randomness) has not been studied before. It is interesting to study the possible phases of the decohered Chamon model using SM in the future. 

\begin{figure}[tb]
    \centering
    \includegraphics[width = 0.95\linewidth]{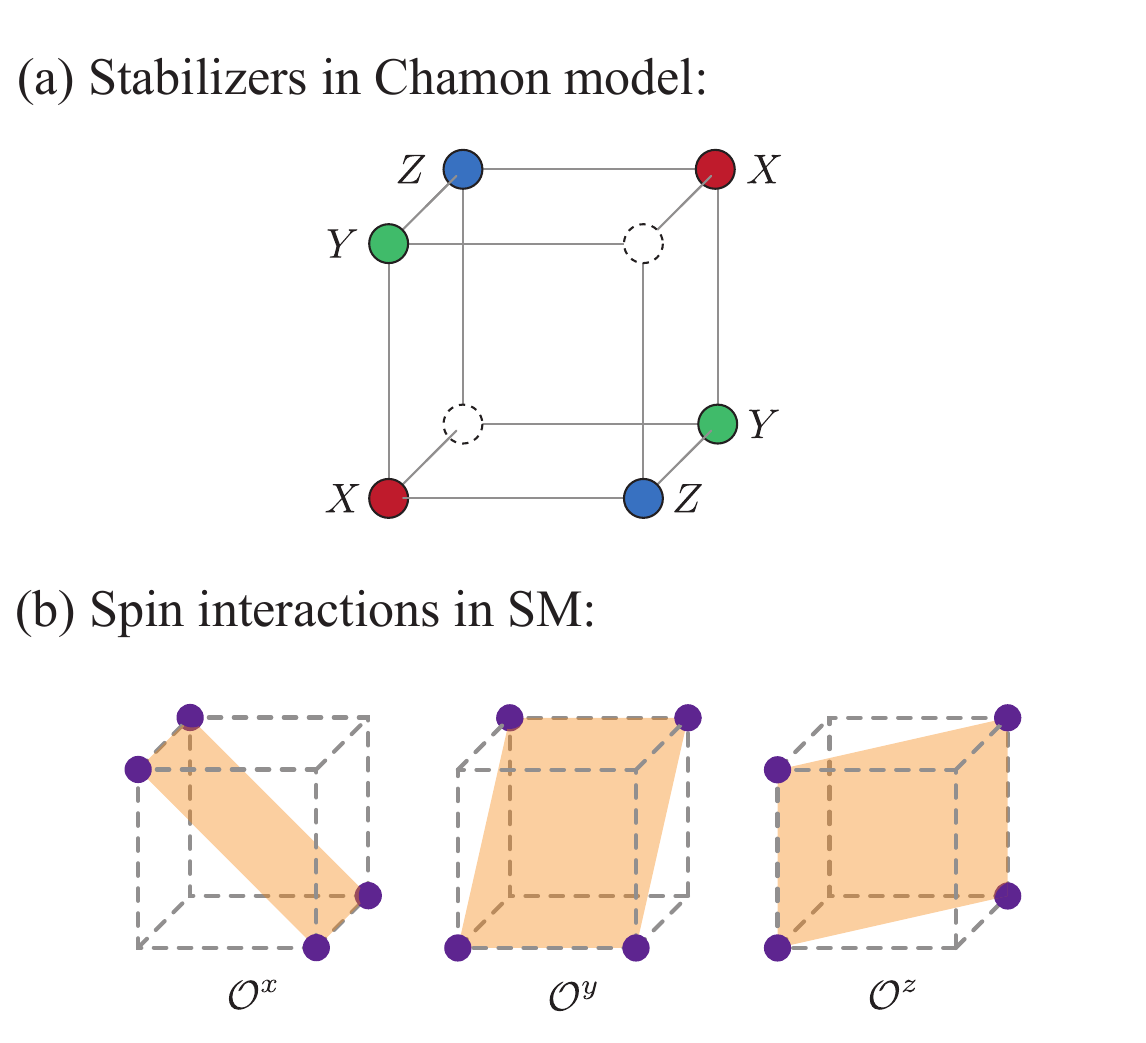}
    \caption{(a) The stabilizer of Chamon model (b) Three types of interaction ${\cal O}^{x,y,z}$ in SM: Each term (orange) is a product of four classical spins (purple dots). The dashed lines represents the edges of the dual cubic lattice.}
    \label{fig:chamon}
\end{figure}

\subsection{Dualities for general decohered stabilizer codes}

Now, we discuss the GPN (general Pauli noise) duality, a generalization of the BPD dualities (associated with the decohered CSS codes) to more general stabilizer codes. The statement of the GPN duality is the following.

\begin{tcolorbox}[width = 0.50\textwidth]
\begin{theorem}\label{thm:BPDncss}   
{\rm For $R = 2$, and $\infty$, the $R$-replica random statistical models that map to the R\'enyi entropy $S_R(\rho_{\rm m})$ at the error rates $(p_x,p_y,p_z)$ are dual to those statistical models at the error rates $(\tilde p_x, \tilde p_y,\tilde p_z)$. The two sets of dual error rates obey}
\begin{align}
\begin{dcases}
    \left(1-2p_y-2p_z\right)^2 = \frac{2\tilde{p}_0\tilde{p}_x+2\tilde{p}_y\tilde{p}_z}{\tilde{p}_0^2+\tilde{p}_x^2+\tilde{p}_y^2+\tilde{p}_z^2},\\
    \left(1-2p_z-2p_x\right)^2 =  \frac{2\tilde{p}_0\tilde{p}_y+2\tilde{p}_z\tilde{p}_x}{\tilde{p}_0^2+\tilde{p}_x^2+\tilde{p}_y^2+\tilde{p}_z^2} , &{\rm for~} R=2, \\
    \left(1-2p_x-2p_y\right)^2 =  \frac{2\tilde{p}_0\tilde{p}_z+2\tilde{p}_x\tilde{p}_y}{\tilde{p}_0^2+\tilde{p}_x^2+\tilde{p}_y^2+\tilde{p}_z^2}.
\end{dcases}
\label{eq:GPN_duality_condition_1}
\end{align}
{\rm and} 
\begin{align}
\begin{dcases}
   1-2p_y-2p_z =  \tilde p_x/ \tilde p_0, \\
   1-2p_z-2p_x = \tilde p_y/ \tilde p_0, &{\rm for~} R=\infty,\\
   1-2p_x-2p_y =  \tilde p_z/ \tilde p_0. \\
\end{dcases}
\label{eq:GPN_duality_condition_2}
\end{align}
{\rm Here, $p_0 \equiv 1-p_x-p_y-p_z$ and $\tilde p_0 \equiv 1-\tilde p_x-\tilde p_y- \tilde p_z$. }
\end{theorem}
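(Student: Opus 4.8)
The plan is to reduce both the $R=2$ and the $R\to\infty$ cases of this duality to the Kramers--Wannier-type (HLT) self-duality Eq.~\eqref{eq:noncssKW} of the non-random SM, mimicking how the BPD dualities of Theorem~\ref{thm:BPD} were reduced to the HLT duality between the non-random $\SM1$ and $\SM2$ in Sec.~\ref{sec:tapestry}. The common starting point is the two equivalent statistical-model representations of $\Tr\rho_{\rm m}^R$ in Theorem~\ref{thm:noncss}, together with the loop/weight form of $Z_{\rm SM}(\mathbf{K},E)$ generalizing Eqs.~\eqref{eq:ZSM1_loop}--\eqref{eq:ZSM2_loop}, namely $Z_{\rm SM}(\mathbf{K},E)\propto\sum_{C}P(E+C)$ with $C$ running over the ``closed'' configurations of $\cV\oplus\cV$ (those with $\prod_\mu(\cO^x_\mu)^{\mathsf{e}_\mu}(\cO^z_\mu)^{\mathsf{f}_\mu}=1$) and the prefactor depending only smoothly on $(p_x,p_y,p_z)$.

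For $R\to\infty$: a high-temperature expansion of $Z_{\rm SM}(\mathbf{K},E)$, just as around Eq.~\eqref{eq:SM1_HTE}, shows that $Z_{\rm SM}(\mathbf{K},E)\le Z_{\rm SM}(\mathbf{K},0)$, so $\Tr\rho_{\rm m}^R\big|_{R\to\infty}$ is dominated by $Z_{\rm SM}(\mathbf{K})^R$ with the non-random couplings $\mathbf{K}=\mathbf{K}(p)$ fixed by Eq.~\eqref{eq:ncssrelation}. Hence $\Tr\rho_{\rm m}^R(p)\propto\Tr\rho_{\rm m}^R(\tilde p)$ in that limit exactly when $\mathbf{K}(\tilde p)$ is the HLT image of $\mathbf{K}(p)$ under Eq.~\eqref{eq:noncssKW}. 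The rest is algebra: solving Eq.~\eqref{eq:ncssrelation} for $e^{-2K_a}=\sqrt{p_bp_c/(p_0p_a)}$ and simplifying, one establishes the key identity
\begin{align}
\frac{\tanh K_x+\tanh K_y\tanh K_z}{1+\tanh K_x\tanh K_y\tanh K_z}=1-2p_y-2p_z
\label{eq:GPN_key_identity_plan}
\end{align}
and its two cyclic companions. Since the left side is precisely the right side of the third line of Eq.~\eqref{eq:noncssKW}, it equals $e^{-2(\tilde K_y+\tilde K_z)}$, which by Eq.~\eqref{eq:ncssrelation} applied at the dual rates is $\tilde p_x/\tilde p_0$. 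Matching gives $1-2p_y-2p_z=\tilde p_x/\tilde p_0$ and its cyclic images, i.e.\ Eq.~\eqref{eq:GPN_duality_condition_2}.

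For $R=2$: I would integrate out the real random couplings as in Eqs.~\eqref{eq:2ndRenyi_rb_clean_SM1}--\eqref{eq:2ndRenyi_rp_clean_SM2} and App.~\ref{app:intdisorder}. Squaring $Z_{\rm SM}(\mathbf{K},E)\propto\sum_C P(E+C)$, shifting the disorder, and carrying out the site-factorized sum $\sum_{E_\mu}P_\mu(E_\mu)P_\mu(E_\mu+C_\mu)$, the $X$-, $Y$-, $Z$-defect weights become $2p_0p_x+2p_yp_z$, $2p_0p_y+2p_xp_z$, $2p_0p_z+2p_xp_y$ (all divided by $p_0^2+p_x^2+p_y^2+p_z^2$), so $\Tr\rho_{\rm m}^2(p)\propto Z_{\rm SM}(\mathbf{K}')$ with the \emph{non-random} SM at ``renormalized'' couplings $\mathbf{K}'=\mathbf{K}'(p)$ fixed by these weights. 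The $R=2$ GPN duality is then the HLT self-duality Eq.~\eqref{eq:noncssKW} applied to $\mathbf{K}'(p)$ versus $\mathbf{K}'(\tilde p)$; repeating the elimination above (the analog of Eq.~\eqref{eq:GPN_key_identity_plan} now returning $(1-2p_y-2p_z)^2$, the two-replica ``squaring'' of the $R\to\infty$ case, while the dual weights reproduce $\frac{2\tilde p_0\tilde p_x+2\tilde p_y\tilde p_z}{\tilde p_0^2+\tilde p_x^2+\tilde p_y^2+\tilde p_z^2}$ and cyclic permutations) yields Eq.~\eqref{eq:GPN_duality_condition_1}.

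I expect the main obstacle to be the bookkeeping in the $R=2$ reduction. Unlike the CSS case with two outcomes per qubit, here each qubit carries the four Pauli outcomes $\{I,X,Y,Z\}$ and the SM has three correlated interactions $\cO^{x,y,z}_\mu$ constrained by $\cO^x_\mu\cO^y_\mu\cO^z_\mu=1$, so one must (i) pin down the correct ``closed-configuration'' subspace of $\cV\oplus\cV$ and verify, using the vanishing-code-rate assumption, that the subspace generated by $\Tr\rho_{\rm m}^2$ agrees in the infinite-system limit with the one controlling the loop expansion of the non-random SM (the analog of trading $\vx^\perp$ for $\vz$ in Sec.~\ref{sec:tapestry}), and (ii) match the renormalized weights to a genuine coupling triple $\mathbf{K}'$ and push it through the three-line relation Eq.~\eqref{eq:noncssKW} without sign or ordering slips. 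The $R\to\infty$ case is comparatively routine once Eq.~\eqref{eq:GPN_key_identity_plan} is in place.
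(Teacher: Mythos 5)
Your proposal is correct and follows essentially the same route as the paper's Appendix D: for $R\to\infty$, dominance of the $E=0$ configuration reduces the statement to the HLT self-duality Eq.~\eqref{eq:noncssKW} of the non-random SM, and for $R=2$, integrating out the randomness gives the very defect weights $2p_0p_x+2p_yp_z$, etc.\ (the paper's $f(C_\mu,\bp)$), with the duality obtained by matching them against the ``closed-loop'' weights $(1-2p_y-2p_z)^2$, etc.\ (the paper's $\tilde f(C_\mu,\bp)$) at the dual error rate. Your intermediate identity Eq.~\eqref{eq:GPN_key_identity_plan} and its $R=2$ analogue are correct and merely make explicit the algebra the paper leaves implicit.
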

\end{tcolorbox}

\noindent A proof of this theorem is given in App. \ref{app:decoheredual}. Conceptually, the GPN dualities with $R=2$ and $\infty$ can be effectively viewed as the descendants of the self-duality of the non-random SM. For $R=2$, after we integrate out the random couplings $E$, the two-replica random SM reduces to a single copy of non-random SM.  For the limit $R=\infty$, the partition function of the $R$-replica random SM is dominated by the trivial random coupling configuration, i.e. $E=0$. Therefore, the self-duality of the non-random SM implies the GPN dualities for $R =2, \infty$.  We caution that Theorem \ref{thm:noncss} is applicable only when the error rates $p_{x,y,z}$ and $\tilde p_{x,y,z}$ are both physical, namely they are between 0 and 1. In App. \ref{app:decoheredual}, we also discuss how the GPN duality recovers the BPD duality for $R=2,\infty$ for CSS codes.

\begin{figure}[t]
    \centering
    \includegraphics[width = 1\linewidth]{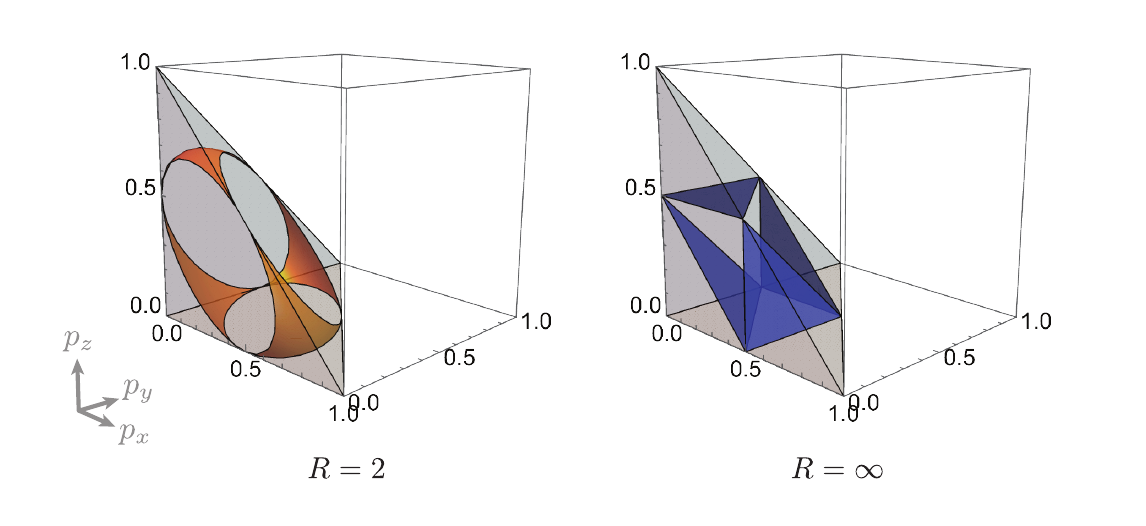}
    \caption{The surfaces of self-dual error rates for $R=2$ and $R=\infty$. The shaded area is the physical parameter regime with $p_{0,x,y,z}>0$.}
    \label{fig:self-dual_surface}
\end{figure}

For both $R=2$ and $R=\infty$, there is a surface of self-dual error rates. The error rates $(p_x, p_y, p_z)$ that satisfy the following equations map back to themselves under the GPN duality 
\begin{align}
\begin{cases}
      (1-p_x-p_y-p_z)^2 + p_x^2 +  p_y^2 + p_z^2 = \frac{1}{2},  ~~&{\rm for}~~ R = 2,  \\
     p_x + p_y +p_z = \frac{1}{2}, ~~&{\rm for}~~ R = \infty.
\end{cases}
\end{align}
Note that these self-dual conditions are derived within the regime with $p_{x,y,z}< p_0=1-p_x-p_y-p_z$. As commented earlier, the roles of $p_0$, $p_x$, $p_y$, and $p_z$ can be permuted once the Pauli noise channel is followed by a global unitary action $\prod_\mu X_\mu$, $\prod_\mu Y_\mu$, or $\prod_\mu Z_\mu$. With these permutations taken into account, the surfaces of self-dual error rates are depicted in Fig. \ref{fig:self-dual_surface}.

The surfaces of self-dual error rates are expected to constrain the phase diagram of the decohered stabilizer codes. One can argue that the renormalization group flow cannot cross this self-dual surface. A possible scenario is that this self-dual surface matches the critical surface of the DIPTs. The specific role of this self-dual surface in the phase diagram of the decohered stabilizer code should depend on the details of the model.

\section{Conclusions and Outlook}
\label{sec:conclusions}
In this paper, we systematically develop the formalism that studies the stabilizer codes decohered by Pauli noise/errors using statistical models. We focus on the R\'enyi entropies $S_R$ of the decohered codes as a probe of the systems entanglement structure. We find a general mapping between the $S_R$'s and classical statistical models that can be systematically constructed from the code's defining data. The phase transitions in these statistical models indicate non-trivial DIPTs in the decohered code. We discover intricate tapestries of dualities emerging among these statistical models. These dualities cast strong constraints on the phase diagram of the decohered quantum matter hosted in the stabilizer codes with noise. More specifically, this paper focuses on three general cases:
(1) CSS codes decohered by bit-flip and phase-flip errors (2) $em$-symmetric CSS codes under bit-flip and phase-flip decoherences. (3) General stabilizer codes under generic Pauli-noise decoherence.

First, for generic CSS codes, we construct a pair of statistical models ${\rm SM_1}$ and ${\rm SM_2}$, related to each other by an HLT duality (Eq. \eqref{eq:kw}), through an ungauging procedure. We show that the $R$-th R\'enyi entropy $S_R(\rho_{\rm b})$ for the error-corrupted mixed state $\rho_{\rm b}$ under the bit-flip decoherence is described by (1) $R$-replica SM$_1$ with rRC and (2) $R$-replica SM$_2$ with iRC. The two $R$-replica statistical models with different types of randomness are dual to each other through an HLT duality. 
Similarly, $S_R(\rho_{\rm p})$ for the error-corrupted mixed state $\rho_{\rm p}$ under phase-flip decoherence is described by (1) $R$-replica SM$_2$ with rRC and (2) $R$-replica SM$_1$ with iRC. These two models are also the HLT dual of each other. Moreover, for $R=2,3,\infty$, we find BPD dualities that relate $S_R(\rho_{\rm b})$ and $S_R(\rho_{\rm p})$, connecting the decoherence effects from the two error types.  It is a ``strong-weak" duality in that it maps strong bit-flip decoherence to weak phase-flip decoherence and vice versa.  The tapestry of dualities associated with a general CSS code is illustrated in Fig. \ref{fig:duality1}.

The classical statistical models that describe $S_R$ provide us with powerful tools to investigate the phase diagrams of decohered CSS codes. In particular, the DIPTs in a decohered CSS code are identified as the phase transitions in the corresponding statistical models. The DIPTs with $R=2,\infty$ are particularly simple as we have shown their equivalence to the phase transitions of ${\rm SM}_{1,2}$ without randomness. In general, DIPTs for different $R$'s happen at different critical error rates $p_{x/z}^\star(R)$. The universality classes with different $R$'s are also different. Thus, there is a family of DIPTs indexed by $R$ for a single CSS code. $p_{x/z}^\star(R\rightarrow 1)$ corresponds to the decoding error thresholds of the decohered CSS code, which has been extensively studied in many earlier works (for example Ref. \onlinecite{dennis2002topological, Preskill_3dToricCodeThreshold,ohno2004phase}). The DIPTs for $R\geq 2$ reveal other interesting singularities in the entanglement structure of the error-corrupted mixed states. For $R=2$, the DIPT can also be interpreted as a quantum phase transition in a doubled Hilbert space. The nature and physical implication of the DIPTs at $p^\star(R)$ for $R\geq 3$ are interesting questions for the future. We also propose a conjecture on the monotonicity of the critical error rates $p_{x/z}^\star(R)$ as a function of $R$ and present several pieces of evidence for this conjecture. 

Second, for CSS codes with an $em$ symmetry between the $X$-type and $Z$-type stabilizers, the corresponding $\SM1$ and $\SM2$ become the same model SM. The tapestry of dualities of $em$-symmetric CSS code becomes Fig. \ref{fig:self-duality}, which is effectively the tapestry of a general CSS code (Fig. \ref{fig:duality1}) folded in half. The R\'enyi entropy $S_R$ under either bit-flip or phase-flip decoherence is described by (1) $R$-replica SM with rRC and (2) $R$-replica SM with iRC. The two $R$-replica statistical models with iRC and rRC are related to each other through an HLT duality. The BPD duality for $R=2,3,\infty$ becomes a self-duality that relates strong bit-flip (phase-flip) decoherence to weak bit-flip (phase-flip) decoherence. Remarkably, these dualities yield super-universal self-dual error rates for $R=2,3,\infty$ (Eq. \eqref{eq:self-dual error rate}). If there is a unique DIPT (at a given $R$), the self-dual error rate must coincide with the critical error rate of the transition. In the two examples we examined, 2D toric code and Haah's code in 3D, the self-dual error rates indeed match the critical error rates of the DIPTs.
The properties of the DIPTs in these two codes are very different, demonstrating that the self-dual error rates encompass different universality classes of phase transitions and are super-universal. For cases with multiple DIPTs for single $R$, the BPD self-duality implies that the DIPT away from the self-dual error rates must appear in pairs, and the self-dual error rates are upper bounds to the transition with the lowest error rate.

Finally, we extend our analysis to general stabilizer codes under decoherence of general Pauli noises, i.e., independent $X$-, $Y$-, $Z$-errors with error rates $p_x, p_y, p_z$. The description of the decohered code is centered around a single statistical model SM. For a CSS code, SM is the previously constructed ${\rm SM}_{1,2}$ coupled together. The entropy $S_R(\rho_{\rm m})$ caused by the general Pauli noise is mapped to (1) $R$-replica SM with rRC and (2) $R$-replica SM with iRC. The two $R$-replica statistical models with randomness are related to each other by an HLT duality.
In addition, we find a GPN self-duality that relates different error rates for $R=2,\infty$, analogous to the BPD duality. Super-universal self-dual surfaces of error rates are identified for $R=2,\infty$ (Fig. \ref{fig:self-dual_surface}). These self-dual surfaces exist for a general stabilizer code and constrain the possible universal properties of the decohered code. One potential scenario is that the self-dual surfaces of error rates are exactly the phase boundary between different phases of the decohered code. A possible next step of investigation is to understand the physical meaning of these self-dual surfaces in specific stabilizer codes.

Our systematic construction of the statistical models provides powerful tools to investigate the DIPTs of QEC codes. For future research, an important question to address concerns the quantum-information-theoretic interpretation of these DIPTs indexed by $R$. As explained, in the $R\rightarrow 1$ limit, the DIPT is associated with the decodability of the logical information from the decohered code. Does the DIPT with a given $R>1$ mark the limit of decoherence for some concrete $R$-dependent quantum information processing protocol? Finding such protocols will deepen our understanding of the physical meaning of this family of DIPTs and will help refine the notion of topological order in error-corrupted mixed states. Moreover, such protocols will provide guidance on how to experimentally observe the DIPTs in a noisy intermediate-scale quantum platform. 

Another topic for future research pertains to our conjectured monotonic dependence of the critical error rates for DIPTs on the R\'enyi index $R$. The current conjecture concerns the bit-flip and phase-flip errors. If the conjecture is true, it will offer a tool to bound the critical error rates for a specific $R$, particularly the error threshold (in the limit of $R\rightarrow 1$), using results for larger $R$ values. We have presented a few pieces of evidence of the conjecture. Developing a deeper understanding of the relationship between statistical models at different $R$ will be important for resolving this conjecture. Moreover, it is also interesting to investigate if there is a similar relation between the critical error rates at different $R$ for other types of errors, such as the general Pauli noise. 

In this work, we have focused on the decoherence effects caused by Pauli noise/errors on the physical qubits of the QEC code. Investigating the effect of coherent errors, such as random-angle $X$- or $Z$-rotations, is a natural direction to generalize our framework. It was found that 2D surface code with coherent errors can be mapped to a 2D Ising model with complex couplings \cite{venn2023coherent}. It would be interesting to generalize the systematic construction of the statistical models and the tapestry of dualities to general QEC codes with coherent errors. For a QEC code, another type of error that affects its performance is the measurement read-out error. For the decodability problem, taking the read-out error into account results in a statistical model living in one higher dimension than the spatial dimension of the QEC code \cite{dennis2002topological}. An interesting open question is whether the measurement read-out errors can be investigated from the perspective of decohered quantum matter and if there is also a family of phase transitions similar to the family of DIPTs indexed by $R$.

Considering the decoherence-induced phases and transitions beyond stabilizer codes of qubit systems is another natural direction for future expeditions. For instance, it should be feasible to generalize our study to decohered stabilizer codes with qudits (each having a local Hilbert space of dimension $d>2$). Exploring the decoherence effect in subsystems codes \cite{Kribs2005, Kribs2006} and the recently discovered Floquet codes \cite{Hastings2021dynamically} can potentially uncover new forms of decoherence-induced quantum matter.

\acknowledgements
We thank Cenke Xu, Yuri Lensky and Yimu Bao for helpful discussions. C.-M.J. is supported by a faculty startup grant at Cornell University. 

{\it Note added} While completing this manual script, we noticed the independent related works \cite{rakovszky2023physics, zhao2023extracting}.

\onecolumngrid

\appendix

\section{Proofs of Theorem \ref{thm:sm_b} and Corollary \ref{corollary:S1_b}}\label{app:thmsm}

Let $\rb$ be the error-corrupted mixed state after bit-flip decoherence with error rate $p$. Combine Eqs. \eqref{eq:decohprobx}, \eqref{eq:renyisymm_b},
\begin{align}
    \Tr(\rho_{\rm b}^R) &= \frac{(1-p)^{NR}}{2^{{\rm dim}\vx}}\sum_{E\in {\cal V}}\sum_{C_{1,2,..,R}\in \vx}\lambda^{|E + C_1|}\cdots \lambda^{|E+C_R|} ,\label{eq:renyilambda}
\end{align}
where $\lambda = \frac{p}{1-p}$.

We will use two key observations. (1) the low-temperature expansion (LTE) of ${\rm SM_1}$ generates $\vx$; (2) the high-temperature expansion (HTE) of ${\rm SM_2}$ generates $\vz^\perp$ (see Eqs. \eqref{eq:SM1_HTE}, \eqref{eq:SM2_LTE}).
Take 3D toric code as an example (see \ref{sec:3dtoric}). The corresponding  $\rm SM_1$ is the 3D Ising model, and $\rm SM_2$ is the 3D lattice $\mathbb{Z}_2$ gauge theory. $\vx$ consists of boundary (in the sense that it bounds a volume) surfaces on the dual lattice. LTE of Ising model gives domain walls which are precisely those surfaces, while HTE of lattice $\mathbb{Z}_2$ gauge theory gives closed surfaces (not necessarily a boundary) which corresponds to $\vz^\perp$.

We will also substitute $\sum_{\vz^\perp} \to \sum_{\vx}$. This substitution essentially ignores the vectors in $\vz^\perp/\vx$, which corresponds to the logical operators. Under our assumption, there are no local logical operators. Hence, we argue that the contributions to the partition function from the vectors $\vz^\perp/\vx$ are exponentially suppressed as the system size goes to infinity. Additionally, we know that $N_{\rm c} = \text{dim}\left(\vz^\perp/\vx\right)$ is the number of logical qubits. And we've assumed the code rate $N_{\rm c} /N \rightarrow 0$ in the large system limit. Hence, when we substitute $\sum_{\vz^\perp}$ with $\sum_{\vx}$, the change in the free energy density of the corresponding statistical model is negligible in the infinite-system limit. 

Since free energy density is the standard diagnosis for phase transitions, we can do this substitution safely for the construction of the statistical models that target the DIPTs. In the following, we will use $\ninfeq$ whenever this substitution happens.

We now have all the ingredients for the proof. 
\begin{proof}[Proof of Theorem \ref{thm:sm_b}]
  \mbox{}\\*
    First consider LTE of $\rm SM_1$,
    \begin{align}
         Z_{\rm SM_1}(K, E) &= \sum_{\{\tau_i=\pm 1\}}\exp \left(K \sum_\mu  (-1)^{E_\mu}\mathcal{O}^z_\mu[\tau]\right)\nonumber\\
         &= N_s \sum_{C\in\vx}\exp \left(K \sum_\mu  (-1)^{E_\mu+C_\mu}\right)\nonumber\\
         &= N_s \sum_{C\in\vx}\exp \Big(K (N-2|E+C|)\Big)\nonumber\\
         &=\lambda^{-N/2}N_s\sum_{C\in \vx}\lambda^{|E+C|}. \label{eq:mapsm1}
    \end{align}
    In the second line, the summation is over all possible ``domain walls" $C$ generated by spin flips. The interaction term flips the sign on the domain walls and takes the value $\cO^z_\mu[\tau] = (-1)^{C_\mu}$. The symmetry factor $N_s$ results from different spin flips giving the same domain walls when symmetry is present in the system.  For instance, $N_s=2$ for the Ising model because of the global $\mathbb{Z}_2$ symmetry. And $N_s=3L$ for the plaquette Ising model, where $L$ is the length of the lattice, due to a planar $\mathbb{Z}_2$ symmetry (flipping all the spins on any planes) \cite{Vijay_2016}. In the third equality, we used the identity, $\sum_\mu (-1)^{E_\mu} = N-2|E|$. Compare with Eq. \eqref{eq:renyilambda}, we obtain,
\begin{equation}
    \Tr(\rho_{\rm b}^R) = \frac{[p(1-p)]^{RN/2}}{N_s^R 2^{{\rm dim}\vx} }\sum_{E\in {\cal V}} \left(Z_{\rm SM_1}(K, E)\right)^R.
\end{equation}
Second, consider HTE of $\rm SM_2$, 
\begin{align}
W_{\rm SM_2}(\tilde{K}, E) 
    &= \sum_{\{\tilde{\tau}_j=\pm 1\}}e^{\tilde{K} \sum_\nu \cO^x_\nu[\tilde{\tau}]} \prod_{\mu}\left(\cO^x_{\mu}[\tilde{\tau}]\right)^{E_\mu}\nonumber\\
    &= (1-\lambda^2)^{-N/2}\sum_{\{\tilde{\tau}_j= \pm1\}}\prod_\nu(1+\lambda \cO^x_\nu[\tilde{\tau}]) \prod_{\mu}\left(\cO^x_\mu[\tilde{\tau}]\right)^{E_\mu}\nonumber\\
     &=(1-\lambda^2)^{-N/2} 2^{N_B}\sum_{C\in \vz^\perp} \lambda^{|E+C|}\nonumber\\
    &\ninfeq (1-\lambda^2)^{-N/2}2^{N_B}\sum_{C\in \vx} \lambda^{|E+C|}. \label{eq:wsm2}
\end{align}
In the second equality, we used $\lambda = \tanh(\tilde{K})$. In the third equality, we kept the products of $\cO^x_\nu$ which cancel all the spins in the random insertion $\prod_{\mu}\left(\cO^x_{\mu}[\tilde{\tau}]\right)^{E_\mu}$. Such products are necessarily of the form $\prod_\nu \left(\cO^x_\nu[\tilde{\tau}]\right)^{(E+C)_\nu}, C\in \vz^\perp$. $N_B$ is the number of $B_j[Z]$ stabilizers or number of spins $\tilde{\tau}_j$.  Compare with Eq. \eqref{eq:renyilambda}, we obtain,
\begin{equation}
    \Tr(\rho_{\rm b}^R) \ninfeq \frac{(1-2p)^{RN/2}}{2^{\left({\rm dim}\vx + N_BR\right)}}\sum_{E\in {\cal V}}  \left(W_{\rm SM_2}(\tilde{K}, E)\right)^R.
\end{equation}
As a side note, Eqs. \eqref{eq:mapsm1} and \eqref{eq:wsm2} demonstrate the HLT duality between the random statistical models $Z_{\rm SM_1}(K,E)\propto W_{\rm SM_2}(\tK, E)$.

Finally, to obtain von Neumann entropy, combine Eq \eqref{eq:vn} and Eq. \eqref{eq:mapsm1},
\begin{equation}
    S_1(\rho_{\rm b}) = -\sum_{E\in {\cal V}} P_x(E)\log(Z_{\rm SM_1}(K,E)) - \frac{N}{2}\log(p(1-p)) + \log(N_s).
\end{equation}
\end{proof}

\section{Proof of Theorem \ref{thm:BPD}} \label{app:bpd}
Let us first consider bit-flip decoherence with error rate $p$ and $\lambda = \frac{p}{1-p}$ (ignoring subscripts $x$ for clarity). Start with Eq. \eqref{eq:renyisymm_b}, 
\begin{align}
    \text{Tr}(\rho_{\rm b}^R)  &=(1-p)^{RN} \sum_{E\in{\cal V}}\lambda^{|E|} \sum_{C_{2,..,R}\in \vx} \lambda^{|E+C_2|}\cdots\lambda^{|E+C_R|}. \label{eq:form1R1}
\end{align}
Now let's sum over $E$. Observe,
\begin{align}
    |E + C_\alpha| &= \sum_{\mu=1}^N (E_\mu + C_{\alpha,\mu}-2E_\mu C_{\alpha,\mu}),
\end{align}
where all arithmetic is done in $\mathbb{Z}$.
    \begin{align}
    \text{Tr}(\rho_{\rm b}^R) &= (1-p)^{RN}\sum_{C_{2,..,R}\in \vx} \sum_{\{E_\mu = 0,1\}}\lambda^{\sum_\mu\left[(R-2\sum_{\alpha=2}^R C_{\alpha,\mu})E_\mu +\sum_{\alpha=2}^R C_{\alpha,\mu} \right]}\nonumber\\
    &=(1-p)^{RN}\sum_{C_{2,..,R}\in \vx} \prod_{\mu=1}^N \sum_{E_\mu =0,1}\lambda^{\left[(R-2n_\mu)E_\mu +n_\mu \right]}\nonumber\\
    &=(1-p)^{RN}\sum_{C_{2,..,R}\in \vx} \prod_{\mu=1}^N \left(\lambda^{n_\mu} + \lambda^{R-n_\mu}\right) \label{eq:renyi1}\\
    &=(1-p)^{RN}(1+\lambda^R)^N\sum_{C_{2,..,R}\in \vx}\prod_{\mu=1}^N \left(\frac{\lambda^{n_\mu} + \lambda^{R-n_\mu}}{1+\lambda^R}\right), \label{eq:form1}
\end{align}
where we have defined the ``occupation number" $n_\mu \equiv \sum_{\alpha=2}^R C_{\alpha,\mu}$. In the last step, we normalized the product so that unoccupied links contribute a weight of $1$.

To prove the duality, we rearrange the sum in $\vx$ into a sum in $\vz$. Introduce an indicator function, 
\begin{lemma}
$$\delta(C\in \vx)  = \frac{1}{2^{{\rm dim}\vx^\perp}} \sum_{\tilde{C}\in \vx^\perp} (-1)^{C\cdot \tilde{C}}. $$
\end{lemma}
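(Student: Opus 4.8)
The identity is a purely finite-dimensional statement about the $\mathbb{Z}_2$ vector space ${\cal V}=\mathbb{Z}_2^N$, so the plan is to prove it directly by a $\mathbb{Z}_2$ character-sum argument, with no appeal to the infinite-system limit. Fix a vector $C\in{\cal V}$. The entire content of the lemma is the evaluation
\begin{align}
    \sum_{\tilde C\in\vx^\perp}(-1)^{C\cdot\tilde C}=
    \begin{dcases}
        2^{{\rm dim}\,\vx^\perp}, & C\in\vx,\\
        0, & C\notin\vx,
    \end{dcases}
    \label{eq:charsum}
\end{align}
after which dividing both sides by $2^{{\rm dim}\,\vx^\perp}$ yields the stated formula for $\delta(C\in\vx)$.

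First I would dispose of the case $C\in\vx$. By the definition of the orthogonal complement, every $\tilde C\in\vx^\perp$ obeys $C\cdot\tilde C=0$, so each summand in Eq.~\eqref{eq:charsum} equals $1$ and the sum is simply $|\vx^\perp|=2^{{\rm dim}\,\vx^\perp}$, as required.

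For the case $C\notin\vx$, the key input is the double-complement identity $(\vx^\perp)^\perp=\vx$. I would establish it by a rank--nullity count rather than an ``orthogonal decomposition,'' since over $\mathbb{F}_2$ the dot product admits isotropic vectors and $\vx\cap\vx^\perp$ need not be trivial: taking a matrix $M$ whose rows form a basis of $\vx$, one has $\vx^\perp=\ker M$, hence ${\rm dim}\,\vx^\perp=N-{\rm dim}\,\vx$; applying this once more gives ${\rm dim}\,(\vx^\perp)^\perp={\rm dim}\,\vx$, and since $\vx\subseteq(\vx^\perp)^\perp$ trivially, equality follows. Therefore $C\notin\vx=(\vx^\perp)^\perp$ guarantees the existence of some $\tilde C_0\in\vx^\perp$ with $C\cdot\tilde C_0=1$. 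The translation $\tilde C\mapsto\tilde C+\tilde C_0$ is a bijection of $\vx^\perp$ onto itself (a subspace is closed under adding one of its elements, and the map is an involution in characteristic $2$), and it multiplies every summand by $(-1)^{C\cdot\tilde C_0}=-1$; hence the sum equals its own negative and must vanish, completing Eq.~\eqref{eq:charsum}.

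I do not anticipate a genuine obstacle, as the whole argument is only a few lines; the single point that warrants care is precisely the identity $(\vx^\perp)^\perp=\vx$, which holds for subspaces of $\mathbb{F}_2^N$ under the standard dot product but cannot be obtained from the naive ``$\vx\oplus\vx^\perp={\cal V}$'' reasoning that would be valid over $\mathbb{R}$. The dimension-count route above is the safe way to secure it, and it is consistent with the fact that this indicator-function trick is just the elementary instance of Wegner's general duality formalism already invoked in the main text.
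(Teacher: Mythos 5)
Your proof is correct and follows essentially the same route as the paper's: treat the case $C\in\vx$ by noting all summands equal $1$, and treat $C\notin\vx$ by invoking $(\vx^\perp)^\perp=\vx$ to produce a $\tilde C_0$ with $C\cdot\tilde C_0=1$ and then canceling the sum via the translation $\tilde C\mapsto\tilde C+\tilde C_0$. The one small addition you make is to justify $(\vx^\perp)^\perp=\vx$ explicitly by rank--nullity over $\mathbb{F}_2$ (a fact the paper uses without comment), which is a sensible precaution given that $\vx\cap\vx^\perp$ can be nontrivial in characteristic $2$.
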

\begin{proof}\mbox{}\\*
If $C\in \vx$ , then $C\cdot \tilde{C}=0$ $\forall \tilde{C}\in \vx^\perp$. Thus $$\text{RHS}=\frac{1}{2^{{\rm dim}\vx^\perp}} \sum_{\tilde{C}\in \vx^\perp} 1 =1.$$
If $C\notin \vx = (\vx^\perp)^\perp$, then $\exists \tilde{C}_0\in \vx^\perp$, s.t. $C\cdot \tilde{C}_0 =1$. We can rewrite, $$\text{RHS} = \frac{1}{2\cdot 2^{{\rm dim}\vx^\perp}} \sum_{\tilde{C}\in \vx^\perp} \left((-1)^{C\cdot \tilde{C}} + (-1)^{C\cdot (\tilde{C}+\tilde{C}_0)}\right)=0.$$  
\end{proof}
Insert this indicator function into Eq. \eqref{eq:renyi1},

\begin{align}
    \Tr(\rho_{\rm b}^R)
    =&\frac{(1-p)^{RN}}{2^{(R-1){\rm dim}\vx^\perp}} \sum_{\tilde{C}_{2,..,R} \in \vx^\perp}\sum_{\{C_{\alpha,\mu} = 0,1\}} \prod_{\mu=1}^N\left[\left(\lambda^{\sum_\alpha C_{\alpha,\mu}} + \lambda^{R-\sum_{\alpha} C_{\alpha,\mu}}\right)(-1)^{\sum_{\alpha} C_{\alpha,\mu} \tilde{C}_{\alpha,\mu}}\right]\nonumber\\
    =&\frac{(1-p)^{RN}}{2^{(R-1){\rm dim}\vx^\perp}} \sum_{\tilde{C}_{2,..,R} \in \vx^\perp}\prod_{\mu=1}^N \left[ \prod_{\alpha=2}^R\left(\sum_{C_{\alpha,\mu}=0,1} \lambda^{C_{\alpha,\mu}}(-1)^{C_{\alpha,\mu} \tilde{C}_{\alpha,\mu}}\right) + \lambda^R \prod_{\alpha=2}^R\left(\sum_{C_{\alpha,\mu}=0,1}\lambda^{-C_{\alpha,\mu}}(-1)^{C_{\alpha,\mu} \tilde{C}_{\alpha,\mu}}\right)\right]\nonumber\\
     =&\frac{(1-p)^{RN}}{2^{(R-1){\rm dim}\vx^\perp}} \sum_{\tilde{C}_{2,..,R} \in \vx^\perp}\prod_{\mu=1}^N \left\{ (1+\lambda)^{R-1}[1+(-1)^{\tilde{n}_\mu}\lambda]\left(\frac{1-\lambda}{1+\lambda}\right)^{\tilde{n}_\mu}\right\}\nonumber\\
     =&\frac{1}{2^{(R-1){\rm dim}\vx^\perp}} \sum_{\tilde{C}_{2,..,R} \in \vx^\perp}\prod_{\mu=1}^N \left[\frac{1+(-1)^{\tilde{n}_\mu}\lambda}{1+\lambda} \left(\frac{1-\lambda}{1+\lambda}\right)^{\tilde{n}_\mu}\right]\nonumber\\
     \ninfeq&\frac{1}{2^{(R-1){\rm dim}\vx^\perp}} \sum_{\tilde{C}_{2,..,R} \in \vz}\prod_{\mu=1}^N \left[\frac{1+(-1)^{\tilde{n}_\mu}\lambda}{1+\lambda} \left(\frac{1-\lambda}{1+\lambda}\right)^{\tilde{n}_\mu}\right], \label{eq:form2}
\end{align}
with $\tilde{n}_\mu \equiv \sum_{\alpha=2}^R \tilde{C}_{\alpha,\mu}$. In the last equality, we replaced $\vx^\perp$ by $\vz$. Now let's consider phase-flip error with error rate $p_z$, $\lambda_z \equiv \frac{p_z}{1-p_z}$. Applying Eq. \eqref{eq:form1} but swapping $X$ and $Z$, 
\begin{align}
\text{Tr}(\rho_{\rm p}^R)
=(1-p_z)^{RN}(1+\lambda_z^R)^N\sum_{\tilde{C}_{2,..R}\in \vz}\prod_{\mu=1}^N \left(\frac{\lambda_z^{\tilde{n}_\mu} + \lambda_z^{R-\tilde{n}_\mu}}{1+\lambda_z^R}\right) ,\label{eq:form1zdecoh}
\end{align}
Compare Eqs. \eqref{eq:form2} and \eqref{eq:form1zdecoh}, the sum over $\vz$ will have the same weight if we can match (restoring subscripts $x$),
\begin{equation}
    \frac{1+(-1)^{\tilde{n}_\mu}\lambda_x}{1+\lambda_x} \left(\frac{1-\lambda_x}{1+\lambda_x}\right)^{\tilde{n}_\mu} = \frac{\lambda_z^{\tilde{n}_\mu} + \lambda_z^{R-\tilde{n}_\mu}}{1+\lambda_z^R}, \label{eq:matching}
\end{equation}
for every $\tilde{n}_\mu = 1,\dots,R-1$, with some function $\lambda_z(\lambda_x)$. Note $\lambda_z$ represents the phase-flip error rate and shall not depend on $\tilde{n}$.  We now show that this matching is possible for $R=2,3$.

For $R=2$, Eq. \eqref{eq:matching} gives only one equation for $\tilde{n}=1$,
\begin{align}
    &\left(\frac{1-\lambda_x}{1+\lambda_x}\right)^2 = \frac{2\lambda_z}{1+\lambda_z^2} \nonumber\\
    \Leftrightarrow & \left[(1-p_x)^2 + p_x^2\right]\left[(1-p_z)^2+p_z^2\right]=\frac{1}{2}.\label{eq:dualr2}
\end{align}
For $R=3$, Eq. \eqref{eq:matching} looks the same for $\tilde{n}=1,2$. There is still only one equation,
\begin{align}
    &\left(\frac{1-\lambda_x}{1+\lambda_x}\right)^2 =  \frac{\lambda_z+ \lambda_z^2}{1+\lambda_z^3}\nonumber\\
    \Leftrightarrow & \left[(1-p_x)^3 + p_x^3\right]\left[(1-p_z)^3+p_z^3\right]=\frac{1}{4}.
\end{align}
For $R\geq4$, there are, in general, no solutions because the system of equations is overdetermined.

The $R\to \infty$ duality can be readily understood with the statistical models. We observe that the partition functions with randomness are always smaller than the non-random partition function,
\begin{lemma}
\begin{equation}
    Z_{\SM{1}}(K,E)\leq Z_{\SM{1}}(K),~~~~W_{\SM{1}}(K, E)\leq Z_{\SM{1}}(K).
\end{equation}
\end{lemma}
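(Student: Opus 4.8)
The plan is to read off both inequalities from the loop/domain-wall expansions of the single-replica partition functions already recorded above, exploiting the fact that the randomness $E$ enters these expansions only as an overall sign multiplying manifestly non-negative Boltzmann weights.

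First I would treat $Z_{\SM1}(K,E)$ through the high-temperature expansion of Eq.~\eqref{eq:ZSM1_dis}: writing $\exp\!\big(K\sum_\mu(-1)^{E_\mu}\mathcal{O}^z_\mu[\tau]\big)=(\cosh K)^N\prod_\mu\big(1+(-1)^{E_\mu}\mathcal{O}^z_\mu[\tau]\tanh K\big)$, expanding the product over $\mu$, and summing over $\{\tau_i=\pm1\}$, only the qubit subsets $D$ with $\prod_{\mu\in D}\mathcal{O}^z_\mu[\tau]\equiv 1$, i.e.\ $D\in\vx^\perp$, survive. This gives
\begin{align}
    Z_{\SM1}(K,E)=c_K\sum_{D\in\vx^\perp}(-1)^{E\cdot D}\,(\tanh K)^{|D|},
\end{align}
with $c_K=(\cosh K)^N\,2^{n_\tau}>0$ a constant independent of $E$, where $n_\tau$ is the number of classical spins $\tau_i$. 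Since $p_x\in[0,1/2]$ forces $K>0$, every weight $(\tanh K)^{|D|}$ is non-negative, so $(-1)^{E\cdot D}(\tanh K)^{|D|}\le(\tanh K)^{|D|}$ term by term; summing over $D\in\vx^\perp$ and recognising the $E=0$ sum as $Z_{\SM1}(K)/c_K$ yields $Z_{\SM1}(K,E)\le Z_{\SM1}(K)$. For $W_{\SM1}(K,E)$ I would instead use the low-temperature expansion of $H_{\SM1}$: a spin configuration is labelled by its domain-wall vector $C\in\vx$ with $\mathcal{O}^z_\mu[\tau]=(-1)^{C_\mu}$, so $e^{-H_{\SM1}}=e^{KN}e^{-2K|C|}$, and from Eq.~\eqref{eq:WSM1_dis} the inserted factor is $\prod_\mu(\mathcal{O}^z_\mu[\tau])^{E_\mu}=(-1)^{E\cdot C}$. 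Hence
\begin{align}
    W_{\SM1}(K,E)=N_s\,e^{KN}\sum_{C\in\vx}(-1)^{E\cdot C}\,e^{-2K|C|},
\end{align}
where $N_s$ is the same internal-symmetry factor of $\SM1$ that appears when the randomness-free low-temperature expansion is resummed into $Z_{\SM1}(K)=N_s\,e^{KN}\sum_{C\in\vx}e^{-2K|C|}$. As before each $e^{-2K|C|}>0$, so replacing $(-1)^{E\cdot C}$ by $+1$ can only increase the sum, giving $W_{\SM1}(K,E)\le Z_{\SM1}(K)$.

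Both bounds are saturated at $E=0$, so the randomness-free partition function is the maximum over all disorder patterns. This is exactly the input used next: because $\Tr\rho_{\rm b}^R=\sum_E Z_{\SM1}(K,E)^R$ is dominated in the $R\to\infty$ limit by the $E$ maximising $Z_{\SM1}(K,E)$ --- namely $E=0$ --- the $R\to\infty$ behaviour is governed by the non-random $\SM1$, and likewise $\Tr\rho_{\rm p}^R$ by the non-random $\SM2$, so the HLT duality between non-random $\SM1$ and $\SM2$ descends to the BPD duality at $R\to\infty$. I do not anticipate a serious obstacle: the only points needing care are that the two expansions genuinely have non-negative coefficients, which holds because $0\le\tanh K<1$ for $p_x\in[0,1/2]$, and that the comparison is between the same model on the same lattice at the same coupling, so the prefactors $c_K$ and $N_s\,e^{KN}$ are identical on the two sides of each inequality. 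In fact each inequality already holds at finite system size for its expansion; the $\ninfeq$ replacements $\vx^\perp\to\vz$ employed elsewhere play no role in the bound itself.
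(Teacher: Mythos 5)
Your proof is correct and mirrors the paper's argument exactly: both use the high-temperature expansion of $Z_{\SM1}(K,E)$ over $\vx^\perp$ and the low-temperature expansion of $W_{\SM1}(K,E)$ over $\vx$, and both inequalities then follow term-by-term from $(-1)^{E\cdot D}\le 1$ against manifestly non-negative weights $(\tanh K)^{|D|}$ and $e^{-2K|C|}$ with identical $E$-independent prefactors. No gaps.
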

\begin{proof}\mbox{}\\*
For the first part of the lemma, write $Z_{\SM{1}}(K,E)$ with HTE,
\begin{align}
    Z_{\rm SM_1}(K, E) &= \sum_{\{\tau_i=\pm 1\}}\exp \left(K \sum_\mu  (-1)^{E_\mu}\mathcal{O}^z_\mu[\tau]\right) \nonumber\\
    &=\cosh(K)^N 2^{N_A}\sum_{\tilde{C}\in \vx^{\perp}}\tanh(K)^{|\tilde{C}|}(-1)^{E\cdot \tilde{C}} \nonumber\\
    &\leq \cosh(K)^N 2^{N_A}\sum_{\tilde{C}\in \vx^{\perp}}\tanh(K)^{|\tilde{C}|} = Z_{\rm SM_1}(K),
\end{align}
where $N_A$ is the number of $A_i[X]$ stabilizers or number of $\tau_i$ spins.

For the second part of the lemma, write $W_{\SM{1}}(K, E)$ with LTE,
\begin{align}
    W_{\SM{1}}(K, E) &=\sum_{\{\tau_i=\pm 1\}} e^{K\sum_\nu\cO^z_\nu[\tau]} \prod_\mu \Big(\cO^z_\mu[\tau]\Big)^{E_\mu} \nonumber\\
    &=N_s\sum_{C\in \vx} \exp\left(K(-1)^{|C|}\right) (-1)^{C\cdot E} \nonumber\\
    &\leq N_s\sum_{C\in \vx} \exp\left(K(-1)^{|C|}\right) =Z_{\rm SM_1}(K).
\end{align}
\end{proof}
Now consider a bit-flip error with $\frac{p_x}{1-p_x} = e^{-2K}$ and a phase-flip error with $\frac{p_z}{1-p_z} = \tanh(K)$. In both cases, the quantity $\Tr(\rho^\infty) \approx (Z_{\rm SM_1}(K))^R$ (upto proportionality constant), because the partition functions with $E\neq 0$ are exponentially suppressed by large $R$. Thus the two decoherences are dual at $R\to \infty$ with the relation,
\begin{align}
    (1-p_x)(1-p_z) = \frac{1}{2}.
\end{align}
Note we assumed $p_x,p_z \in (0,\frac{1}{2})$. We observe in all three cases $R=2,3,\infty$, the BPD duality relates strong bit-flip decoherence ($p_x\to 1/2$) to weak phase-flip decoherence ($p_z\to0$) and vice versa.

\section{Integrate out the randomness of replica statistical models} \label{app:intdisorder}
In this section, we integrate out randomness from the replica statistical models. We shall see that effective inter-replica interactions emerge which are ``mediated" by randomness. We also provide an understanding of the $R=2,3$ BPD dualities from these statistical models. Consider bit-flip error-corrupted mixed state $\rho_{\rm b}$ and the quantity related to R\'enyi entropy $\text{Tr}(\rho_{\rm b}^R)$. Let us first look at the corresponding statistical model with rRC. Combine Eqs. \eqref{eq:renyisymm_b}, \eqref{eq:mapsm1} and the identity $\sum_\mu(-1)^{E_\mu} = N-2|E|$,
\begin{align}
    {\rm Tr}(\rho_{\rm b}^R)&\propto \sum_{E\in {\cal V}}\exp\left(K\sum_\mu (-1)^{E_\mu}\right) \left(Z_{{\rm SM}_1}(K,E)\right)^{R-1}\nonumber\\
    &=\sum_{\{E_\mu =0,1\}} \sum_{\{\tau_j^\alpha=\pm 1\}} \exp\left(K\sum_\mu (-1)^{E_\mu} \left(1+\sum_{\alpha=1}^{R-1} {\cal O}^z_\mu[\tau^\alpha] \right)\right)\nonumber\\
    &=\sum_{\{\tau_j^\alpha=\pm 1\}} \prod_{\mu}\sum_{E_\mu=0,1}\exp\left(K (-1)^{E_\mu} \left(1+\sum_{\alpha=1}^{R-1} {\cal O}^z_\mu[\tau^\alpha] \right)\right)\nonumber\\
    &=\sum_{\{\tau_j^\alpha=\pm 1\}} \prod_{\mu}\left(e^{K  \left(1+\sum_{\alpha=1}^{R-1} {\cal O}^z_\mu[\tau^\alpha] \right)} + e^{-K  \left(1+\sum_{\alpha=1}^{R-1} {\cal O}^z_\mu[\tau^\alpha] \right)}\right)\nonumber\\
     &= \sum_{\{\tau_j^\alpha=\pm 1\}} \prod_{\mu} 2(\cosh(K))^{R} \sum_{r=0}^{R-1}\sum_{\alpha_1<\alpha_2<...<\alpha_r} (\tanh(K))^r \left( \frac{1+ (-1)^r e^{-2K}}{1+ e^{-2K}}\right){\cal O}^{\alpha_1}\dots {\cal O}^{\alpha_r},  \label{eq:intdisZ}
\end{align}
where in the last line we abbreviated ${\cal O}^\alpha \equiv {\cal O}^z_\mu[\tau^\alpha]$, $\alpha = 1,..,R-1$ is the replica index since we now have $R-1$ copies of spin models. The last line manifests arbitrary inter-replica coupling ${\cal O}^{\alpha_1}\dots {\cal O}^{\alpha_r}$ at the same site $\mu$. We also define ${\cal O}^{\alpha_1}\dots {\cal O}^{\alpha_r}= 1$ when $r=0$.

Let us now consider the special cases $R=2,3$. First, when $R=2$, the replica theory Eq. \eqref{eq:intdisZ} has only a single copy,
\begin{align}
    \text{Tr}(\rho_{\rm b}^2)    &\propto \sum_{\{\tau_j=\pm 1\}} \prod_\mu \Big(1+ (\tanh(K))^2 {\cal O}^z_\mu[\tau] \Big) \nonumber\\
    &\propto \sum_{\{\tau_j=\pm 1\}} \prod_\mu e^{K'{\cal O}^z_\mu[\tau]}  = Z_{{\rm SM}_1}(K'), \label{eq:R2ren}
\end{align}
where $\tanh(K') = (\tanh(K))^2$. So upon integrating out errors $E$, the statistical model with randomness reduces to a clean model with renormalized coupling $K'$.

Second, when $R=3$, there are now two replicas $\alpha =1,2$, 
\begin{align}
    \text{Tr}(\rho_{\rm b}^3) &\propto  \sum_{\{\tau_j^\alpha=\pm 1\}} \prod_\mu \Big(1+ (\tanh(K))^2(\cO^1+\cO^2 + \cO^1\cO^2)\Big) \nonumber\\
    &\propto \sum_{\{\tau_j^\alpha=\pm 1\}} \prod_\mu e^{K''(\cO^1+\cO^2+\cO^1\cO^2)}, \label{eq:genak1}
\end{align}
where $\frac{\tanh(K'')+(\tanh(K''))^2}{1+(\tanh(K''))^3} = (\tanh(K))^2$. For Ising model, $\cO_{\braket{ij}} = \tau_i\tau_j$, Eq. \eqref{eq:genak1} becomes the partition function of Ashkin-Teller model.

Let us now examine the case with iRC. Combine Eqs. \eqref{eq:renyisymm_b} and \eqref{eq:wsm2},
\begin{align}
    \text{Tr}(\rho_{\rm b}^R)&\propto  \sum_{E\in {\cal V}}\exp\left(K\sum_\mu (-1)^{E_\mu}\right) (W_{{\rm SM}_2}(\tK, E))^{R-1} \nonumber\\
    &=\sum_{\{E_\mu =0,1\}}\sum_{\{\ttau_j^\alpha = \pm 1\}}e^{K\sum_\mu(-1)^{E_\mu}} e^{\tK \sum_{\nu, \alpha} \cO^x_\nu[\ttau^\alpha]} \prod_{\sigma, \alpha}\left(\cO^x_{\sigma}[\ttau^\alpha]\right)^{E_\sigma}\nonumber\\
    &=\sum_{\{\ttau_j^\alpha = \pm 1\}} \prod_\mu e^{\tK \sum_{ \alpha} \cO^x_\mu[\ttau^\alpha]} \sum_{E_\mu = 0,1} e^{K(-1)^{E_\mu}} \left(\prod_\alpha \cO^x_\mu[\ttau^\alpha]\right)^{E_\mu}\nonumber\\
    &=\sum_{\{\ttau_j^\alpha = \pm 1\}} \prod_\mu \frac{e^K}{\cosh(\tK)} \exp\left(\tK \left(\sum_{ \alpha=1}^{R-1} \tcO^\alpha + \prod_{\beta=1}^{R-1}\tcO^\beta \right)\right), \label{eq:intdisW}
\end{align}
where $\alpha, \beta = 1,..,R-1$ are the replica indices. In the last line, we again adopts the shorthand $\tcO^{\alpha,\beta} \equiv \cO^x_\mu[\ttau^{\alpha,\beta}]$. We see that the effective interaction in Eq. \eqref{eq:intdisW} is more constrained than Eq. \eqref{eq:intdisZ}. In the former, different replica copies only couple via the product $\prod_\alpha \tcO^\alpha$ of all copies, while in the latter, arbitrary products are present. However, this difference disappears when $R=2,3$, as we can observe by comparing Eqs. \eqref{eq:R2ren}, \eqref{eq:genak1} and \eqref{eq:intdisW}. This coincidence facilitates the BPD duality. We note that Eq. \eqref{eq:intdisW} was derived for the special case of 2D toric code in Ref. \onlinecite{FanBaoTopoMemory} while our result applies for any CSS code.

\section{Proofs of Theorem \ref{thm:noncss} and Corollary \ref{corollary:S1_m}} \label{app:ncss}
Let us first introduce some notations. As mentioned in the main text, on an $N$-qubit system, a Pauli string $E$ representing errors can be specified by a $\mathbb{Z}_2$ vector of length $2N$, 
\begin{equation}
E = (\sfe,\sff) \in \cV\oplus\cV.
\end{equation}
The Pauli operator acting on the $\mu$-th qubit is $E_\mu=I, X,Y,Z$ if $(\sfe_\mu, \sff_\mu) = (0,0), (1,0), (1,1), (0,1)$ respectively. In each error chain $E$,  the number of $X, Y, Z$ operators, $N_x(E), N_y(E), N_z(E)$ are given by,
\begin{align}
    N_x(E) = \frac{|\sfe|-|\sff|+|\sfe+\sff|}{2}, \hspace{.5cm} N_y(E) = \frac{|\sfe|+|\sff|-|\sfe+\sff|}{2}, \hspace{.5cm} N_z(E) = \frac{-|\sfe|+|\sff|+|\sfe+\sff|}{2}.
\end{align}

Similarly, a stabilizer $A_J[X,Z]$ is represented by a vector 
\begin{equation} 
A_J = (\sfa_J, \sfb_J) \in \cV \oplus\cV.
\end{equation}
That all stabilizers commute amounts to $\sfa_J\cdot \sfb_{J'} + \sfa_{J'}\cdot \sfb_J=0, \forall J, J'$. The set of stabilizers $\{A_J\}$ span a subspace $\cV_s \subset \cVV$. We also define the dual subspace
\begin{align}
    &\cV^*_s \equiv \{\Omega W|W\in \cV_s\},
\end{align}
where $\Omega =
    \begin{pmatrix}
        0 & \mathds{1}_{N\times N} \\
        \mathds{1}_{N\times N} & 0
    \end{pmatrix}$. Note the matrix $\Omega$ effectively interchanges $X$ and $Z$ operators in a Pauli string. We can rewrite the commutation relation of stabilizers as,
\begin{equation}
    (A_J)\cdot (\Omega A_{J'}) =0, ~~~~\forall J, J'.
\end{equation}
This implies $\cV_s \subset \left(\cV_s^*\right)^\perp$. Similar to the case of CSS code, the quotient space $\left(\cV_s^*\right)^\perp/\cV_s$ contains logical operators for the QEC code. We assume that the logical operators are non-local and the code rate $N_c/N\to 0$. As in the CSS code case, the contribution of such non-local operators to $\Tr(\rho^R)$ is exponentially small, and the overall correction to free energy density is negligible.  We can, therefore, make the substitution $\left(\cV_s^*\right)^\perp\to\cV_s$ in what follows. As before, we will use $\ninfeq$ to signal this substitution.

Let us now see how the spaces $\cV_s, \cV_s^*$ emerge in the statistical model. Recall definitions of the interaction terms,
\begin{align}
   {\cal O}^z_\mu[\tau] = \prod_J (\tau_J)^{(\mathsf{a}_J)_{\mu}},~~~~
    {\cal O}^x_\mu[\tau] = \prod_J (\tau_J)^{(\mathsf{b}_J)_\mu}, ~~~~
    {\cal O}^y_\mu[\tau] = {\cal O}^x_\mu[\tau]{\cal O}^z_\mu[\tau]. \label{eq:ncssint}
\end{align}
LTE of this model gives the vectors in $\cV_s$. To see this, consider a single flipped spin, $\tau_J = -1$ and $\tau_{J'}=1, \forall J'\neq J$. The interaction terms become,
\begin{equation}
    \cO^z_\mu[\tau] = (-1)^{(\sfa_J)_\mu}, ~~~~\cO^x_\mu[\tau] = (-1)^{(\sfb_J)_\mu}.
\end{equation}
This domain wall pattern, therefore, encodes the vector $A_J =(\sfa_J,\sfb_J) \in \cV_s$.

On the other hand, HTE of the statistical model yields the vectors in $\left(\cV_s^*\right)^\perp$. The HTE sums over products of interaction terms that cancel all the spins. An arbitrary product of interaction terms has the form,
\begin{equation}
    \prod_\mu \Big( \cO^x_\mu[\tau]\Big)^{\sfa_\mu}\Big( \cO^z_\mu[\tau]\Big)^{\sfb_\mu}, \label{eq:oxoz}
\end{equation}
where $C\equiv(\sfa, \sfb) \in \cVV$. To cancel all the spins, this combination of interactions must satisfy,
\begin{align}
    1&=\prod_\mu \Big( \cO^x_\mu[\tau]\Big)^{\sfa_\mu}\Big( \cO^z_\mu[\tau]\Big)^{\sfb_\mu}  \nonumber\\
    &=\prod_\mu\prod_J (\tau_J)^{(\sfb_J)_\mu\sfa_\mu + (\sfa_J)_\mu\sfb_\mu} \nonumber\\
    &=\prod_J (\tau_J)^{C\cdot(\Omega A_J)}.
\end{align}
The exponent for each $\tau_J$ must be $0$. This implies 
\begin{equation}
C\cdot  (\Omega A_J) = 0, \forall J \Leftrightarrow C\in  \left(\cV_s^*\right)^\perp.
\end{equation}
We observe here that, for SM, the space of LTE configurations $\vs$ and the space of HTE configurations $\vsdp$ are identical up to non-local terms. This hints at the HLT self-duality of SM, which will be evident in subsequent discussions.

Now consider the general decoherence $\rho_0 = \ket{\Omega}\bra{\Omega}\to \rho_{\rm m}$,
\begin{align}
    \rho_{\rm m}&=\sum_{E\in\cVV} P(E) O^\dagger(E)\rho_0 O(E), \nonumber\\
    P(E) &= (1-p_x-p_y-p_z)^{N-N_x-N_y-N_z}p_x^{N_x}p_y^{N_y}p_z^{N_z}, \nonumber\\
    O(E) &=  (\i)^{\sfe \cdot \sff}\prod_\mu \left(X_\mu\right)^{\sfe_\mu}\left(Z_\mu\right)^{\sff_\mu},
\end{align} 
 where $O(E)$ is the Pauli string specified by the error chain $E = (\sfe, \sff)$. For later convenience we rewrite the error rates with $\lambda_i \equiv \frac{p_i}{1-p_x-p_y-p_z}$ for $i=x,y,z$,
\begin{align}
    P(E) &= (1-p_x-p_y-p_z)^N \lambda_x^{N_x} \lambda_y^{N_y} \lambda_z^{N_z} \label{eq:ncsspb1}\\
    &=(1-p_x-p_y-p_z)^N \left(\sqrt{\frac{\lambda_y \lambda_z}{\lambda_x}}\right)^{|\sff|} \left(\sqrt{\frac{\lambda_x \lambda_y}{\lambda_z}}\right)^{|\sfe|}  \left(\sqrt{\frac{\lambda_x \lambda_z}{\lambda_y}}\right)^{|\sfe+\sff|}. \label{eq:ncsspb2} 
\end{align}
The information theoretical quantity that detects DIPTs is,
\begin{align}
    &\hspace{.5cm}\Tr(\rho_{\rm m}^R) \nonumber\\
    &= \sum_{E_{1,..,R}\in \cVV} P(E_1)P(E_2)\dots P(E_R) \braket{O(E_1)O^\dagger(E_2)}_\Omega\dots\braket{ O(E_{R-1})O^\dagger(E_R)}_\Omega\braket{O(E_R)O^\dagger(E_1)}_\Omega\nonumber\\
    &=\sum_{E\in\cVV}\sum_{C_{2,..,R}\in \cV_s} P(E) P(E+C_2)\dots P(E+C_R) \times\nonumber\\ &\hspace{1cm}\braket{O(E)O^\dagger(E + C_2)}_\Omega\dots\braket{ O(E+C_{R-1})O^\dagger(E+C_R)}_\Omega\braket{O(E+ C_R)O^\dagger(E)}_\Omega\nonumber\\
    &=\sum_{E\in \cVV}\sum_{C_{2,..,R} \in \vs} P(E) P(E+C_2)\dots P(E+C_R) \times\nonumber\\ &\hspace{1cm} \braket{O(E)O^\dagger(E + C_2)O(E + C_2)O^\dagger(E+C_3)\dots O(E+ C_R)O^\dagger(E)}_\Omega\nonumber\\
    &=\sum_{E\in \cVV}\sum_{C_{2,..,R} \in \vs} P(E) P(E+C_2)\dots P(E+C_R) \label{eq:renyincssasym}\\
    &=\frac{1}{2^{{\rm dim}\vs}}\sum_{E\in\cVV} \left[ \sum_{C\in \vs} P(E+C)\right]^R. \label{eq:renyincsssym}
\end{align}
In the second equality, we used the condition that $\braket{OO^\dagger}_\Omega\neq 0$ iff $O O^\dagger$ is generated by stabilizers. However, there is an ambiguity of powers of $\i$ in $\braket{O O^\dagger}_\Omega$ because the $X,Y,Z$'s do not commute. To resolve this, in the third equality, we used the fact that $\ket{\Omega}$ is an eigenstate of all the operators $OO^\dagger$. In the fourth equality, we noted that the product of all $OO^\dagger$ is identity. In the last equality, we symmetrized the expression in a similar way as Eq. \eqref{eq:renyisymm_b}.  It remains to relate $\sum_{C\in \vs} P(E+C)$ to the partition functions of the statistical models with randomness.

First, consider LTE of SM with rRC $E=(\sfe,\sff) \in \cVV$, 
\begin{align}
    Z_{\rm SM}(\mathbf{K}, E) &= \sum_{\{\tau_J=\pm1\}}\exp \left(\sum_\mu \left(K_x (-1)^{\sff_\mu} \mathcal{O}^x_\mu[\tau]+K_y (-1)^{(\sfe+\sff)_\mu} \mathcal{O}^y_\mu[\tau]+K_z (-1)^{\sfe_\mu} \mathcal{O}^z_\mu[\tau]\right)\right) \nonumber\\
    &=N_s\sum_{C = (\sfa, \sfb) \in \vs} \exp\left(\sum_\mu \left(K_x(-1)^{(\sff+\sfb)_\mu} + K_y(-1)^{(\sfe+\sff+\sfa+\sfb)_\mu} + K_z(-1)^{(\sfe+\sfa)_\mu}\right)\right)\nonumber\\
    &=N_s e^{N(K_x+K_y+K_Z)}\sum_{C = (\sfa, \sfb) \in \vs}\left(e^{-2K_x}\right)^{|\sff+\sfb|}\left(e^{-2K_z}\right)^{|\sfe+\sfa|}\left(e^{-2K_y}\right)^{|(\sfe+\sfa)+(\sff+\sfb)|}\nonumber\\
    &=N_s e^{N(K_x+K_y+K_Z)}\sum_{C = (\sfa, \sfb) \in \vs} \left(\sqrt{\frac{\lambda_y \lambda_z}{\lambda_x}}\right)^{|\sff+\sfb|} \left(\sqrt{\frac{\lambda_x \lambda_y}{\lambda_z}}\right)^{|\sfe+\sfa|} \left(\sqrt{\frac{\lambda_x \lambda_z}{\lambda_y}}\right)^{|(\sfe+\sfa)+(\sff+\sfb)|} \nonumber\\
     &=N_s[p_xp_yp_z(1-p_x-p_y-p_z)]^{-N/4}\sum_{C\in \vs}P(E+C), \label{eq:ncssLTE}
\end{align}
where $N_s$ is a symmetry factor, in the third equality we used the identity $\sum_\mu (-1)^{\mathsf{w}_\mu} = N-2|\mathsf{w}|, \forall\mathsf{w}\in \cV$, in the fourth equality we applied generalized Nishimori condition Eq. \eqref{eq:ncssrelation}, in the last equality we compared to Eq. \eqref{eq:ncsspb2}. Combine Eq. \eqref{eq:renyincsssym} and \eqref{eq:ncssLTE},
\begin{equation}
    \Tr(\rho^R_{\rm m}) = \frac{[p_xp_yp_z(1-p_x-p_y-p_z)]^{NR/4}}{N_s^R 2^{{\rm dim}\vs}} \sum_{E\in{\cal V}\oplus {\cal V}} \Big(Z_{\rm SM}(\mathbf{K}, E)\Big)^R.
\end{equation}
This proves the first part of the theorem.

Now consider HTE of SM with iRC $E=(\sfe,\sff) \in \cVV$, 
\begin{align}
    W_{\rm SM}(\mathbf{\tK}, E) &= \sum_{\{\tau_J=\pm1\}}\exp\left(\sum_{\mu}\Big(\tilde{K}_x \cO^x_{\mu}[\tau]+\tilde{K}_y \cO^y_{\mu}[\tau] + \tilde{K}_z {\cal O}^z_{\mu}[\tau]\Big)\right) \prod_{\nu} \Big(\cO^x_\nu[\tau]\Big)^{\sfe_\nu} \Big(\cO^z_\nu[\tau]\Big)^{\sff_\nu}\nonumber\\
    &=f(\mathbf{\tK})^N\sum_{\{\tau_J\}}\prod_\mu(1+\lambda_x \cO_\mu^x[\tau] + \lambda_y \cO_\mu^y[\tau] + \lambda_z\cO_\mu^z [\tau]) \prod_{\nu} \Big(\cO^x_\nu[\tau]\Big)^{\sfe_\nu} \Big(\cO^z_\nu[\tau]\Big)^{\sff_\nu}\nonumber\\
    &=f(\mathbf{\tK})^N\sum_{\{\tau_J\}} \sum_{W=(\sfa, \sfb) \in \cVV} \lambda_x^{N_x(W)} \lambda_y^{N_y(W)} \lambda_z^{N_z(W)} \prod_{\mu} \Big(\cO^x_\mu[\tau]\Big)^{(\sfa+\sfe)_\mu} \Big(\cO^z_\mu[\tau]\Big)^{(\sfb+\sff)_\mu}\nonumber\\
    &=f(\mathbf{\tK})^N 2^{N_A}\sum_{C\in \vsdp} \lambda_x^{N_x(E+C)} \lambda_y^{N_y(E+C)} \lambda_z^{N_z(E+C)}\nonumber\\
    &\ninfeq f(\mathbf{\tK})^N 2^{N_A}\sum_{C\in \vs} \lambda_x^{N_x(E+C)} \lambda_y^{N_y(E+C)} \lambda_z^{N_z(E+C)}\nonumber\\
    &=[(1-2p_x-2p_y)(1-2p_x-2p_z)(1-2p_y-2p_z)]^{-N/4}2^{N_A}\sum_{C\in \vs} P(E+C), \label{eq:ncssHTE} 
\end{align}
where $f(\mathbf{\tK}) = \cosh(\tK_x)\cosh(\tK_y)\cosh(\tK_z)(1+\tanh(\tK_x)\tanh(\tK_y)\tanh(\tK_z))$ and $N_A$ is the number of stabilizers $A_J[X,Z]$. In the second equality we expanded the exponential for each $\mu$ into a polynomial and used Eq. \eqref{eq:ncssrelation}, in the third equality $W\in \cVV$ represents an arbitrary combination of interaction terms (cf. Eq. \eqref{eq:oxoz}), in the fourth equality only $W=E+C, C\in \vsdp$ survives the $\tau_J$ summation and the last equality applied Eq. \eqref{eq:ncsspb1}. Combine Eq. \eqref{eq:renyincsssym} and \eqref{eq:ncssHTE},
\begin{equation}
    \Tr(\rho^R_{\rm m}) \ninfeq \frac{[(1-2p_x-2p_y)(1-2p_x-2p_z)(1-2p_y-2p_z)]^{NR/4}}{2^{({\rm dim}\vs+N_AR)}} \sum_{E\in{\cal V}\oplus {\cal V}} \Big(W_{\rm SM}(\mathbf{\tK}, E)\Big)^R.
\end{equation}
This completes the second part of the proof. 

We now comment on the self-duality of the non-random SM (Eq. \eqref{eq:noncssKW}). Compare Eqs. \eqref{eq:ncssLTE} and \eqref{eq:ncssHTE}, we conclude that an HLT duality relates $Z_{\rm SM}(\mathbf{K}, E)$ and $W_{\rm SM}(\mathbf{\tK}, E)$ for arbitrary $E$. Setting $E=0$ gives the HLT self-duality of non-random SM, $Z_{\rm SM}(\mathbf{K})\propto Z_{\rm SM}(\mathbf{\tK}) $.

Finally, to obtain the von Neumann entropy $S_1(\rho_{\rm m})$, combine Eqs. \eqref{eq:renyincssasym} and \eqref{eq:ncssLTE} and take the $R\to1$ limit for $S_R$,
\begin{equation}
    S_1(\rho_{\rm m}) = -\sum_{E\in {\cal V}\oplus {\cal V}} P(E) \log(Z_{\rm SM}(\mathbf{K}, E)) - \frac{N}{4}\log\big(p_xp_yp_z(1-p_x-p_y-p_z)\big) + \log(N_s).
\end{equation}

\section{Proof of the GPN duality (Theorem \ref{thm:BPDncss})}\label{app:decoheredual}
Throughout this appendix, we use the shorthands,
\begin{align}
    p_0 \equiv 1-p_x-p_y-p_z, ~~~~\lambda_i \equiv \frac{p_i}{p_0} ~~\text{for~}i=x,y,z.
\end{align}
\subsection{Derivation for $R=2$}
  Purity $\Tr(\rho_{\rm m}^2)$ can be written as 
    \begin{equation}
    \sum_{E_{1,2}\in \cVV}P(E_1)P(E_2)\delta(E_1+E_2\in \cV_s).
    \end{equation}
    There are two ways to manipulate the $\delta(E_1+E_2 \in \cV_s)$. First one can write
    \begin{align}
        &\sum_{E_{1,2}\in \cVV}P(E_1)P(E_2)\delta(E_1+E_2\in \cV_s) \nonumber\\
        = &\sum_{E\in\cVV}\sum_{C\in \cV_s}P(E)P(E+C)\nonumber\\
        = &p_0^{2N} \sum_{C\in \cV_s} \left(\sum_{E\in\cVV}\lambda_x^{N_x(E)+N_x(E+C)}\lambda_y^{N_y(E)+N_y(E+C)}\lambda_z^{N_z(E)+N_z(E+C)}\right)\nonumber\\
        \equiv &p_0^{2N} \sum_{C\in \cV_s} \prod_\mu f(C_\mu, \bp).
    \end{align}
    The sum on $E$ is unconstrained and can be performed independently on each qubit with $E_\mu = I, X, Y, Z$, giving a Boltzmann weight,
    \begin{equation}
        f(C_\mu, \bp) = \begin{pmatrix}
            1+\lambda_x^2+ \lambda_y^2 + \lambda_z^2 \\
            2\lambda_x+2\lambda_y\lambda_z \\
            2\lambda_y+2\lambda_x\lambda_z\\
            2\lambda_z+2\lambda_x\lambda_y
        \end{pmatrix}
        = \frac{1}{p_0^2}\begin{pmatrix}
            p_0^2 + p_x^2+p_y^2+p_z^2\\
            2p_xp_0 + 2p_yp_z \\
            2p_yp_0 + 2p_xp_z\\
            2p_zp_0 + 2p_xp_y
        \end{pmatrix},
    \end{equation}
    where the four rows corresponds to $C_\mu = I, X, Y, Z$.

    The second way to write $\delta(E_1+E_2\in V)$ is through the following resolution
    \begin{equation}
        \delta(E_1+E_2\in \cV_s) = \frac{1}{2^{{\rm dim}\vsdp}}\sum_{C\in \vsdp} (-1)^{C\cdot\Omega (E_1+E_2)}.
    \end{equation}
    Purity now becomes,
    \begin{align}
    \Tr(\rho_{\rm m}^2) &= \frac{1}{2^{{\rm dim}\vsdp}}\sum_{C\in \vsdp}\sum_{E_{1,2}\in \cVV} P(E_1) P(E_2) (-1)^{C\cdot\Omega (E_1+E_2)} \nonumber\\
    &=\frac{1}{2^{{\rm dim}\vsdp}}\sum_{C\in \vsdp} \left(\sum_{E\in\cVV}P(E)(-1)^{C\cdot (\Omega E)}\right)^2\nonumber\\
    &\ninfeq \frac{(1-\sum_i p_i)^{2N}}{2^{{\rm dim}\vsdp}}\sum_{C\in \vs}\left(\sum_{E\in\cVV}\lambda_x^{N_x(E)}\lambda_y^{N_y(E)}\lambda_z^{N_z(E)}(-1)^{C\cdot (\Omega E)}\right)^2\nonumber\\
    &\equiv \frac{(1-\sum_i p_i)^{2N}}{2^{{\rm dim}\vsdp}}\sum_{C\in \vs} \prod_\mu \tilde{f}(C_\mu, \bp). \label{eq:ncssr2sm}
    \end{align}
    Again $E$ can be summed for each link individually $E_\mu = I, X, Y, Z$. Note $(-1)^{C\cdot (\Omega E)}$ tells us to insert a $(-1)$ whenever $[C_\mu, E_\mu]\neq0$. This gives Boltzmann weight,
      \begin{equation}
        \tilde{f}(C_\mu,\bp) = \begin{pmatrix}
            (1+\lambda_x+\lambda_y+\lambda_z)^2\\
            (1+\lambda_x-\lambda_y-\lambda_z)^2 \\
            (1+\lambda_y-\lambda_x-\lambda_z)^2\\
            (1+\lambda_z-\lambda_x-\lambda_y)^2
        \end{pmatrix}
        = \frac{1}{p_0^2} \begin{pmatrix}
            1\\
            (1-2p_y-2p_z)^2 \\
            (1-2p_x-2p_z)^2\\
            (1-2p_x-2p_y)^2
        \end{pmatrix}.
        \label{eq:ncssr2bw2}
    \end{equation}
    Equating the nomalized Boltzmann weights,
    \begin{equation}
        \frac{f(C_\mu, \bp)}{f(I, \bp)} = \frac{\tilde{f}(C_\mu, \btp)}{\tilde{f}(I, \btp)},     \label{eq:noncssbwmapping}
    \end{equation}
    for $C_\mu = X,Y,Z$ gives the GPN duality at $R=2$.

\subsection{Derivation for $R=\infty$}
At $R\to \infty$ the summation over randomness is dominated by the clean limit $E=0$,
\begin{equation}
0\leq Z(\mathbf{K}, E)\leq Z(\mathbf{K}), ~~~~ 0\leq W(\mathbf{K}, E)\leq Z(\mathbf{K}). \label{eq:cleandomnoncss}
\end{equation}
But $Z(\mathbf{K}, E)$ and $W(\mathbf{K}, E)$ describe decoherence channels with error rates $\bp$ and $\btp$ respectively,
\begin{align}
    \lx = e^{-2(K_y + K_z)},~~~~\ly = e^{-2(K_x + K_z)},~~~~\lz = e^{-2(K_x + K_y)}, \nonumber\\
    \tlx = e^{-2(\tK_y + \tK_z)},~~~~\tly = e^{-2(\tK_x + \tK_z)},~~~~\tlz = e^{-2(\tK_x + \tK_y)},
\end{align}
where $\mathbf{K},\mathbf{\tK}$ are related by Eq. \eqref{eq:noncssKW}. For both channels, $\Tr(\rho_{\rm m}^R)\approx (Z(\mathbf{K}))^R$ when $R\to \infty$. This gives the $R\to\infty$ GPN duality.

The proof of Eq.\eqref{eq:cleandomnoncss} parallels that of the CSS case. Consider HTE of $Z(\mathbf{K}, E)$ and LTE of $W(\mathbf{K}, E)$,
\begin{align}
    Z(\mathbf{K}, E) &\propto \sum_{C\in \vsdp} \tlx^{N_x(C)}\tly^{N_y(C)}\tlz^{N_z(C)}(-1)^{C\cdot \Omega E},\\
    W(\mathbf{K}, E) &\propto \sum_{C \in \vs} \lx^{N_x(C)}\ly^{N_y(C)}\lz^{N_z(C)}(-1)^{C\cdot \Omega E}.
\end{align}
In both cases, the proportionality constant is positive and independent of $E$. Thus, both lines are upper bounded by the $E=0$ case $Z(\mathbf{K})$.

\subsection{Connection to BPD duality}
To make a connection to the CSS case, consider the following setup: $\lambda_x\neq 0, \lambda_y = \lambda_z = 0$. This describes pure bit-flip decoherence. The matching of Boltzmann weights (Eq. \eqref{eq:noncssbwmapping}) for $R=2$ gives,
\begin{align}
    \frac{2\lx}{1+\lx^2} &= \left(\frac{1+\tlx-\tly-\tlz}{1+\tlx+\tly+\tlz}\right)^2 \nonumber\\
    0&= \left(\frac{1-\tlx+\tly-\tlz}{1+\tlx+\tly+\tlz}\right)^2 \\
    0&=\left(\frac{1-\tlx-\tly+\tlz}{1+\tlx+\tly+\tlz}\right)^2.\nonumber
\end{align}
Solving the above equation, we get the error rates of the dual theory at $R=2$,
\begin{align}
    &\tlx = 1, ~~~~\tly=\tlz, \nonumber\\
    &\frac{2\lx}{1+\lx^2} = \left(\frac{1-\tlz}{1+\tlz}\right)^2. \label{eq:bpdfromnoncss}
\end{align}
The dual decoherence channel $\mathcal{E} = \otimes_\mu \mathcal{E}_\mu$ factorizes into independent bit-flip and phase-flip errors,
\begin{align}
    \mathcal{E}_\mu(\rho_0) &= \mathcal{N}_{z,\mu} \circ \mathcal{N}_{x,\mu}(\rho_0), \nonumber \\
    \mathcal{N}_{x,\mu}(\rho_0) &=\frac{1}{2} \rho_0 + \frac{1}{2} X_\mu\rho_0 X_\mu, \quad \mathcal{N}_{z,\mu}(\rho_0)= (1-p'_z)\rho_0 + p'_z Z_\mu\rho_0 Z_\mu ,
\end{align}
where $p'_z$ is defined by $\tlz =  \frac{p'_z}{1-p'_z}$ (note $p'_z\neq \tilde{p}_z$). For CSS code, $\vs = \vx\oplus\vz$ and $\Tr(\rho^R)$ factorizes under independent bit-flip and phase-flip errors (Eq. \eqref{eq:bp_fractorized}), 
\begin{equation}
    \Tr (\rho^R) = \Tr (\rho_{\rm b}^R)  \times \Tr (\rho_{\rm p}^R).
\end{equation}
In this case, the bit-flip part is a constant factor while the phase-flip error part depends on the parameter $\tlz$ determined by Eq. \eqref{eq:bpdfromnoncss}. This relation is precisely the BPD duality for CSS code at $R=2$ (cf. Eq. \eqref{eq:dualr2}). The discussion for $R=\infty$ is analogous. 
\subsection{Self-dual surface}

The self-dual surface of $R=\infty$ is the set of points on $p_x+p_y+p_z=\frac{1}{2}$ since any points on this surface satisfies
\begin{equation}
    1-2p_y-2p_z = p_x/p_0, \label{eq:appselfdual}
\end{equation}
and similarly upon cyclic permutations of $p_x, p_y,p_z$. In terms of $K_x, K_y, K_z$, one can rediscover the duality in the statistical model: 
\begin{equation}
    e^{-2(K_x+K_y)} + e^{-2(K_y+K_z)} + e^{-2(K_x+K_z)} = 1. \label{eq:selfdualncssKs}
\end{equation}
To see that this defines the self-dual surface of the statistical model, one can apply Eq. \eqref{eq:selfdualncssKs} to get
\begin{align}
    e^{-2(K_y+K_z)} &= \frac{1+e^{-2(K_y+K_z)}-e^{-2(K_x+K_z)}-e^{-2(K_x+K_y)}}{1+e^{-2(K_y+K_z)} + e^{-2(K_x+K_z)} + e^{-2(K_x+K_y)}} \nonumber\\
    &= \frac{e^{K_x+K_y+K_z} + e^{K_x-K_y-K_z} - e^{K_y-K_x-K_z} - e^{K_z-K_x-K_y}}{e^{K_x+K_y+K_z}+e^{K_x-K_y-K_z}+e^{K_y-K_x-K_z} + e^{K_z-K_x-K_y}} \nonumber\\
    &= \frac{(e^{K_x}-e^{-K_x})(e^{K_y}+e^{-K_y})(e^{K_z}+e^{-K_z})+(e^{K_x}+e^{-K_x})(e^{K_y}-e^{-K_y})(e^{K_z}-e^{-K_z})}{(e^{K_x}+e^{-K_x})(e^{K_y}+e^{-K_y})(e^{K_z}+e^{-K_z})+(e^{K_x}-e^{-K_x})(e^{K_y}-e^{-K_y})(e^{K_z}-e^{-K_z})} \nonumber\\
    &= \frac{\tanh K_x + \tanh K_y \tanh K_z}{1+ \tanh K_x \tanh K_y \tanh K_z}.
\end{align}
The statistical model for $R=2$ is equivalent to a single copy of the statistical model with renormalized $K$'s, so we would expect a self-dual surface in that case as well. Compare Eqs. \eqref{eq:ncssr2sm}, \eqref{eq:ncssr2bw2} and LTE of $Z_{\rm SM}(\mathbf{K})$, one can identify, 
\begin{equation}
    (1-2p_y-2p_z)^2 = e^{-2(K_y+K_z)},
\end{equation}
and similarly for cyclic permutations of $x,y,z$. So the self-dual surface is determined by (cf. \eqref{eq:selfdualncssKs})
\begin{align}
    &(1-2p_y-2p_z)^2 + (1-2p_x-2p_z)^2 + (1-2p_x-2p_y)^2 = 1 \nonumber \\
    \Leftrightarrow & (1-p_x-p_y-p_z)^2 +p_x^2 + p_y^2 + p_z^2 = \frac{1}{2}.
\end{align}

\section{Parent Hamiltonian in the doubled Hilbert space}
\label{app:choi}
For a general stabilizer code $\cal C$, a stabilizer takes the general form
\begin{equation}
    A_J[X,Z] = (\i)^{\sfa_J \cdot \sfb_J} \prod_\mu (X_\mu)^{(\sfa_J)_\mu} \prod_\mu (Z_\mu)^{(\sfb_J)_\mu},
\end{equation}
where $a_J, b_J$ are $\mathbb{Z}_2$ vectors. Here, we choose to write the CSS code Hamiltonian as a sum of projectors so that the ground state has zero energy:
\begin{equation}
    H_s = \sum_J \frac{\mathds{1}-A_J}{2}. \label{eq:genericstabilizer}
\end{equation}
Notice that this Hamiltonian is related to the CSS code Hamiltonian in the main text by a factor of $\frac{1}{2}$ and some constant shift. 

The Choi-Jamio{\l}kowski isomorphism maps the system's density matrix into a state in the doubled Hilbert space. More specifically, we can first choose the basis $\{\ket{i}\}$ of the original Hilbert space given by the eigenstates of all the Pauli-$Z$ operators. The Choi-Jamio{\l}kowski isomorphism is specified by the mapping $\ket{i}\bra{j} \rightarrow \kket{ij}$, for all basis $i,j$. Therefore if we have a density matrix represented as $\sum_{ij}\rho_{ij}\ket{i}\bra{j}$, the Choi-Jamio{\l}kowski isomorphism map it to its Choi representation $\kket{\rho} = \sum_{ij}\rho_{ij}\kket{ij}$. 

One can multiply a density matrix with operators to modify it. It will, therefore, be crucial to spell out how to map these operations in the Choi representation. The most general operator action on the density matrix can be written as a sum of the following basic operation
\begin{equation}
    A\rho B = A_{ij}\rho_{jk}B_{kl}\ket{i}\bra{l}.
\end{equation}
Therefore, in the Choi representation, the resulting action is given by
\begin{equation}
    A_{ij}\rho_{jk}B_{kl}\kket{il} = A\otimes B^{\mathsf{T}} \kket{\rho}
\end{equation}
where the transpose is done in the eigenbasis of $Z$'s. For the pure-state density matrix $\rho_0$ of a logical state, it satisfies $H_s\rho_0 = 0$ and $\rho_0 H_s=0$. In the Choi representation, this statement implies that $\kket{\rho_0}$ is the ground state of the following Hamiltonian 
\begin{equation}
    H^D_0 = H_s\otimes \mathds{1} + \mathds{1}\otimes H_s^{\mathsf{T}}.
\end{equation}
For the error-corrupted density matrix, its Choi representation in the doubled Hilbert space takes the form
\begin{align*}
    \kket{\rho_{\rm m}} &\propto \left(e^{\sum_\mu \tilde{K}_x X_\mu\otimes X_\mu - \tilde{K}_y Y_\mu\otimes Y_\mu + \tilde{K}_z Z_\mu \otimes Z_\mu}\right) \kket{\rho_0} \\
    &\equiv \hat{\mathcal{E}} \kket{\rho_0},
\end{align*}
where we've defined the operator $\hat{\mathcal{E}}=e^{\sum_\mu \tilde{K}_x X_\mu\otimes X_\mu - \tilde{K}_y Y_\mu\otimes Y_\mu + \tilde{K}_z Z_\mu \otimes Z_\mu}$ acting on the doubled Hilbert space. One can construct a parent Hamiltonian for $\kket{\rho_{\rm m}}$ by noticing that $\kket{\rho_{\rm m}}$ is the ground state of the following frustration-free Hamiltonian as has been done in Ref. \onlinecite{LeeYouXu2022}
\begin{equation}
    {H^D}' = \sum_{J} \left[ \hat{\mathcal{E}}^{-1} \left(\frac{\mathds{1}-A_J}{2}\otimes\mathds{1}\right) \hat{\mathcal{E}}\right]\left[\hat{\mathcal{E}} \left(\frac{\mathds{1}-A_J}{2}\otimes \mathds{1}\right)\hat{\mathcal{E}}^{-1}\right] + \left[ \hat{\mathcal{E}}^{-1} \left(\mathds{1}\otimes\frac{\mathds{1}-A_J^{\mathsf{T}}}{2}\right) \hat{\mathcal{E}}\right]\left[\hat{\mathcal{E}} \left(\mathds{1}\otimes \frac{\mathds{1}-A_J^{\mathsf{T}}}{2}\right)\hat{\mathcal{E}}^{-1}\right].
\end{equation}
Note that every term in ${H^D}'$ is positive-semi-definite and annihilates the state $\kket{\rho_{\rm m}}$. Hence, $\kket{\rho_{\rm m}}$ must be a groundstate of ${H^D}'$.

Recall that with a general stabilizer $A_J[X,Z]$, we associate two binary vectors ${{\mathsf a}_{J,\mu}}$ and ${{\mathsf b}_{J,\mu}}$. We can then calculate the individual terms as
\begin{align*}
    &\hat{\mathcal{E}}\left(\frac{\mathds{1}-A_J}{2}\otimes \mathds{1}\right)\hat{\mathcal{E}}^{-1}\\
    =&\frac{1}{2} \mathds{1} \otimes \mathds{1} - \frac{1}{2} e^{2\sum_\mu \tK_x {{\mathsf b}_{J,\mu}}X_\mu\otimes X_\mu - \tK_y ({{\mathsf a}_{J,\mu}} + {{\mathsf b}_{J,\mu}})Y_\mu \otimes Y_\mu + \tK_z {{\mathsf a}_{J,\mu}} Z_\mu \otimes Z_\mu}A_J\otimes \mathds{1}.
\end{align*}
For later convenience, we introduce the following operator
\begin{equation}
    \hat{\mathcal{E}}_J = e^{\sum_\mu \tK_x {{\mathsf b}_{J,\mu}}X_\mu\otimes X_\mu - \tK_y ({{\mathsf a}_{J,\mu}} + {{\mathsf b}_{J,\mu}})Y_\mu \otimes Y_\mu + \tK_z {{\mathsf a}_{J,\mu}} Z_\mu \otimes Z_\mu}.
\end{equation}
We remark that $({{\mathsf a}_{J,\mu}} + {{\mathsf b}_{J,\mu}})$ should be understood as an addition in $\mathbb{Z}_2$. We thus have
\begin{align*}
    &\left[ \hat{\mathcal{E}}^{-1} \left(\frac{\mathds{1}-A_J}{2}\otimes \mathds{1}\right)\hat{\mathcal{E}}\right]\left[\hat{\mathcal{E}} \left(\frac{\mathds{1}-A_J}{2}\otimes \mathds{1}\right) \hat{\mathcal{E}}^{-1}\right]\\
    =& \left(\frac{1}{2} \mathds{1} \otimes \mathds{1} - \frac{1}{2}\hat{\mathcal{E}}_J^{-2} A_J\otimes \mathds{1}\right)\left(\frac{1}{2} \mathds{1} \otimes \mathds{1} - \frac{1}{2}A_J\otimes \mathds{1} \hat{\mathcal{E}}_J^{-2}\right) \\
    =& \frac{1}{4} \mathds{1} \otimes \mathds{1} + \frac{1}{4} \hat{\mathcal{E}}_J^{-4} - \frac{1}{4}\left(\hat{\mathcal{E}}_J^2 + \hat{\mathcal{E}}_J^{-2}\right) A_J \otimes \mathds{1}\\
\end{align*}
Thus, we can write
\begin{align}
    {H^D}' =  \sum_J  \frac{1}{2} \hat{\mathcal{E}}_J^{-4} - \frac{1}{4}\left(\hat{\mathcal{E}}_J^2 + \hat{\mathcal{E}}_J^{-2}\right) (A_J \otimes \mathds{1} + \mathds{1}\otimes A_J^{\mathsf T}).
\end{align}

A simpler parent Hamiltonian of $\kket{\rho_{\rm m}}$ can be derived by noting that the double-Hilbert space operator $\hat{\mathcal{E}}$ is Hermitian and factorizes site-wise. So, we can write down a frustration-free Hamiltonian of the following form
\begin{equation}
    H^{D} = \sum_J \hat{\mathcal{E}}_J^{-1} \left(\frac{\mathds{1}-A_J}{2}\otimes \mathds{1} + \mathds{1}\otimes \frac{\mathds{1}-A_J^{\mathsf T}}{2}\right)\hat{\mathcal{E}}_J^{-1}.
\end{equation}
Note that, for each $J$, $\hat{\mathcal{E}}_J$ is so constructed that all the terms in the exponent anti-commute with $A_J\otimes\mathds{1}$ and also $\mathds{1}\otimes A_J^{\mathsf T}$. One can then verify that
\begin{equation}
    H^{D} \hat{\mathcal{E}}\kket{\rho_0} = \sum_J \hat{\mathcal{E}}_J^{-1}\hat{\mathcal{E}}_{\Bar{J}}  \left(\frac{\mathds{1}-A_J}{2}\otimes \mathds{1} + \mathds{1}\otimes \frac{\mathds{1}-A_J^{\mathsf T}}{2}\right)\kket{\rho_0} = 0.
\end{equation}
where $\hat{\mathcal{E}}_{\Bar{J}}$ denotes all the terms in the original $\hat{\mathcal{E}}$ excluding the ones contained in $\hat{\mathcal{E}}_J$. $\hat{\mathcal{E}}_{\Bar{J}}$ commute with $\mathds{1}\otimes A_J^{\mathsf T}$ and $A_J \otimes \mathds{1}$. Since each individual term of $H^{D}$ is positive-semi-definite, $\hat{\mathcal{E}}\kket{\rho_0}=\kket{\rho_{\rm m}}$ must be its ground state. The explicit form of $H^{D}$ is obtained by moving $\hat{\mathcal{E}}_J$ across $\frac{\mathds{1}-A_J}{2}$,
\begin{align}
    H^{D} &= \sum_J \hat{\mathcal{E}}_J^{-2} - \frac{1}{2}A_J\otimes \mathds{1} - \frac{1}{2}\mathds{1}\otimes A_J^{\mathsf T} \nonumber\\
    &= H_s \otimes \mathds{1} + \mathds{1}\otimes H_s^{\mathsf T} + \sum_J \hat{\mathcal{E}}_J^{-2} + {\rm constant}.
\end{align}
When $\tK_x, \tK_y, \tK_z$ are small, ${H^D}$ to first order in $\tK$ is just $H^D_0$ plus the interaction between two copies of the Hilbert space $\sum_J \sum_\mu \tK_x {{\mathsf b}_{J,\mu}}X_\mu\otimes X_\mu - \tK_y ({{\mathsf a}_{J,\mu}} + {{\mathsf b}_{J,\mu}})Y_\mu \otimes Y_\mu + \tK_z{{\mathsf a}_{J,\mu}} Z_\mu \otimes Z_\mu$. This interaction competes with $H^D_0$. As we tune up the value of $\tK$'s, a possible scenario is that this competition leads to a quantum phase transition in $ H^{D}$. Whether there is indeed a quantum phase transition in this model needs to be examined independently for different stabilizer codes.

\twocolumngrid

\bibliography{ref.bib}

\end{document}